\definecolor{brightcerulean}{rgb}{0.11, 0.67, 0.84}
\definecolor{cerulean}{rgb}{0.0, 0.48, 0.65}
\definecolor{Gray}{rgb}{0.5, 0.5, 0.5}
\newcommand{\N}{\mathbb{N}}                                              
\newcommand{\Z}{\mathbb{Z}}   
 \newcommand{\ZZ}{\mathbb{Z}}
\newcommand{\R}{\mathbb{R}}     
\newcommand{\RR}{\mathbb{R}}                                        
\newcommand{\C}{\mathbb{C}}
\newcommand{\T}{\mathbb{T}}
\renewcommand{\P}{\mathbb{P}}
\newcommand{\E}{\mathbb{E}}
\newcommand{\EE}{\mathbb{E}}
\newcommand{\A}{\mathcal{A}}
\newcommand{\bE}{\mathbf{E}}
\newcommand{\bV}{\mathbf{V}}
\newcommand{\bQ}{\mathbf{Q}}
\renewcommand{\S}{\mathcal{S}}
\newcommand{\cL}{\mathcal{L}}
\newcommand{\cE}{\mathcal{E}}
\newcommand{\cH}{\mathcal{H}}
\newcommand{\cD}{\mathcal{D}}
\newcommand{\cF}{\mathcal{F}}
\newcommand{\cC}{\mathcal{C}}
\renewcommand{\le}{\leq}
\renewcommand{\ge}{\geq}
\newcommand\bx{{\mathbf x}}
\newcommand\by{{\mathbf y}}
\newcommand\be{{\mathbf e}}
\newcommand{\benum}{\begin{enumerate}}
\newcommand{\eenum}{\end{enumerate}}
\newcommand{\bitem}{\begin{itemize}}
\newcommand{\eitem}{\end{itemize}}
\newcommand{\barray}{\begin{array}}
\newcommand{\earray}{\end{array}}
\newcommand{\vertiii}[1]{{\left\vert\kern-0.25ex\left\vert\kern-0.25ex\left\vert #1 
    \right\vert\kern-0.25ex\right\vert\kern-0.25ex\right\vert}}
\newtheorem{theo}{\textsc{Theorem}}[section]
\newtheorem{cor}[theo]{\textsc{Corollary}}
\newtheorem{lem}[theo]{\textsc{Lemma}}
\newtheorem{prop}[theo]{\textsc{Proposition}}
\newtheorem{de}{\textsc{Definition}}[section]
\theoremstyle{remark}
\newtheorem{rem}{\textsc{Remark}}[section]
\newcommand{\mc}[1]{{\mathcal #1}}
\newcommand{\mf}[1]{{\mathfrak #1}}
\newcommand{\bb}[1]{{\mathbb #1}}
\author{C\'edric Bernardin}
\address{Universit\'e C\^ote d'Azur, CNRS, LJAD\\
Parc Valrose\\
06108 NICE Cedex 02, France}
\email{{\tt cbernard@unice.fr}}
\author{Patr\'{\i}cia Gon\c{c}alves}
\address{\noindent Center for Mathematical Analysis,  Geometry and Dynamical Systems \\
Instituto Superior T\'ecnico, Universidade de Lisboa\\
Av. Rovisco Pais, no. 1, 1049-001 Lisboa, Portugal}
\email{patricia.goncalves@math.tecnico.ulisboa.pt}
\author{ Milton Jara}
\address{\noindent Instituto de Matem\'atica Pura e Aplicada\\ Estrada Dona Castorina 110\\ 22460-320 Rio De Janeiro, Brazil.}
\email{mjara@impa.br}
\author{Marielle Simon}
\address{\noindent Inria Lille -- Nord Europe \\ 40 avenue du Halley \\ 59650 Villeneuve d'Ascq, France\\ \textit{and} Laboratoire Paul Painlev\'e, UMR CNRS 8524 \\ Cit\'e Scientifique \\ 59655 Villeneuve d'Ascq, France}
\email{marielle.simon@inria.fr}
\begin{document}

\title[]{Nonlinear perturbation of a noisy Hamiltonian lattice field model: universality persistence}

\begin{abstract}
In \cite{BGJ} it has been proved that a linear Hamiltonian lattice field with two conservation laws, perturbed by a conservative stochastic noise, belongs to the $\frac32$-L\'evy/Diffusive universality class in the nonlinear fluctuating theory terminology \cite{S}, i.e.~energy superdiffuses like an asymmetric stable $\frac32$-L\'evy process and volume like a Brownian motion. According to this theory this should remain valid at zero tension if the harmonic potential is replaced by an even potential. In this work we consider a quartic anharmonicity and show that the result obtained in the harmonic case persists up to some small critical value of the anharmonicity.  
\end{abstract}

\maketitle

%\tableofcontents

\section{Introduction}

During the last two decades there has been a strong regain of interest  in the understanding of anomalous diffusion in asymmetric one dimensional systems with several conservation laws, whose typical examples are given by chains of coupled oscillators \cite{LLN}. During several years contradictory numerical simulations have been performed and their accuracy has been strongly debated without a clear consensus between specialists. Recently important progresses have been obtained with the development of the so-called \textit{nonlinear fluctuating hydrodynamics theory} developed by Spohn \cite{S}. The theory identifies precisely the universality classes describing the form of the anomalous diffusion in terms of macroscopic thermodynamical quantities associated to the microscopic system and also explains why so many numerics provided so different conclusions. Roughly Spohn's approach consists to start with the hyperbolic system of conservation laws governing the macroscopic evolution of the empirical conserved quantities, then add diffusion and dissipation to this system of coupled PDEs and linearize the system at second order w.r.t.~equilibrium values of the conserved quantities. In the calculations a fundamental role is played by the normal modes, i.e.~the eigenvectors of the linearized equation, called \textit{heat mode} and \textit{sound modes} in \cite{S}. These modes evolve with different velocities in different time scales and may be described by different forms of anomalous superdiffusion  or by a standard diffusion.  

On the other hand a rigorous justification of Spohn's predictions is lacking and some of them are in contradiction\footnote{The contradiction exists in particular because the kinetic predictions are done by using the kinetic equations outside the time scale where they are expected to be valid.} with kinetic theory \cite{LS,MM}. Until now the nonlinear fluctuating hydrodynamics predictions have been fully justified only for {\textit{linear}} Hamiltonian lattice field models perturbed by a noise conserving the energy and one or two extra quantities \cite{BGJ, JKO2}. The universality classes identified in these works are described by a skew or symmetric $3/4$-fractional diffusion equation for the heat mode (i.e.~the energy, with zero velocity) and a normal diffusion for the sound modes (with non-zero velocity; volume in \cite{BGJ}; stretch and momentum in \cite{JKO}). 

In the Hamiltonian lattice field model, if linear interactions are replaced by nonlinear interactions, some new universality classes, such as the famous Kardar-Parisi-Zhang (KPZ)  universality class, may appear. Proving any result confirming this picture is of course a highly challenging problem, even in the case of systems with a single conservation law. For the latter only two universality classes are possible: the KPZ universality class and the Edwards-Wilkinson (diffusive) class. Only few one dimensional stochastic asymmetric models (e.g.~the exclusion process) with one conserved quantity have been proved to belong to the KPZ universality class (see e.g.~\cite{C,QS} and references therein). For models with several conserved quantities the question is completely open and even more interesting. Indeed, a special feature of models with several conservation laws is the fact that different time scales coexist in the same model, which never occurs for systems with only one conserved field.

In this work we consider a small quartic nonlinear perturbation of the linear Hamiltonian lattice field model with conservative noise considered in \cite{BGJ}. In the absence of nonlinearities, as mentioned above, the model belongs to the universality class described by a skew $3/4$-fractional diffusion equation for the energy and a normal diffusion for the volume. According to Spohn's theory (see \cite{SS}) if the nonlinear perturbation is driven by an even potential and if the tension is null, the model still belongs to the same universality class. The purpose of this work is to show rigorously that it is the case for very small nonlinear perturbations. Despite our results remain quite limited, they are the first results of this type for nonlinear interactions.

The strategy of the proof follows the general scheme introduced in \cite{BGJ}. The success of this strategy, in the linear case, is due to the fact that the $n$-point correlation functions form a complicated but in any case, a closed system.  Dealing with nonlinear potentials the problem is much harder since this last property is lost and we have to manage the control of a hierarchy. The paper quantifies the intensity with which we can perturb the linear system in order to be able to cut the hierarchy as if we considered only the linear system. The control of the error terms produced by this cut-off requires several standard techniques of interacting particle systems as well as some ad-hoc estimates. Observe also that we are only considering the case of a perturbation given by an even potential with a zero tension. If one of these conditions is not respected we expect to reach a different universality class.  

The paper is organized as follows. In Section \ref{sec:model} we introduce the model we study and in Section \ref{sec:mr} we state our main results. Some technical material is introduced in Section \ref{sec:esti} while the proofs of the two main theorems are given in Section \ref{sec:volumeevolution} and \ref{sec:te-eff}. In Appendix \ref{app:prediction} we explore the nonlinear fluctuating hydrodynamics predictions. The other three appendices contain technical computations. \\

{\bf{Notations:}} Given two real-valued functions $f$ and $g$ depending on the variable $u \in \bb R^d$ we will write $f(u) \approx g(u)$ if there exists a constant $C>0$ which does not depend on $u$ such that for any $u$, $C^{-1} f(u) \le g(u) \le C f(u)$ and $ f(u) \lesssim g(u)$ if for any $u$, $f(u) \le C g(u)$. We write $f =\mc O (g)$ (resp. $f=o(g)$) in the neighborhood of $u_0$ if $| f| \lesssim | g|$ in the neighborhood of $u_0$ (resp. $\lim_{u \to u_0} f(u)/g(u) =0$). Sometimes it will be convenient to precise the dependence of the constant $C$ on some extra parameters and this will be done by the standard notation $C(\lambda)$ if $\lambda$ is the extra parameter. Finally, we denote by $\cC_c^\infty(\R^d)$ the space of infinitely differentiable functions $f : \R^d \to \R$ with compact support.

\section{Model and notations}
\label{sec:model}

\subsection{Perturbed Hamiltonian lattice field model}

%We consider an infinite chain of weakly anharmonic oscillators at equilibrium perturbed by an energy conserving noise. We follow the representation of \cite{BS}, where a simplified model has been introduced: instead of following the evolution of the positions $q_x$ and velocities $p_x$ of the oscillators, we use the new variable $\omega:=\{\omega_x\}_{x\in\bb Z}$ where $(\omega_{2x-1},\omega_{2x}):=(q_x-q_{x-1},p_x)$ for any $x \in \bb Z$.  The space of configurations is thus given by $\Omega=\R^\Z$.  

We consider a linear Hamiltonian lattice field model \cite{BS} at equilibrium perturbed by an energy conserving noise. This is a Markov process defined on the state space $\Omega=\RR^{\ZZ}$. A typical configuration $\omega \in \Omega$ is denoted by $\omega:=\{\omega_x\,; \, x \in \ZZ\}$. We then perturb it by adding a small anharmonicity which is  regulated by the small parameter $\gamma >0$. Let us define the infinitesimal generator $\cL_{\gamma}$ of the model as $\cL_{\gamma}:=\A_{\gamma}+\S$ where for any $\gamma >0$ we denote 

\[
\A_\gamma:=\sum_{x\in\Z}\Big\{ (\omega_{x+1}-\omega_{x-1}) + \gamma (\omega_{x+1}^3-\omega_{x-1}^3)\Big\} \frac{\partial}{\partial\omega_x}
\]
and for all local\footnote{A function $g:\Omega \to \RR$ is \textit{local} if it depends on the variable $\omega \in \Omega$ only through a finite number of $\{\omega_x \; ; \; x \in \Z\}$.} functions $\varphi: \Omega \to \R$
\[
(\S \varphi)(\omega) :=\sum_{x \in \Z} \Big\{\varphi(\omega^{x,x+1})-\varphi(\omega)\Big\}.
\]
 Above we denote by $\omega^{x,x+1}$ the configuration that is obtained from $\omega$ by exchanging $\omega_x$ and $\omega_{x+1}$, keeping the other values identical, namely: 
 \[(\omega^{x,x+1})_z = \begin{cases} \omega_{x+1} & \text{ if } z=x; \\ \omega_x & \text{ if } z=x+1; \\ \omega_z & \text{ otherwise.} \end{cases} 
 \]
 The Liouville operator $\mc A_\gamma$ is the usual generator associated to the Hamiltonian dynamics of an infinite number of coupled oscillators, where the one-site energy is the sum of the kinetic energy and the potential energy, as follows: for any $u \in \R$ and  $\gamma >0$ let us introduce the local energy
\[
e_\gamma (u):= \frac{u^2}{2}+\gamma\; \frac{u^4}{4}.
\] 
For each $x \in \Z$, the energy of the atom $x$ is simply $e_\gamma(\omega_x)$. When $\gamma=0$, the energy is \textit{harmonic}, whereas $\gamma \to 0$ is the \textit{weakly anharmonic} case. The operator $\mc S$ is the stochastic noise that acts on configurations by exchanging nearest neighbour variables $\omega_x$ and $\omega_{x+1}$ at random Poissonian  times. 

The existence of a Markov process $\{\omega^\gamma (t) \; ; \; t \geq 0\}$ with state space $\Omega$ and generator $\mc L_\gamma$ is provided by usual techniques (see \cite{BS} and references therein). It has a family of invariant measures, called \textit{Gibbs equilibrium measures}, given by 
\[
d\nu_{\beta,\tau,\gamma}(\omega):= \prod_{x \in \Z}\frac{e^{-\beta e_{\gamma}(\omega_x) - \lambda \omega_x}}{Z_\gamma(\beta,\lambda)} d\omega_x,\qquad \lambda=\tau \beta  \in \R, \beta >0,
\]
that are associated to the two conserved quantities, called \textit{volume} and \textit{energy}, formally given by
\[
\mathcal{V}:=\sum_{x\in \Z} \omega_x, \qquad \mathcal{E}_\gamma:=\sum_{x \in \Z} e_{\gamma} (\omega_x).
\]
Above, $Z_\gamma(\beta,\lambda)$ is the normalization constant. The parameters $\beta^{-1} >0$ and $\tau \in \R$ are called, respectively, \textit{temperature} and \textit{tension}. 

For any $(\beta,\tau,\gamma)\in (0,\infty)\times \R\times (0,\infty)$, let us denote by $\langle \varphi \rangle_{\beta,\tau,\gamma}$ the average of $\varphi: \Omega\to \bb R$ with respect to $\nu_{\beta,\tau,\gamma}$ and by $\langle \cdot, \cdot \rangle_{\beta,\tau,\gamma}$ the corresponding ${\bb L}^2$-scalar product. Let us define 
\begin{align}
\mf e_\gamma (\beta,\tau)&:=\big\langle e_{\gamma} (\omega_0) \big\rangle_{\beta,\tau,\gamma}=-\frac{\partial}{\partial\beta}\big(\log Z_\gamma (\beta,\lambda)\big), \label{eq:energy}\\
\mf v_\gamma (\beta,\tau)&:= \big\langle \omega_0 \big\rangle_{\beta,\tau,\gamma}=-\frac{\partial}{\partial\lambda}\big(\log Z_\gamma (\beta,\lambda)\big). \label{eq:volume}
\end{align}
From here on, we consider the dynamics described by the accelerated generator $n^a \mc L_{\gamma_n}$ (therefore the system evolves on the time scale $tn^a$ for some $a>0$), where $\gamma_n$ now depends also on the scaling parameter in such a way that $\lim_{n\to \infty} \gamma_n =0$.  The dependence of $\gamma_n$ with respect to the scaling parameter $n$ will be precised later. We assume that the dynamics starts from the Gibbs equilibrium  measure $\nu_{\beta,0,\gamma_n}$ at  temperature $\beta^{-1}$ and tension $\tau=0$, and we look at its evolution during a time interval $[0,T]$, where $T>0$ is fixed. The law of the resulted process \[\Big\{\omega^{\gamma_n}_x(tn^a)\; ; \; t \in [0,T]\,,\, x\in\bb Z\Big\}\] is simply denoted by $\P$, and the expectation with respect to $\P$ is denoted by  $\E$. For the sake of readability, from now on we denote $\omega_x^{\gamma_n}(tn^a)$ simply by $\omega_x(tn^a)$.

\subsection{Energy and volume fluctuation fields}

 We define, for any test function $f\in\cC_c^\infty(\R)$ and $\omega \in \Omega$,
\begin{equation*}
\cE^n(f, \omega):=\frac{1}{\sqrt n}\sum_{x\in\Z}f\Big(\frac{x}{n}\Big)\Big\{e_{\gamma_n} (\omega_x)-\mf e_{\gamma_n} (\beta,0)\Big\},
\end{equation*}
and the dynamical \textit{energy fluctuation field} by
\[
\cE_t^n(f):=\cE^n (f,\omega (tn^a)).
\]
Similarly,  we define, for any test function $f\in\cC_c^\infty(\R)$ and $\omega \in \Omega$
\[
\mc V^n(f,\omega):=\frac{1}{\sqrt n}\sum_{x\in\Z}f\Big(\frac{x}{n}\Big)\Big\{\omega_x-\mf v_{\gamma_n}(\beta,0)\Big\},
\]
and the dynamical  \textit{volume fluctuation field} by
\[ \mc V_t^n(f):=\mc V^n(f,\omega(tn^a)).\]
 Let $g \in \cC_c^\infty(\R)$ be a fixed function. The goal of this paper is to study the behavior as $n \to \infty$ of the \textit{correlation energy and volume fields} given for any test function $f\in\cC_c^\infty(\R)$ by
\begin{align*}
\bE_t^n(f):&=\E\big[ \cE_0^n(g) \; \cE_t^ n(f)\big]\\
&=\E\bigg[ \cE_0^n(g)  \; \bigg\{\frac{1}{4\sqrt n}\sum_{x\in\Z}f\Big(\frac{x}{n}\Big) \Big(\big[2\omega_x^2+\gamma_n\omega_x^4\big](tn^a)-4\mf e_{\gamma_n}(\beta,0)\Big)\bigg\}\bigg],\\
\bV_t^n(f):&=\E\big[ \mc V_0^n(g) \; \mc V_t^ n(f)\big]\\
&=\E\bigg[\mc V_0^n(g)\bigg\{\frac{1}{\sqrt n}\sum_{x\in\Z}f\Big(\frac{x}{n}\Big)\Big(\omega_x(tn^a)-\mf v_{\gamma_n}(\beta,0)\Big)\bigg\}\bigg].
\end{align*}

We show in Appendix \ref{app:prediction} that we have the following expansions  when $\gamma_n \to 0$:
\begin{align}
{\mf v}_{\gamma_n}(\beta,0) & = o(\gamma_n) \label{eq:expv} \\
{\mf e}_{\gamma_n}(\beta,0) & = \frac{1}{2\beta} - \frac{3\gamma_n}{4\beta^2} + o(\gamma_n). \label{eq:expe}
\end{align}

\subsection{Notations and definitions}

For any $f \in \cC_c^\infty(\R)$ and $h\in\cC_c^\infty(\R^2)$ we introduce two $\ell^2$-norms defined as follows:  
\begin{equation}
\|f\|^2_{2,n}:=\tfrac{1}{n}\sum_{x\in\bb Z} f^2\big(\tfrac{x}{n}\big), \qquad 
\big(N_n^{\neq}(h)\big)^2:=\tfrac{1}{n^2} \sum_{x \neq y} h^2\big(\tfrac{x}{n},\tfrac{y}{n}\big). \label{eq:normh}
%N_n^=(h)&:=\bigg(\frac{1}{n} \sum_{x\in\Z} h^2\Big(\frac{x}{n},\frac{x}{n}\Big) \bigg)^{\frac12}.
\end{equation}
The discrete gradient and discrete Laplacian of $f$ are defined as usual by
\begin{equation*}
(\nabla_n f) \big(\tfrac{x}{n}\big):=n \big[ f \big(\tfrac{x+1}{n}\big) -f  \big(\tfrac{x}{n}\big) \big], \quad \;  (\Delta_n f) \big(\tfrac{x}{n}\big):=n^2 \big[ f \big(\tfrac{x+1}{n}\big) +f  \big(\tfrac{x-1}{n}\big) -2  f\big(\tfrac{x}{n}\big)\big]
\end{equation*} 
where $x\in\bb Z$ and $n\ge 1$ is the inverse of the mesh of the discretization\footnote{The reader will notice that the previous definitions depend in fact only on the values of the functions $f$ and $h$ respectively on $\frac1n \Z$ and $\frac1n \Z \times \frac1n \Z$ so that they can be generalized to functions defined only on these sets.}. 
For our purpose, we also need to define three Fourier transforms:
\begin{itemize}
\item \textit{Fourier transform of integrable functions -- } If $f: \R \to \R$ is an integrable function, we define its Fourier transform $\mathcal{F} (f):\R\to\C$ as
\begin{equation}
\label{fou tranf cont}
 \mathcal{F}(f) (\xi):=\int_\R f(u) e^{2i\pi\xi u} \, d u, \qquad \xi \in \R.\end{equation}

\item \textit{Fourier transform of square summable sequences -- } If $h:\Z\to\R$ is square summable, we define its Fourier transform $\widehat{h}:\T\to\C$ in $\mathbb{L}^2(\T)$ as
\begin{equation}
\label{fou tranf}
 \widehat{h}(\theta):=\sum_{x\in\Z} h(x) e^{2i\pi \theta x}, \qquad \theta \in \T.\end{equation}

\item\label{FF3} \textit{Discrete Fourier transform of integrable functions -- } If $g: \R \to \R$ is an integrable function, we define its discrete Fourier transform $\cF_n(g):\R\to\C$ as
\begin{equation*}
\cF_n(g) (\xi) = \tfrac{1}{n} \sum_{x \in \Z} g \big(\tfrac{x}{n}\big) e^{2i \pi x \frac{\xi}{n}}, \qquad \xi\in \R.
\end{equation*}
\end{itemize}
These definitions can easily be extended for $d$-dimensional spaces, $d\geq 1$. 
Finally, given some parameters $(\beta,\tau,\gamma)$ and  $\varphi \in \bb L^2 (\nu_{\beta,\tau,\gamma})$ belonging to the domain of $\S$, the  \textit{Dirichlet form} of $\varphi$ is given by 
\begin{equation} \label{eq:diric}
\mc D(\varphi):=\big\langle\varphi\; , \;  (-\S) \varphi\big\rangle_{\beta,\tau,\gamma}.
\end{equation}
Observe that we do not precise the dependence on $(\beta,\tau,\gamma)$ in \eqref{eq:diric}. Any time we will use the Dirichlet form, there will be no confusion regarding the values of the parameters. Similarly, for any $\varphi \in \bb L^2 (\nu_{\beta,\tau,\gamma})$ we introduce, for any $z>0$, the $\mc H_{-1,z}$ norm given by 
\begin{align}
\big\|\varphi\big\|_{-1,z}^2:&=\Big\langle\varphi\; ,\;  \big(z-\S\big)^{-1}\varphi\Big\rangle_{\beta,\tau,\gamma}\notag\\ &= \sup_{g} \Big\{2\;\big\langle \varphi\;,\;  g\big\rangle_{\beta,\tau,\gamma}-z\big\langle g^2\big\rangle_{\beta,\tau,\gamma}- \big\langle g\;,\; (-\S)g\big\rangle_{\beta,\tau,\gamma}\Big\}
\label{eq:norm-1}
\end{align}
where $g: \Omega \to \RR$ belongs to the set of local  bounded functions.

\section{Statement of the main results}
\label{sec:mr}

Let us assume that 
\[ \gamma_n = \frac{c}{n^b} \qquad \text{for some } c,b>0,\]
and recall that the time scale is $tn^a$, with $a>0$. Our main convergence theorems depend on the range of the parameters $(a,b)$. Recall that $\tau=0$. In the nonlinear fluctuating theory framework this choice implies in particular the identification of the sound mode with the volume, and the heat mode with the energy. 

\subsection{Macroscopic fluctuations}

\begin{theo}[Volume fluctuations in the time scale $tn^a$ with  $a \le 1$]\label{theo:volume}

\quad 

Let us fix $f,g \in \cC_c^\infty(\R)$, and $t>0$. The macroscopic volume behavior follows the dichotomy:
\begin{enumerate}[1.]
\item If $a<1$, then   
\[ \lim_{n\to \infty} {\bf V}_t^n(f) = \lim_{n\to\infty} {\bf V}_0^n(f)=\frac{1}{\beta}\iint_{\bb R^2} f(u)g(v)   \; du dv.\]
\item If $a=1$, then 
\[ \lim_{n\to \infty} {\bf V}_t^n(f) = \frac{1}{\beta}\iint_{\bb R^2} f(u)g(v) \hat P_t(u-v)  \; du dv,
\] where $\{\hat P_t \, ; \, t\ge 0\}$ is the semi-group generated by the transport operator $-2\nabla$. 
\end{enumerate}
\end{theo}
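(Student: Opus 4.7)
The plan is to apply Dynkin's formula to the observable $\mc V^n(f,\cdot)$ under the accelerated generator $n^a\mc L_{\gamma_n}$, and then to identify the dominant contribution as $n\to\infty$. Writing
\[
\mc V^n_t(f)-\mc V^n_0(f)=\int_0^t n^a\big(\mc L_{\gamma_n}\mc V^n(f,\cdot)\big)(\omega(sn^a))\,ds+M_t^n,
\]
multiplying by $\mc V^n_0(g)$ (which is measurable with respect to the initial $\sigma$-algebra, and therefore orthogonal to the martingale $M_t^n$), and taking expectation, one obtains
\[
\bV_t^n(f)=\bV_0^n(f)+\int_0^t n^a\,\E\!\left[\mc V_0^n(g)\,(\mc L_{\gamma_n}\mc V^n(f,\cdot))(\omega(sn^a))\right]ds.
\]
A direct computation gives $\mc L_{\gamma_n}\omega_x=(\omega_{x+1}-\omega_{x-1})+\gamma_n(\omega_{x+1}^3-\omega_{x-1}^3)+(\omega_{x+1}+\omega_{x-1}-2\omega_x)$, so a discrete summation by parts on $f$ rewrites $\mc L_{\gamma_n}\mc V^n(f,\cdot)$ as the sum of a linear drift $-\tfrac{2}{n}\mc V^n(\nabla_n^{\mathrm s}f,\omega)$, a cubic drift $-\tfrac{2\gamma_n}{n^{3/2}}\sum_y(\nabla_n^{\mathrm s}f)(y/n)\omega_y^3$, and a diffusive piece $\tfrac{1}{n^2}\mc V^n(\Delta_n f,\omega)$, where $\nabla_n^{\mathrm s}f$ is the symmetric discrete derivative approximating $f'$.

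After multiplication by $n^a$, using Cauchy--Schwarz and the stationarity of the process under $\nu_{\beta,0,\gamma_n}$, the diffusive piece carries the prefactor $n^{a-2}$ and is negligible since $a\le 1$. The cubic piece is the key term: because the $\omega_y$'s are i.i.d.\ under $\nu_{\beta,0,\gamma_n}$ with $\langle\omega_0^3\rangle=0$, a direct second-moment computation yields
\[
\Big\|\tfrac{2\gamma_n n^a}{n^{3/2}}\sum_y(\nabla_n^{\mathrm s}f)(y/n)\,\omega_y^3\Big\|_{L^2(\nu)}\;\lesssim\;\gamma_n\,n^{a-1}\,\|f'\|_2\,\sqrt{\langle\omega_0^6\rangle_{\beta,0,\gamma_n}},
\]
which tends to zero for any $b>0$ when $a\le 1$, since $\gamma_n=cn^{-b}$ and the sixth moment stays uniformly bounded as $\gamma_n\to 0$. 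Combined with $\|\mc V_0^n(g)\|_{L^2(\nu)}\lesssim\|g\|_2\,\sqrt{\langle\omega_0^2\rangle_{\beta,0,\gamma_n}}$, these two contributions integrate to $o(1)$ on $[0,t]$. The surviving linear drift produces $-2n^{a-1}\bV_s^n(\nabla_n^{\mathrm s}f)$ inside the integral, which vanishes for $a<1$ and converges to $-2\bV_s(f')$ for $a=1$.

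For the initial datum, the product structure of $\nu_{\beta,0,\gamma_n}$ gives $\bV_0^n(f)=\tfrac{1}{n}\sum_x f(x/n)g(x/n)\,\langle\omega_0^2\rangle_{\beta,0,\gamma_n}$, and an expansion analogous to \eqref{eq:expv}--\eqref{eq:expe} shows $\langle\omega_0^2\rangle_{\beta,0,\gamma_n}\to\tfrac{1}{\beta}$, so that $\bV_0^n(f)\to\tfrac{1}{\beta}\int f(u)g(u)\,du$, which is the stated formula at $t=0$ (with $\hat P_0=\mathrm{Id}$). For $a<1$ this is the full limit. For $a=1$, the uniform $L^2$ bounds yield equicontinuity in $t$, so any subsequential limit $\bV_t$ satisfies the closed linear equation $\bV_t(f)=\bV_0(f)-2\int_0^t \bV_s(f')\,ds$, which has the unique solution $\tfrac{1}{\beta}\iint f(u)g(v)\hat P_t(u-v)\,du\,dv$.

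The main technical obstacle will be the cubic drift: without first carrying out the summation by parts on $f$, the naive $L^2$ estimate loses a factor of order $n$ and is insufficient for $a=1$. The gain comes from transferring the discrete derivative onto the smooth test function, and crucially uses $\langle\omega_0^3\rangle_{\beta,0,\gamma_n}=0$ together with uniform sixth-moment bounds. This is the simplest instance of the hierarchy-truncation mechanism developed in the later sections.
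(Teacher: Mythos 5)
Your proof is correct and follows essentially the same strategy as the paper's: you derive the time-evolution equation for $\bV_t^n$ via Dynkin's formula and martingale orthogonality (which is exactly how Proposition \ref{prop:vol_decomposition} is obtained in Appendix~\ref{app:equadiff}), and then control the diffusive and cubic pieces by the same Cauchy--Schwarz/stationarity bounds \eqref{eq:CS-V}--\eqref{eq:CS-V3}, so that only the transport term survives at $a=1$. The only cosmetic difference is your use of the symmetric discrete derivative where the paper writes the equivalent combination $2n^{a-2}\Delta_n f - 2n^{a-1}\nabla_n f$.
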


In order to study the fluctuations in the time scale $a>1$, we need first to recenter the fluctuation field in a frame moving with some specific velocity. Let us denote  $\chi_n:= \langle \omega_0^2 \rangle_{\beta,0,\gamma_n}$ which satisfies $\chi_n \to \beta^{-1}$ as $n\to\infty$ (see Appendix \ref{app:prediction}). We define \[{\bf c}_n:=-2-6\chi_n\gamma_n\] which  is essentially the \textit{sound mode velocity} $c(\beta, 0, \gamma_n)$ (defined in Appendix \ref{app:prediction}) at first order in $\gamma_n$. We now introduce the new volume fluctuation field 
$\widetilde{\bV}_t^n(f)$, which is defined on a moving reference frame as follows:
\begin{equation}
\widetilde{\bV}_t^n(f):=\E\bigg[\mc V_0^n(g)\; \bigg\{\frac{1}{\sqrt n}\sum_{x\in\Z}f\Big(\frac{x -{\bf c}_n tn^a}{n}\Big)\big(\omega_x(tn^a)-\mf v_n(\beta,0)\big)\bigg\}\bigg]. \label{eq:vtilde}
  \end{equation}

\begin{theo}[Volume fluctuations in the time scale $tn^a$ with $a>1$] \label{theo:volume2}

\quad

 Let us fix $f,g \in \cC_c^\infty(\R)$, and $t>0$. 
Let \[a^* ( b)= \begin{cases} 1+ 2b\; , & b\in [0,\frac12], \\ 2\; , & b \ge \frac12.\end{cases}\] For any $b> 0$ two cases hold:
\begin{enumerate}[1.]
\item If $a < a^*(b)$, then
\[ \lim_{n\to \infty} \widetilde{\bf V}_t^n(f) = \lim_{n\to\infty} \widetilde{\bf V}_0^n(f)=\frac{1}{\beta}\iint_{\bb R^2} f(u)g(v)   \; du dv.\]
\item If $b > \frac12$ and $a=a^*(b)=2$ then
 \[ \lim_{n\to \infty} \widetilde{\bf V}_t^n(f) = \frac{1}{\beta}\iint_{\bb R^2} f(u)g(v) \tilde P_t(u-v)  \; du dv,
\] where $\{\tilde P_t\, ; \,t\ge 0\}$ is the semi-group generated by the Laplacian operator $\Delta$. 
\end{enumerate}
\end{theo}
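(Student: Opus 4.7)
The plan is to derive a Dynkin-type evolution equation for $\widetilde{\bV}_t^n(f)$, use the explicit choice of ${\bf c}_n$ to cancel the leading (transport) contribution, and then control the residual cubic non-linearity by a Boltzmann--Gibbs argument. Set $\phi_t^n(u):=f(u - {\bf c}_n t n^{a-1})$, so that $\widetilde{\bV}_t^n(f) = \E[\mc V_0^n(g)\,\mc V^n(\phi_t^n,\omega(tn^a))]$, and apply Dynkin's formula for time-dependent test functions to obtain
\[
\tfrac{d}{dt}\widetilde{\bV}_t^n(f) = \E\Big[\mc V_0^n(g)\Big(\mc V^n(\partial_t\phi_t^n,\omega(tn^a)) + \tfrac{n^a}{\sqrt n}\sum_{x\in\Z}\phi_t^n(\tfrac{x}{n})\,\cL_{\gamma_n}\omega_x(tn^a)\Big)\Big].
\]
Direct computation of $\cL_{\gamma_n}\omega_x = (\omega_{x+1}-\omega_{x-1}) + \gamma_n(\omega_{x+1}^3-\omega_{x-1}^3) + (\omega_{x+1}+\omega_{x-1}-2\omega_x)$, combined with discrete summation by parts, splits the right-hand side into a transport piece of order $n^{a-1}$ from the Hamiltonian generator $\A_{\gamma_n}$, a diffusive piece of order $n^{a-2}$ from the symmetric exchange noise $\S$, and the frame-adjustment contribution $\mc V^n(\partial_t\phi_t^n,\omega(tn^a))$, which is also of order $n^{a-1}$.

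The algebraic heart of the argument is the cancellation of the $n^{a-1}$ transport contributions. Splitting the cubic term via the Boltzmann--Gibbs identity $\omega_x^3 = 3\chi_n\omega_x + (\omega_x^3 - 3\chi_n\omega_x)$, the linear-projection part contributes the coefficient $-2(1+3\gamma_n\chi_n)n^{a-1}$ to the transport of $\mc V^n(\partial_u\phi_t^n,\cdot)$, while $\partial_t\phi_t^n = -{\bf c}_n n^{a-1}\partial_u\phi_t^n$ contributes $-{\bf c}_n n^{a-1}$, so that the total coefficient $-2 - 6\chi_n\gamma_n - {\bf c}_n$ vanishes by the very definition ${\bf c}_n = -2 - 6\chi_n\gamma_n$. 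In the remaining diffusive piece, $n^{a-2}\Delta_n\phi_t^n$ either tends to $0$ (when $a<2$) or stabilises to $f''$ in the shifted variable (when $a=2$). Modulo the control of the cubic residual addressed below, we therefore obtain in the regime $a=2,\,b>1/2$ the limiting heat equation $\tfrac{d}{dt}\widetilde{\bV}_t^\infty(f) = \widetilde{\bV}_t^\infty(f'')$, whose solution admits the semigroup representation in terms of $\tilde P_t$ claimed in case 2. The initial-data identification uses $\chi_n\to\beta^{-1}$ (from Appendix \ref{app:prediction}), the fact that $\mf v_{\gamma_n}(\beta,0)=0$ for the even potential at zero tension, and the independence of the coordinates under $\nu_{\beta,0,\gamma_n}$, which reduces $\widetilde{\bV}_0^n(f)$ to $\chi_n n^{-1}\sum_x f(x/n)g(x/n)$ and yields the stated limit. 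In the regime $a<a^*(b)$ the same cubic control forces every surviving contribution to vanish, so that $\widetilde{\bV}_t^n(f) - \widetilde{\bV}_0^n(f)\to 0$ and case 1 follows.

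The technical crux, and what dictates the critical curve $a^*(b)$, is the control of the cubic residual
\[
\int_0^t \tfrac{2\gamma_n\,n^{a-1}}{\sqrt n}\sum_{x\in\Z}(\nabla_n\phi_s^n)(\tfrac{x}{n})\big[\omega_x^3(sn^a) - 3\chi_n\omega_x(sn^a)\big]\,ds,
\]
multiplied by $\mc V_0^n(g)$ and taken in expectation. The function $\omega_0^3 - 3\chi_n\omega_0$ is odd under $\omega_0\mapsto -\omega_0$ and hence has zero single-site mean under $\nu_{\beta,0,\gamma_n}$, but its $\bb L^2(\nu_{\beta,0,\gamma_n})$-projection onto the conserved volume $\omega_0$ equals $\langle\omega_0^4\rangle_{\nu_{\beta,0,\gamma_n}} - 3\chi_n^2 = O(\gamma_n)$, which is non-zero because of the non-Gaussianity of the perturbed Gibbs measure. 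Applying the Kipnis--Varadhan variance bound expressed via the $\mc H_{-1,z}$-norm of \eqref{eq:norm-1}, after a further Boltzmann--Gibbs step that absorbs the $O(\gamma_n)$ projection residual into the already-extracted transport coefficient, yields a variance of schematic form $Ct\,\gamma_n^2 n^{2(a-1)}$ times the inverse-spectral-gap gain obtained by inverting $-\S$ on the support of $\nabla_n\phi_s^n$. Tracking exponents produces exactly the critical curve $a^*(b)$: the error is $o(1)$ for $a<a^*(b)$, and also vanishes at $a=a^*(b)=2$ when $b>1/2$. The principal obstacle is that, unlike the harmonic setting of \cite{BGJ} where Wick's formula closes the hierarchy of correlation functions exactly, the non-Gaussianity of $\nu_{\beta,0,\gamma_n}$ forces the Boltzmann--Gibbs replacement $\omega_x^3 \approx 3\chi_n\omega_x$ to be controlled in $\bb L^2(\nu_{\beta,0,\gamma_n})$ uniformly in $\gamma_n\to 0$, and the coupling to the 4-point and higher correlations produced by $\A_{\gamma_n}$ must be shown to contribute only perturbatively throughout the range of $(a,b)$ considered, which is precisely the quantitative hierarchy cut-off alluded to in the introduction.
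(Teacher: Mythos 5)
Your high-level structure is the right one: pass to a moving frame, arrange for the transport contribution to cancel with the choice of $\mathbf{c}_n=-2-6\chi_n\gamma_n$, and control what is left from the cubic term. But the proof fails at exactly the point you treat most schematically, and the failure is quantitative, not cosmetic.

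The decomposition you propose, $\omega_x^3 = 3\chi_n\omega_x + (\omega_x^3 - 3\chi_n\omega_x)$, leaves a single-site degree-three \emph{pillar} $H_3(\omega_x)$ (up to a harmless $O(\gamma_n)$ correction). The exchange noise $\mathcal S$ moves such a pillar like a simple random walk on $\Z$, so the $\mc H_{-1,z}$ norm of $\sum_x F(x) H_3(\omega_x)$ with $F(x)=f'(x/n)$ and $z=n^{-a}$ is of order $n^{1+a}$; it carries no spectral-gap gain over the raw $\bb L^2$ bound. Plugging this into \eqref{eq:estim} with the prefactor $n^{a-3/2}\gamma_n$ gives a variance of order $n^{2a-2}\gamma_n^2 = n^{2a-2-2b}$, which vanishes only for $a<1+b$. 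That is strictly below the claimed threshold $a^*(b)=1+2b$ for every $b\in(0,\tfrac12]$, so your estimate cannot produce the stated critical curve, contrary to the assertion that ``tracking exponents produces exactly $a^*(b)$.'' The same shortfall undermines case~1 on the whole range $1+b\le a < a^*(b)$.

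What the paper does to get past this is the structured identity \eqref{eq:omega3-0}--\eqref{eq:omega3-3}: it writes $\omega_x^3$ as $(-\mc L_{\gamma_n})(\omega_x^2\omega_{x+1})$, plus two-site terms of the form $(\omega_x^2-\chi_n)\omega_{x+p}$, plus the three-site product $\omega_{x-1}\omega_x\omega_{x+1}$, plus the degree-one piece $\chi_n(3\omega_x+2\omega_{x+2}-2\omega_{x+1})$ (which supplies the $3\chi_n$ cancellation), plus a $\gamma_n$-correction. This redistributes the cubic degree across several lattice sites, which is what makes the exchange noise relax it at the improved rate: the two-site pieces feed into the two-variable Fourier bound of Lemma~\ref{lem:two} (giving $a<2b+1$ via a $\log n$ integral), and the three-site product requires the spatial-averaging and one-block argument of Lemma~\ref{lem:marielle234}--\ref{one-block} with intermediate boxes $\ell=\varepsilon\sqrt n$, $L=\varepsilon n^{3/4}$ (giving $a<a^*(b)$). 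The $(-\mc L)$ fluctuation piece is handled separately by Dynkin's formula, not by $\mc H_{-1,z}$. Without this decomposition — or an equivalent device that spreads the degree — the scheme you outline does not close at the stated exponent.

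A secondary issue: the $\bb L^2(\nu_{\beta,0,\gamma_n})$ projection coefficient of $\omega_0^3$ onto $\omega_0$ is $\kappa_n=\langle\omega_0^4\rangle/\chi_n$, not $3\chi_n$; the difference is $O(\gamma_n)$ and you acknowledge it, but it is precisely this bookkeeping that the algebraic identity above makes exact rather than approximate.
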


These two theorems establish the following picture which is also summarized in Figure  \ref{fig:summer1}:

\begin{itemize} 
\item In the time scale $tn^a$, $a<1$, the volume field does not evolve.

 \smallskip
 
\item In the hyperbolic time scale $tn$ ($a=1$), the initial fluctuations of the volume field are transported with velocity $-2$.

\smallskip

\item  We then define a new volume field in a frame moving at velocity ${\mathbf c}_n=-2 -6\chi_n \gamma_n$ which takes into account the first order term in $\gamma_n$ of the sound velocity. The new field does not evolve up to time scale $tn^{2}$ for $b>\frac12$ and up to the time scale $tn^{2b+1}$ for $0\le b \le \frac12$. 
\smallskip

\item For $b>\frac12$, in the diffusive time scale, the evolution is driven by a heat equation.
\end{itemize}

\begin{figure}[H]e
\begin{tikzpicture}[scale=0.2]
\draw[->,>=latex] (0,0) -- (23,0);
\draw[->,>=latex] (0,0) -- (0,24);
\draw (0,23) node[left]{$a$};
\draw (22,0) node[below]{$b$};
\draw (0,0) node[below]{$0$};
\draw (0,0) node[left]{$0$};
\draw (5,0) node[below]{$\frac12$};
%\draw (5,0) node[below]{$1/2$};
%\draw (0,13.333) node[left]{$4/3$};
%\draw (0,15) node[left]{$3/2$};
\draw (0,20) node[left]{$2$};
%\draw (0,15) node[left]{$\frac32$};
\draw (0,10) node[left]{$1$};
\draw[-,=latex,red,line width=2] (5,20) -- (20, 20) node[midway,above,sloped] {\bf{heat eq.}};
%\draw[-,=latex, dashed] (13.33,13.33) -- (15,15);
%\draw[-,=latex,red,ultra thick] (10,15) -- (20,15) node[midway,above,sloped] {{\tiny{fract. heat eq.}}};
%\fill[light-gray] (0,0) -- (0,20) --(5,17.5)--(5,13.33) --(10,13.33)-- (10,15) -- (20,15) -- (20,0) -- cycle;
\fill[Gray] (0,0) -- (0,10) -- (5,20) -- (20,20) -- (20,0) -- cycle;
%\fill[brightcerulean] (10,15) circle (0.8cm);
\draw (12,6.67) node{{\textbf{{{No evolution}}}}};
%\draw[-,=latex, dashed] (0,13.33) -- (20,13.33);
\draw[-,=latex, dashed] (0,20) -- (5,20);
%\draw[-,=latex, dashed] (5,17.5) -- (5,0);
\draw[-,=latex, dashed] (5,0) -- (5,20);
%\draw[-,=latex, dashed] (10,0) -- (10,20);
\draw[-,=latex, dashed] (0,20) -- (5,20) node[midway,above,sloped] {\bf{???}};
\fill[Gray!60] (0,10) -- (0,20) -- (5,20) -- cycle;
%\draw (7.5,15.2) node{{\textbf{?}}};
\end{tikzpicture}
\caption{Volume fluctuations: value of the time scale exponent $a$ as a function of the anharmonicity exponent $b$. \label{fig:summer1}}
\end{figure}
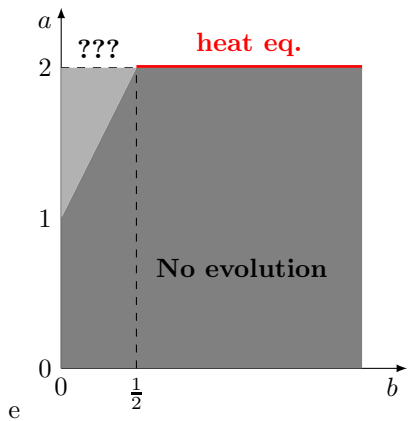

\begin{rem} 
\label{rem:vf}
For $0 \le b \le \frac12$ we conjecture in fact that the evolution is trivial up to the time scale $tn^{2}$ (a proof that there is no evolution in the light gray zone is thus missing). Our conjecture is supported by the following consideration : for $b=0$, i.e.~$\gamma_n$ of order one, according to Spohn's nonlinear fluctuating hydrodynamics theory \cite{S,SS} and the computations of Appendix \ref{app:prediction}, the fluctuations of the volume\footnote{Recall that since $\tau=0$ the sound mode coincides with the volume.} field should still belong to the diffusive universality class. Therefore, at $b=0$, the time scale for which the sound evolution takes place should be $a=2$. Assuming that the exponent $a:=a^*(b)$ of the time scale on which evolution of the sound mode occurs is continuous and linear in $b\in[0,\frac12]$, we would get that $a^*(b)=2$ for $b\in [0,\frac12]$. 
\end{rem}

\begin{theo}[Energy fluctuations]\label{theo:energy}

\quad

 Let us fix $f,g \in \cC_c^\infty(\R)$, and $t>0$. We have the following two cases:
\begin{enumerate}[1.]
\item If  $a< \frac32$ and $b> \frac14$, then the macroscopic energy fluctuation field does not evolve:  
\[ \lim_{n\to \infty} \bE_t^n(f) = \lim_{n\to\infty} \bE_0^n(f)=\frac{2}{\beta^2}  \iint_{\R^2} f(u)g(v)  \; du dv.\]
\item If $a = \frac32$ and $b > \frac14$, then
\[ \lim_{n\to\infty} \bE_t^n(f) = \frac{2}{\beta^2}  \iint_{\R^2} f(u)g(v)  P_t(u-v) \; du dv,\]
where $\{P_t\,;\, t\geq 0\}$ is the semi-group generated by the infinitesimal generator of an asymmetric $3/2$-stable L\'evy process
\begin{equation}\label{eq:opL}
{\bf L}:=-\tfrac{1}{\sqrt 2} \big( (-\Delta)^{\frac34} - \nabla (-\Delta)^{\frac14} \big).
\end{equation}
\end{enumerate}
\end{theo}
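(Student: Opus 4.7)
My plan follows the general scheme developed in \cite{BGJ} for the linear model ($\gamma_n\equiv 0$), treating the quartic anharmonicity as a small correction to the Gaussian Hermite picture. The threshold $b > \tfrac14$ quantifies how weak the nonlinearity must be for the two-point evolution of \cite{BGJ} to survive the truncation of the resulting Hermite hierarchy.

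\textbf{Step 1 (Dynkin's decomposition and currents).} By Dynkin's formula,
\[\bE_t^n(f) - \bE_0^n(f) = \int_0^t n^a\, \E\big[\cE_0^n(g)\, \cL_{\gamma_n}\cE_s^n(f)\big]\, ds.\]
Setting $\tau_x := \omega_x + \gamma_n\omega_x^3 = e'_{\gamma_n}(\omega_x)$, a direct calculation produces the conservation law
\[\A_{\gamma_n}\big[e_{\gamma_n}(\omega_x)\big] = j^E_{x-1,x} - j^E_{x,x+1}, \qquad j^E_{x,x+1} = -\tau_x \tau_{x+1},\]
while $\S[e_{\gamma_n}(\omega_x)]$ equals the discrete Laplacian and hence, after summation by parts, contributes at order $n^{a-2}$, which is negligible for $a\le \tfrac32$. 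After a discrete integration by parts, the antisymmetric part splits into a harmonic piece
\[\Psi^{\mathrm{harm}} := -\frac{n^{a-1}}{\sqrt{n}}\sum_x (\nabla_n f)(\tfrac{x}{n})\, \omega_x\omega_{x+1}\]
and an anharmonic remainder $\Psi^{\mathrm{anh}}$ of order $\gamma_n$ and $\gamma_n^2$ built from the bond-local observables $\omega_x^3\omega_{x+1}$, $\omega_x\omega_{x+1}^3$, $\omega_x^3\omega_{x+1}^3$.

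\textbf{Step 2 (Harmonic leading term).} Up to $\mathcal{O}(\gamma_n)$ corrections produced by replacing the reference measure $\nu_{\beta,0,\gamma_n}$ by the Gaussian $\nu_{\beta,0,0}$ via \eqref{eq:expv}--\eqref{eq:expe}, $\Psi^{\mathrm{harm}}$ coincides with the driving current of \cite{BGJ}. I would repeat that argument: project $\omega_x\omega_{x+1}$ onto the two-point Hermite sector, close the evolution of the energy--energy correlation on itself via a fluctuation--dissipation identity in that sector, and compute the Fourier-space limit from the explicit spectrum of the two-point exchange dynamics. For $a<\tfrac32$ this contribution vanishes in the limit, which yields part~1 of the theorem; at $a=\tfrac32$ one recovers exactly $\mathbf{L} = -\tfrac{1}{\sqrt{2}}\big((-\Delta)^{3/4} - \nabla(-\Delta)^{1/4}\big)$. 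All measure-change corrections are folded into $\Psi^{\mathrm{anh}}$.

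\textbf{Step 3 (Anharmonic remainder: threshold $b>\tfrac14$).} This is where the argument departs from \cite{BGJ}. I would apply Cauchy--Schwarz together with the Kipnis--Varadhan variational inequality associated with \eqref{eq:norm-1}:
\[\E\bigg[\bigg(\int_0^t \Psi^{\mathrm{anh}}_s\, ds\bigg)^{\!2}\bigg] \lesssim \frac{t}{n^a}\, \big\|\Psi^{\mathrm{anh}}\big\|_{-1,(tn^a)^{-1}}^2,\]
and evaluate $\|\Psi^{\mathrm{anh}}\|_{-1}^2$ term by term. For each bond-local observable I expand in the Hermite basis of the Gaussian reference: high-degree components lie in sectors where the exchange noise $\S$ has a strict spectral gap on each block, giving the classical diffusive gain $n^{-1}$ per gradient in the inversion of $(-\S)$; residual low-degree $\mathcal{O}(\gamma_n)$ components are absorbed back into the fluctuation--dissipation analysis of Step 2. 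Balancing the prefactor $\gamma_n\cdot n^{a-3/2}$ against this diffusive gain, at $a=\tfrac32$ one obtains a contribution of order $n^{1/2 - 2b}$, which is $o(1)$ as soon as $b > \tfrac14$.

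\textbf{Step 4 (Tightness and identification).} Tightness of $\{t\mapsto \bE_t^n(f)\}_n$ follows from standard equilibrium variance bounds on $\cE_t^n(f)$. Steps 2 and 3 together show that any limit point $t\mapsto\bE_t(f)$ solves the linear Cauchy problem $\partial_t \bE_t(f) = \bE_t(\mathbf{L} f)$ with initial datum $\tfrac{2}{\beta^2}\iint f(u)g(v)\,du\,dv$, and uniqueness for this problem together with the regularizing property of $\mathbf{L}$ identifies the limit as the claimed semigroup. The hard part is clearly Step~3: the loss of the closed two-point system of \cite{BGJ} forces one to control the full coupling to higher Hermite sectors, and the threshold $b > \tfrac14$ is the precise outcome of balancing the slow anharmonic drift against the diffusive smoothing produced by the exchange noise.
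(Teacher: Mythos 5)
Your proposal correctly identifies the general architecture (Dynkin decomposition, splitting into a harmonic leading part plus anharmonic remainder, Kipnis--Varadhan to kill the remainder, and the balance $n^{1/2-2b}$ at $a=3/2$ producing $b>\tfrac14$), and Steps 1--2 are a fair, if compressed, description of what \cite{BGJ} does in the harmonic case. The gap is in Step 3: the way you propose to control the anharmonic remainder would not actually deliver the threshold $b>\tfrac14$, because it ignores the structure of the error terms that the closure itself creates.

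Concretely, the paper does not ``close the two-point sector and then estimate a separate anharmonic drift.'' It writes a coupled pair of ODEs (Proposition \ref{prop:equadiff}) for $\bE^n$ and $\bQ^{2,n}$ — neither of which is closed — and chooses the test function of $\bQ^{2,n}$ to solve a Poisson equation \eqref{eq:poisson}, producing the full cascade \eqref{eq:decomp1}--\eqref{eq:decomp6} of boundary terms, diagonal corrections, $\bE^{4,n}$ terms, and two distinct $\bQ^{4,n}$ terms. The crucial point your sketch misses is that the $\bQ^{4,n}$ contribution to the energy current has prefactor $\gamma_n$ (not $\gamma_n^2$) and a bond-local test function $\nabla_n f\otimes\delta$: the Cauchy--Schwarz or even the naive Kipnis--Varadhan bound on $\gamma_n\bQ^{4,n}(\nabla_n f\otimes\delta)$ alone is $\mathcal O(\gamma_n\sqrt n)$, forcing $b>\tfrac12$, not $b>\tfrac14$. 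The paper recovers $b>\tfrac14$ only because the $\bQ^{4,n}$ term from $d\bE^n/dt$ is combined with the $\gamma_n\bQ^{4,n}(nB_nh_n)$ term from $d\bQ^{2,n}/dt$, yielding $\gamma_n\bQ^{4,n}\big(nB_nh_n-(1+\gamma_n\kappa_n)\nabla_nf\otimes\delta\big)$, and then an explicit Fourier computation on the combined test function (Proposition \ref{prop_for_q_4}, using the residue estimates of Lemma \ref{lem:integral_residues}) exhibits a cancellation that brings the exponent down. Your ``absorb low-degree components back into Step 2'' hand-waves precisely over this cancellation.

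The mechanism you invoke for the $\mc H_{-1,z}$ bound — expanding in the Hermite basis and using a ``strict spectral gap of $\S$ on each block, giving a diffusive gain $n^{-1}$ per gradient'' — is also not what the paper uses and would not be sharp here. The exchange noise $\S$ is degenerate; the paper's $\mc H_{-1,z}$ estimates (Lemma \ref{lem:two}) are obtained by comparing the Dirichlet form to that of a nearest-neighbour random walk on $\Z^2\setminus\Delta_0$ and then computing the resulting Fourier integral explicitly. The dependence on the shape of the test function (whether it is concentrated near the diagonal, as for $\nabla_n f\otimes\delta$, or spread out, as for $h_n$) is exactly what distinguishes $b>\tfrac14$ from $b>\tfrac12$, and a generic ``diffusive gain'' heuristic cannot see this. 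You would also still need the $\ell^2$ estimates of Proposition \ref{prop:estimate2} on the Poisson solution $h_n$ and the second Poisson equation \eqref{eq:v} with the dynamical argument (Proposition \ref{prop_for_q_2}), none of which appear in your sketch.
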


Note that the operator $\bf L$ in \eqref{eq:opL} is the same as in \cite{BGJ}, which corresponds to the case $\gamma_n=0$. This theorem shows that if the nonlinearity is sufficiently weak, i.e. $\gamma_n= o(n^{-1/4})$, then the energy fluctuation field starts to evolve only in the time scale $tn^{3/2}$ and that in this time scale its evolution is the same as in the linear case ($\gamma_n=0$). Similarly to what we explained in Remark  \ref{rem:vf} we expect that the result remains valid for $b\in [0,\frac14]$, see Figure \ref{fig:winter0}.

\begin{figure}[H]
\begin{tikzpicture}[scale=0.2]
\draw[->,>=latex] (0,0) -- (23,0);
\draw[->,>=latex] (0,0) -- (0,24);
\draw (0,23) node[left]{$a$};
\draw (22,0) node[below]{$b$};
\draw (0,0) node[below]{$0$};
\draw (0,0) node[left]{$0$};
\draw (5,0) node[below]{$\frac14$};
%\draw (5,0) node[below]{$1/2$};
%\draw (0,13.333) node[left]{$4/3$};
%\draw (0,15) node[left]{$3/2$};
\draw (0,20) node[left]{$3/2$};
\draw[-,=latex,red,line width=2] (5,20) -- (20, 20) node[midway,above,sloped] {\bf{frac. heat eq.}};
%\draw[-,=latex, dashed] (13.33,13.33) -- (15,15);
%\draw[-,=latex,red,ultra thick] (10,15) -- (20,15) node[midway,above,sloped] {{\tiny{fract. heat eq.}}};
\fill[Gray!60] (0,0) -- (5,0) --(5,20)--(0,20) -- cycle;
\fill[Gray] (5,0) -- (5,20) -- (20,20) -- (20,0)  -- cycle;
%\fill[brightcerulean] (10,15) circle (0.8cm);
\draw (12,6.67) node{{\textbf{{{No evolution}}}}};
%\draw[-,=latex, dashed] (0,13.33) -- (20,13.33);
\draw[-,=latex, dashed] (0,20) -- (5,20);
%\draw[-,=latex, dashed] (5,17.5) -- (5,0);
\draw[-,=latex, dashed] (5,0) -- (5,20);
%\draw[-,=latex, dashed] (10,0) -- (10,20);
\draw[-,=latex, dashed] (0,20) -- (5,20) node[midway,above,sloped] {\bf{???}};
%\fill[Gray!60] (0,10) -- (0,20) -- (5,20) -- cycle;
%\draw (7.5,15.2) node{{\textbf{?}}};
\end{tikzpicture}

\caption{Energy fluctuations: value of the time scale exponent $a$ as a function of the anharmonicity exponent $b$.}\label{fig:winter0} 
\end{figure}
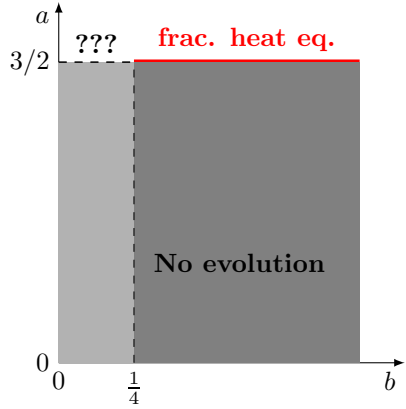

\subsection{Introduction of auxiliary fields}

In order to explain the proofs of our main theorems we need to introduce some auxiliary fields. From now on, for the sake of simplicity we assume $\beta=1$. The general case $\beta>0$ can be easily deduced from this by performing a change of variables. 

 Let $\kappa_n:=\kappa(\gamma_n)$ be the constant given ahead by \eqref{eq:kappann}, which satisfies $\kappa_n \to 3$ as $\gamma_n\to 0$ (see Section \ref{ssec:hermite}). Fix $g \in \cC_c^\infty(\R)$. First we define the bidimensional correlation fields, for any $h\in\cC_c^\infty(\R^2)$, as follows
\begin{align*}
\bQ_t^{2,n}(h) & := \E\bigg[ \cE_0^n(g)  \; \bigg\{\frac{1}{n}\sum_{x\neq y}h\Big(\frac{x}{n},\frac{y}{n}\Big) \omega_x\omega_y(tn^a)\bigg\}\bigg], \\
\bQ_t^{4,n}(h) & := \E\bigg[ \cE_0^n(g)  \; \bigg\{\frac{1}{n}\sum_{x\neq y}h\Big(\frac{x}{n},\frac{y}{n}\Big) (\omega_x^3-\kappa_n\omega_x)\omega_y(tn^a)\bigg\}\bigg] ,\\
\bQ_t^{6,n}(h) & := \E\bigg[ \cE_0^n(g)  \; \bigg\{\frac{1}{n}\sum_{x\neq y}h\Big(\frac{x}{n},\frac{y}{n}\Big) (\omega_x^3-\kappa_n\omega_x)(\omega^3_y-\kappa_n\omega_y)(tn^a)\bigg\}\bigg].
\end{align*}
%
%Finally, we introduce
%\[
%\cR_t^n(h):= \E\bigg[ \cE_0^n(g)  \; \bigg\{\frac{1}{\sqrt n}\sum_{x\in \Z}h\Big(\frac{x}{n},\frac{x}{n}\Big)\big[\omega_x^3\omega_{x-1}-\omega_x\omega_{x+1}^3\big](tn^a)\bigg\}\bigg].
%\]
We  point out that $\bQ_t^{2,n}, \bQ_t^{4,n}, \bQ_t^{6,n} $ do not depend on the values of $h$ at the diagonal $\{ (x,y)\in\bb Z^2\,;\, x = y \}$. 
We also define, for $f\in \cC_c^\infty(\R)$, the auxiliary field
\begin{equation*}
\bE_t^{4,n}(f)  :=
%\E\bigg[ \cE_0^n(g)  \; \bigg\{\frac{1}{\sqrt n}\sum_{x\in\Z}f\Big(\frac{x}{n}\Big) \omega_x^4(tn^a)\bigg\}\bigg]=
\E\bigg[ \cE_0^n(g)  \; \bigg\{\frac{1}{\sqrt n}\sum_{x\in\Z}f\Big(\frac{x}{n}\Big) \big( \omega_x^4(tn^a) -\big\langle \omega_x^4 \big\rangle_{\beta,0,\gamma_n} \big)\bigg\}\bigg].
\end{equation*}
 Let us introduce another one dimensional  field, related to the evolution of the volume correlation field as follows: it is defined for $f\in\cC_c^\infty(\R)$ as
\begin{equation}
\mathbf {V}_t^{3,n}(f)  := \E\bigg[ \mathcal{V}_0^n(g)  \; \bigg\{\frac{1}{\sqrt n}\sum_{x\in \bb Z }f\Big(\frac{x}{n}\Big) \;  \omega_x^3 (tn^a)\bigg\}\bigg]. \label{eq:v3}
\end{equation}
Finally, similarly to \eqref{eq:vtilde}, we define $\widetilde{\bf V}_t^{3,n}(f)$ in a reference frame which moves at velocity ${\bf c}_n$. 
 
\section{Estimate tools} \label{sec:esti}

\subsection{Two major inequalities}

We state  here two  inequalities, that are going to be largely used in what follows, in order to estimate the limit behavior of the main correlation fields ${\bf E}_t^n$, ${\bf V}_t^n$ and ${\bf Q}_t^{2,n}$. 

\subsubsection{Cauchy-Schwarz inequality} The following \textit{a priori} bounds  are consequences of the Cauchy-Schwarz inequality and stationarity of the process: for any $f\in\cC_c^\infty(\R)$  and $h \in \cC_c^\infty(\R^2)$,
\begin{align}
|\bE_t^n(f) | & \leq C(g)\; \|f\|_{2,n} \label{eq:CS-S}\\
|\bE_t^{4,n}(f) | & \leq C(g)\; \|f\|_{2,n} \label{eq:CS-S4}\\
|\bQ_t^{\lambda,n}(h) | & \leq C_\lambda(g)\; N_n^{\neq}(h), \qquad \text{for any } \lambda \in \{2,4,6\},\label{eq:CS-Q}\\
|\mathbf{V}_t^n(f) | & \leq C(g)\; \|f\|_{2,n} \label{eq:CS-V}\\
|\mathbf{V}_t^{3,n}(f) | & \leq C(g)\; \|f\|_{2,n}. \label{eq:CS-V3}
\end{align} 
where $C_\lambda, C$ are positive constants that only depend on the fixed test function $g$, and $\|\cdot\|_{2,n}$, $N_n^{\neq}(\cdot)$ are the norms defined in \eqref{eq:normh}. 

%The correlation fields $\bQ_t^{\lambda,n}$ are defined in Section \ref{sec:te-eff} and ${\mathbf V}^{3,n}_t$ is defined in Section {\ref{sec:volumeevolution}.

\subsubsection{Kipnis-Varadhan inequality and $\mathcal{H}_{-1,z}$ norms} A more refined bound is provided by \cite[Lemma 2.4]{MR2952852} as follows: for any function $\psi \in \mathbb{L}^2(\nu_{\beta,\tau,\gamma_n})$ we have
\begin{equation}\E\bigg[\bigg(\int_0^T \psi\big(\omega(tn^a)\big) dt\bigg)^2\bigg] \leq CTn^{-a} \; \big\|\psi(\cdot)\big\|^2_{-1, n^{-a}T^{-1}}
\label{eq:estim}
\end{equation}
where $\|\cdot\|_{-1,z}^2$ is defined in \eqref{eq:norm-1}.
 Then we have, for instance,
\begin{align}
\bb E\bigg[\bigg(\int_0^T \bE_t^n(f) dt \bigg)^2\bigg] & \leq C(g) Tn^{-a} \big\| {\mc E}^n (f) \big\|^2_{-1, n^{-a}T^{-1}}\label{eq:H-1-S}
%\Big|\int_0^T  \bE_t^{4,n}(f) dt \Big|^2 & \leq C(g)\; .... \label{eq:H-1-S4}\\
%\Big|\int_0^T \bQ_t^{\lambda,n}(h) dt \Big|^2& \leq ..... , \qquad \lambda \in \{2,4,6\},\label{eq:H-1-Q}
\end{align} 
%For the sake of readability and since $T$ does not play any role we now assume $T=1$.
%\marginnotec{I think we do not use this fact anywhere.}
The goal of the next section is to present a general method to compute the $\mc H_{-1,z}$-norms that appear at the right hand side of \eqref{eq:estim} and \eqref{eq:H-1-S},  and to estimate them with the sharpest possible bounds.

\subsection{Orthogonal polynomials and $\mathcal{H}_{-1,z}$ norms}
\label{ssec:hermite}

Recall that for simplicity, we assume that $\beta=1$. In this section we drop the index $(\beta,\tau)$ from the notations. We denote $\nu_\gamma:=\nu_{1,0,\gamma}$ and the scalar product on ${\bb L}^2 (\nu_{1,0,\gamma})$ is simply denoted by $\langle \cdot ,\cdot \rangle$.

\subsubsection{Construction of the orthogonal polynomials}

Let $\{H_n\,;\, n\in \bb N\}$ be the sequence of orthogonal polynomials with respect to the following probability measure  on $\bb R$:
\[dW (u):= Z_\gamma^{-1}  (1,0) \exp ( - e_{\gamma} (u)) du\]
obtained by a Gram-Schmidt procedure from the basis $(1,u,u^2, \ldots)$. The average of a function $f(u)$ with respect to $W$ is denoted by $\langle f\rangle_W$. The first polynomials are given by 
\[\begin{array}{ll}  H_0 (u)=1, & \displaystyle H_1 (u) =u, \\
\displaystyle H_2 (u) = u^2 - \langle u^2 \rangle_W, &\displaystyle H_3 (u)= u^3 - \kappa(\gamma)\; u, 
\end{array}\]
where 
\begin{equation}
\label{eq:kappann}
\kappa(\gamma):= \frac{\langle u^4 \rangle_W}{\langle u^2 \rangle_W}.
\end{equation} 
Observe that $\kappa(\gamma) \to 3$ as $\gamma \to 0$.

We use here some ideas of \cite[Appendix 2]{BGJSS}.  Let us construct  a basis of $\mathbb{L}^2(\nu_{\gamma})$ constituted by multivariate polynomials by tensorization of the $H_k$'s. 
We denote by $\Sigma$  the set composed of configurations  $\sigma=\{\sigma_x\}_{x\in \Z} \in \N^{\Z}$ such that $\sigma_{x} \ne 0$ only for a finite number of $x$ and  \[\Sigma_k = \Big\{ \sigma \in \Sigma \; ; \; \sum_{x \in \Z} \sigma_x=k\Big\}.\] On the set of $k$-tuples $\bx:=(x_1, \ldots,x_k)$ of $\Z^k$, we introduce the equivalence relation $\bx \sim \by$ if there exists a permutation $p$ on $\{1, \ldots,k\}$ such that $x_{p(i)} =y_i$ for all $i \in \{1, \ldots,k\}$. The class of $\bx$ for the relation $\sim$ is denoted by $[\bx]$ and its cardinal by $c({\bf x})$.  Then the set of configurations of $\Sigma_k$ can be identified with the set of $k$-tuples classes for $\sim$ by the one-to-one application:
\begin{equation*}
[{\bf x}]=[(x_1,\ldots,x_k)] \in \Z^k/ \sim \; \rightarrow \sigma^{[{\bf x}]} \in \Sigma_k
\end{equation*}
where for any $y \in \Z$, $(\sigma^{[\bf x]})_y= \sum_{i=1}^k {\bf 1}_{y=x_i}$. We shall identify $\sigma \in \Sigma_k$ with the occupation number of  a configuration with $k$ particles, and $[\bf x]$ will correspond to  the positions of those $k$ particles. To any $\sigma \in \Sigma$, we associate the polynomial function $H_{\sigma}$ given by
\begin{equation*}
H_{\sigma} (\omega) = \prod_{x \in \Z} H_{\sigma_x} (\omega_x).
\end{equation*}
Then, the family $\left\{ H_{\sigma} \; ; \; \sigma \in \Sigma \right\}$ forms an orthogonal basis of ${\bb L}^2 (\nu_{\gamma})$ such that
\begin{equation}
\label{eq:prod hsigma}
\int H_\sigma (\omega)  \, H_{\sigma'} (\omega) \, d\nu_{\gamma} (\omega)  = {N}_{\gamma} (\sigma) \delta_{\sigma=\sigma'},\end{equation}
where $N_\gamma$ is a real-valued function and $\delta$ denotes the Kronecker function, i.e. $\delta_{\sigma=\sigma'}=1$ if $\sigma=\sigma'$ and zero otherwise.

A function $\Phi:\Sigma \to \R$ such that $\Phi(\sigma)=0$ if $\sigma \notin \Sigma_k$ is called a degree $k$ function. Thus, such a function is sometimes considered as a function defined only on $\Sigma_k$. A local function $\phi \in {\mathbb L}^2 (\nu_{\gamma})$ whose decomposition on the orthogonal basis $\{ H_{\sigma} \, ; \, \sigma \in \Sigma \}$ is given by $\phi=\sum_{\sigma} \Phi(\sigma)  H_{\sigma}$ is called of degree $k$ if and only if $\Phi$ is of degree $k$. A function $\Phi: \Sigma_k \to \R$ is nothing but a symmetric function $\Phi:\Z^k \to \R$ through the identification of $\sigma$ with $[\bx]$. We denote, with some abuse of notation,  by $\langle \cdot, \cdot \rangle$ the scalar product on $\oplus {\mathbb L}^2 (\Sigma_k)$, each $\Sigma_k$ being equipped with the counting measure. Hence, if $\Phi,\Psi:\Sigma \to \R$, we have
\begin{equation*}
\langle \Phi, \Psi \rangle = \sum_{k\geq 0} \sum_{\sigma \in \Sigma_k} \Phi_k (\sigma) \Psi_k (\sigma) = \sum_{k \geq 0} \sum_{\bx \in \Z^k} \frac{1}{c({\bf x})} \,  \Phi_k (\bx) \Psi_k (\bx),
\end{equation*}
with $\Phi_k, \Psi_k$ the restrictions of $\Phi,\Psi$ to $\Sigma_k$. 

The nice property of the generator $\mathcal{S}$ of the stochastic noise is that it can be nicely decomposed on the basis. If a local function $\phi \in {\bb L}^{2} (\nu_{\gamma})$ is written in the form $\phi =\sum_{\sigma \in \Sigma} \Phi(\sigma) H_{\sigma}$ then we have
\begin{equation*}
({\mc S} \phi) (\omega) = \sum_{\sigma \in \Sigma} ({\mf S} \Phi)(\sigma) H_{\sigma} (\omega)
\end{equation*}
with
\begin{equation}
\label{eq:SHermitepol}
({\mf S} \Phi)(\sigma) = \sum_{x \in \Z} ( \Phi(\sigma^{x,x+1}) - \Phi(\sigma)),
\end{equation}
where $\sigma^{x,x+1}$ is obtained from $\sigma$ by exchanging the occupation numbers $\sigma_x$ and $\sigma_{x+1}$.

\subsubsection{Estimates of $\mc H_{-1,z}$ norms}
Here we prove the following   lemma:

%\begin{lem}\label{lem:simple}
%Let $F:\Z \to \R$ be square-summable, namely $\sum_{x\in\bb Z} F^2(x) < +\infty$. Then, there exists $C>0$ such that, for any $p \in \bb N$,  any $z >0$, and any $\gamma \leq 1$, 
%\begin{equation}
%\label{eq:h-1-simple}
%\bigg\|\sum_{x\in\Z} F(x) H_p(\omega_x) \bigg\|_{-1,z}^2 \leq  C \int_{[-\frac12,\frac12]} \frac{|\widehat{F}(k)|^2}{z+4\sin^2(\pi k)} dk,
%\end{equation}
%where \[ \widehat{F}(k)=\sum_{x\in \Z} F(x) e^{2i\pi k x}.\]
%\end{lem}

\begin{lem} \label{lem:two}
Let $F:\Z^2 \to \R$ be square-summable, namely $\sum_{x,y} F^2(x,y) < +\infty$, and assume that $F$ vanishes along the diagonal: $F(x,x)=0$ for any $x\in\Z$. Then, there exists $C >0$ such that, for any $(p,q) \in \bb N^2$ with $p\neq q$,  any $z >0$,   and any $\gamma \leq 1$,
\begin{equation}
\label{eq:h-1-two}
\bigg\|\sum_{x\neq y} F(x,y) H_p(\omega_x)H_q(\omega_y) \bigg\|_{-1,z}^2 \leq C  \iint_{[-\frac12,\frac12]^2} \frac{|\widehat{F}(k,\ell)|^2}{z+4\sin^2(\pi k)+4\sin^2(\pi \ell)} dk d\ell,
\end{equation}
where \[ \widehat{F}(k,\ell)=\sum_{(x,y)\in \Z^2} F(x,y) e^{2i\pi (k x+\ell y)}.\]
\end{lem}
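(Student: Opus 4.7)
The plan is to combine the variational characterization \eqref{eq:norm-1} of the $\mc H_{-1,z}$-norm with the orthogonal polynomial decomposition. By the representation \eqref{eq:SHermitepol}, $\S$ preserves the Hermite degree, and since $p\neq q$ it even preserves the subspace spanned by $\{H_p(\omega_x)H_q(\omega_y)\,:\,x\neq y\}$, the exchanges merely relocating the two ``species'' $p$ and $q$ without mixing them. It is therefore enough to take the supremum in \eqref{eq:norm-1} over trial functions of the form $g_G:=\sum_{x\neq y}G(x,y)H_p(\omega_x)H_q(\omega_y)$, indexed by $G:\ZZ^2\setminus\mathrm{diag}\to\RR$.

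With $\mu_k:=\langle H_k^2\rangle_W$, orthogonality of the $H_\sigma$'s and independence of the $\omega_x$'s under $\nu_\gamma$ give
\[
\langle\varphi,g_G\rangle=\mu_p\mu_q\sum_{x\neq y}F(x,y)G(x,y),\qquad \langle g_G^2\rangle=\mu_p\mu_q\sum_{x\neq y}G(x,y)^2,
\]
while the Dirichlet form $\mc D(g_G)$ unfolds into a sum of squared nearest-neighbour differences of $G$ on $\ZZ^2\setminus\mathrm{diag}$ plus a ``swap'' term $\sum_w(G(w+1,w)-G(w,w+1))^2$ produced by the direct exchange of two adjacent $p$- and $q$-sites. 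Since $F(x,x)=0$, extending $G$ by $\tilde G(x,x):=0$ leaves both $\sum_{x\neq y}FG=\langle F,\tilde G\rangle_{\ell^2(\ZZ^2)}$ and the $\ell^2$-norm unchanged.

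The crux is then a Dirichlet-form comparison $\mc D(g_G)\ge c_0\,\mu_p\mu_q\,\|\nabla\tilde G\|_{\ell^2(\ZZ^2)}^2$ for a universal $c_0>0$, where $\|\nabla\tilde G\|_{\ell^2}^2=\iint_{[-\frac12,\frac12]^2}(4\sin^2(\pi k)+4\sin^2(\pi\ell))|\widehat{\tilde G}(k,\ell)|^2\,dk\,d\ell$ by Plancherel. Granted this, I apply Cauchy--Schwarz in Fourier with the weight $\alpha(k,\ell):=c_0(z+4\sin^2(\pi k)+4\sin^2(\pi\ell))$, obtaining
\[
2\mu_p\mu_q\langle F,\tilde G\rangle\le c_0^{-1}\mu_p\mu_q\iint_{[-\frac12,\frac12]^2}\frac{|\widehat F(k,\ell)|^2}{z+4\sin^2(\pi k)+4\sin^2(\pi\ell)}\,dk\,d\ell+c_0\mu_p\mu_q\big(z\|\tilde G\|^2+\|\nabla\tilde G\|^2\big),
\]
and the last two terms are absorbed by $z\langle g_G^2\rangle+\mc D(g_G)$, which appear with a minus sign in the variational functional; taking the sup over $G$ then yields the lemma with $C=c_0^{-1}\mu_p\mu_q$.

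The main obstacle is the Dirichlet-form comparison. The exchange dynamics has no direct transitions across the diagonal $\{x=y\}$, so the ``cross-diagonal'' gradients $G(w,w+1)^2$ and $G(w+1,w)^2$ (which together make up the difference between $\|\nabla\tilde G\|^2$ and the bulk part of $\mc D(g_G)$) are absent from $\mc D(g_G)$. The hypothesis $F(x,x)=0$ is precisely what licenses the extension $\tilde G(x,x)=0$, and the swap contribution to $\mc D(g_G)$ compensates the missing cross-diagonal gradients up to a universal factor. This is seen via an elementary Cauchy--Schwarz argument on each diagonal ``corner,'' made transparent by splitting $G$ into its parts symmetric and antisymmetric under $(x,y)\leftrightarrow(y,x)$ (both preserved by $\S$): the antisymmetric part pays its cross-diagonal gradient through the swap term, while the symmetric part pays it through the bulk exclusion gradients.
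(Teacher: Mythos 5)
Your outline follows the paper's strategy closely: reduce the variational formula \eqref{eq:norm-1} to trial functions $g_G=\sum_{x\neq y}G(x,y)H_p(\omega_x)H_q(\omega_y)$ (valid because $p\neq q$ makes this subspace invariant under $\S$), express everything in terms of $G$, compare $\mc D(g_G)$ to the full Dirichlet energy of the simple random walk on $\Z^2$, and close by Cauchy--Schwarz in Fourier. The reduction and the Fourier step are correct; the gap is in the Dirichlet-form comparison.

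Extending $G$ to the diagonal by $\tilde G(x,x):=0$ does not yield $\mc D(g_G)\geq c_0\mu_p\mu_q\|\nabla\tilde G\|^2_{\ell^2(\Z^2)}$. With that extension the cross-diagonal contribution to $\|\nabla\tilde G\|^2$ is $\sum_x\bigl[G(x,x+1)^2+G(x+1,x)^2\bigr]$, and you claim the symmetric part of $G$ ``pays it through the bulk exclusion gradients.'' This is false: bulk gradients bound differences of $G$, not its size near the diagonal. Take $G$ symmetric with $G(x,y)=\max\bigl(0,\,1-(|x|+|y|)/L\bigr)$ for $x\neq y$: the swap term vanishes, the bulk gradient energy is $\mc O(1)$, but $\sum_x G(x,x+1)^2\sim L$, so the asserted inequality fails by an arbitrarily large factor. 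The paper instead extends $G$ by the \emph{average of its four nearest neighbours}, $G(x,x):=\tfrac14\sum_{|\be|=1}G((x,x)+\be)$ --- see \eqref{eq:hatG} and Lemma \ref{lem:dirich} cited from \cite{BG14}. With that choice, each cross-diagonal gradient $G(x,x\pm1)-G(x,x)$ decomposes into a fixed linear combination of the swap difference $G(x,x+1)-G(x+1,x)$ and a few nearest-neighbour differences that stay off the diagonal, so a pointwise Cauchy--Schwarz does close the bound; and $\|\tilde G\|^2_{\ell^2}\leq 2\|G\|^2_{\ell^2}$ so the $z$-term is still absorbed. Swapping in this extension (and invoking \cite{BG14} for the resulting comparison $\bb D\gtrsim\bb D_0$) rescues your argument with otherwise no change.
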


\begin{proof}
Let $p,q \geq 1$ be fixed. We define the subset $\chi_{p,q}$ included in $\Sigma_{p+q}$ as 
\[
\chi_{p,q}:=\big\{\sigma \in \Sigma \; ; \; \exists \; x\neq y, \; \sigma=p\delta_x + q\delta_y\big\}.
\]
Then, the function $\phi\in\mathbb{L}^2(\nu_\gamma)$  under interest in the left hand side of \eqref{eq:h-1-two} is written in the form $\phi=\sum_{\sigma \in \chi_{p,q}} \Phi(\sigma)H_\sigma$ where $\Phi(p \delta_x+q \delta_y)=F(x,y)$.
The set $\chi_{p,q}$ has the following stability property: for any $\sigma \in \chi_{p,q}$ and $x\in\bb Z$, we have $\sigma^{x,x+1}\in\chi_{p,q}$. {For any local function  $\psi= \sum_{\sigma \in \Sigma} \Psi(\sigma) H_{\sigma}$ we write $\psi=\psi_{p,q} + \psi'$ where
$$\psi_{p,q} = \sum_{\sigma \in \chi_{p,q}} \Psi(\sigma) H_{\sigma}, \quad \psi'= \sum_{\sigma \notin \chi_{p,q}}\Psi(\sigma) H_{\sigma}.$$
We have
\begin{equation*}
({\mc S} \psi) (\omega) = \sum_{\sigma \in \Sigma} ({\mf S} \Psi)(\sigma) H_{\sigma} (\omega) 
\end{equation*}
where ${\mf S}$ is defined by \eqref{eq:SHermitepol}. Recall that ${\mc D} (\psi)$ has been defined in \eqref{eq:diric} as  the Dirichlet form of $\psi$. A simple computation based on the orthogonality of the polynomials $H_\sigma$ and the stability property of $\chi_{p,q}$ shows that 
\begin{equation*}
\begin{split}
&{\mc D} (\psi) = \langle \psi  ,  \; -{\mc S} \psi\rangle= \langle \Psi  , \; {- \mf S} \Psi\rangle  \\
& =\tfrac{1}{2} \,  \sum_{x \in \Z} \sum_{\sigma \in \chi_{p,q}} N_{\gamma} (\sigma) \; \left[ \Psi(\sigma^{x,x+1}) - \Psi(\sigma) \right]^2 \\&+ \tfrac{1}{2} \sum_{x \in \Z} \sum_{\sigma \notin \chi_{p,q}} N_{\gamma} (\sigma) \; \left[ \Psi(\sigma^{x,x+1}) - \Psi(\sigma) \right]^2\\
&= {\mc D} (\psi_{p,q}) +{\mc D} (\psi')
\end{split}
\end{equation*}
where $N_\gamma$ has been defined in \eqref{eq:prod hsigma}. Moreover we have that
\begin{equation}
\langle \phi, \psi \rangle = \langle \phi, \psi_{p,q} \rangle, \quad \langle \psi , \psi\rangle = \langle \psi_{p,q}, \psi_{p,q} \rangle + \langle \psi',\psi' \rangle.  
\end{equation}
Recall from \eqref{eq:norm-1} that
\begin{equation*}
\begin{split}
\big\langle \phi , \;  \big(z-\S\big)^{-1}\phi\big\rangle & =\sup_{\psi} \Big\{2\big\langle \phi \;, \;   \psi\big\rangle-z\big\langle \psi^2\big\rangle- {\mc D} (\psi) \Big\}\\
&= \sup_{\psi_{p,q}, \psi'} \Big\{2\big\langle \phi \;, \;   \psi_{p,q}\big\rangle-z\big\langle \psi_{p,q}^2\big\rangle- {\mc D} (\psi_{p,q}) -z\big\langle (\psi')^2\big\rangle- {\mc D} (\psi')\Big\}.
\end{split}
\end{equation*}
 Therefore, we have that
\[
\big\langle\phi , \;  \big(z-\S\big)^{-1}\phi\big\rangle=\sup_{\psi} \Big\{2\big\langle \phi \;,\;   \psi\big\rangle-z\big\langle \psi^2\big\rangle- \big\langle \psi \; (-\S)\psi\big\rangle\Big\},
\]
where the supremum is now restricted to local functions $\psi$  which are on the form $\psi=\sum_{\sigma\in\chi_{p,q}} \Psi(\sigma)H_\sigma$. }As a result, for any $z >0$ and as $\gamma\to 0$,
\begin{align}
\big\langle & \psi\; , \;  \big(z-\S\big)^{-1}\psi\big\rangle \notag \\
&= \sup_{\Psi:\chi_{p,q} \to\R} \bigg\{2\sum_{\sigma \in\chi} \Phi(\sigma)\Psi(\sigma)  N_{\gamma} (\sigma) \;-z\sum_{\sigma\in\chi}\Psi^2(\sigma)  N_{\gamma} (\sigma) \notag\\
&\hspace{3cm}  - \sum_{x \in \Z} \sum_{\sigma \in \chi_{p,q} }  N_{\gamma} (\sigma) \; \left(\Psi(\sigma^{x,x+1}) -\Psi(\sigma) \right)^2\bigg\} \notag \\
&\approx \; \sup_{\Psi:\chi_{p,q} \to\R} \bigg\{2\sum_{\sigma \in\chi} \Phi(\sigma)\Psi(\sigma)-z\sum_{\sigma\in\chi}\Psi^2(\sigma)- \sum_{x \in \Z} \sum_{\sigma \in \chi_{p,q}} \left(\Psi(\sigma^{x,x+1}) -\Psi(\sigma) \right)^2\bigg\} \label{eq:vphisup}
\end{align}
because $N_{\gamma} (\sigma)$ is constant for $\sigma \in \chi$ and \[\lim_{\gamma \to 0} N_{\gamma} (p\delta_0 + q\delta_1)= N_0 (p\delta_0 +q\delta_1)>0,\] and the last estimate in \eqref{eq:vphisup} follows by an explicit computation: let us define \begin{align*}
\Delta_0\; &= \left\{ (x,x) \, ; \, x \in \Z \right\},\\
\Delta_{+} & = \left\{ (x,x+1) \in \Z^2 \, ; \,  x\in\Z\right\},\qquad \Delta_+^+ = \left\{ (x,y) \in \Z^2 \, ; \, y \geq x+1\right\},\\
\Delta_{-} & = \left\{ (x,x-1) \in \Z^2 \, ; \,  x\in\Z\right\},\qquad \Delta_-^-  = \left\{ (x,y) \in \Z^2 \, ; \, y \leq x-1\right\}.
\end{align*}
For any $\psi=\sum_{\sigma \in \chi_{p,q}} \Psi (\sigma) H_\sigma$, we define $G:=G_{\psi}: \Z^2 \to \R$ as
\begin{equation} {G}(x,y):=\begin{cases} \Psi (p\delta_x+q\delta_y) & \text{ if } x\neq y,\\
~\\
 \displaystyle \frac{1}{4}\sum_{\substack{\be \in \Z^2\\|\be|=1}} G ((x,y)+\be) & \text{ if } x=y.
\end{cases}\label{eq:hatG}\end{equation}
A straightforward computation shows that
\begin{align*}
\sum_{x \in \Z} \sum_{\sigma \in \chi_{p,q}} \left(\Psi(\sigma^{x,x+1}) -\Psi(\sigma) \right)^2  = & \sum_{\substack{|\be|=1\\y\notin\{x-1,x,x+1\}}} \left( G((x,y)+\be) -\ G(x,y) \right)^2\\
 & + \sum_{(x,y)\in\Delta_-}\sum_{\substack{|\be|=1\\(x,y)+\be \in \Delta_-^-}}\left( G((x,y)+\be) - G(x,y) \right)^2\\
 & + \sum_{(x,y)\in\Delta_+}\sum_{\substack{|\be|=1\\(x,y)+\be \in \Delta_+^+}}\left( G((x,y)+\be) -G(x,y) \right)^2\\
 & + \sum_{(x,y)\in\Delta_-\cup\Delta_+}\left( G(x,y) - G(y,x) \right)^2.
\end{align*}
We denote by ${\bb D}$ the Dirichlet form of a symmetric simple random walk on $\Z^2 -\Delta_0$ where jumps from $u\in\Delta_{+}$ (resp. $\Delta_-$) to its symmetric $\overline u \in \Delta_-$ (resp. $\Delta_+$) with respect to $\Delta_0$  have been added. Let $\bb D_0$ be defined for every function ${G}:\Z^2\to\R$ as
\[
\bb D_0({G})= \sum_{(x,y)\in\Z^2} \sum_{|\be|=1} \left( G((x,y)+\be) -\ G(x,y) \right)^2.
\]
It has been proved in \cite{BG14} the following 
\begin{lem}\label{lem:dirich}
There exist $C,C'>0$ such that, for any $\psi=\sum_{\sigma \in \chi_{p,q}} \Psi (\sigma) H_\sigma$,
\[
\mc D(\psi) \geq C \; \bb D({G_\psi}) \geq C' \;\bb D_0({G_\psi}),
\]
where ${G_\psi}$ is defined in \eqref{eq:hatG}.
\end{lem}

%Let us define $\widetilde{\Phi}:\Z^2\to\R$ as 
%\begin{equation*} \widetilde{\Phi}(x,y):=\begin{cases} \Phi(3\delta_x+\delta_y) & \text{ if } x\neq y,\\
%0& \text{ if } x=y.
%\end{cases}\end{equation*}
From \eqref{eq:vphisup} and Lemma \ref{lem:dirich}, we have 
\begin{multline*}
\big\langle \phi\; , \; \big(z-\S\big)^{-1} \phi\big\rangle \lesssim \sup_{G} \bigg\{  2\sum_{(x,y) \in \Z^2} F(x,y) G(x,y)  - z \sum_{(x, y) \in \Z^2} G^2 (x,y)  \\
  - \sum_{|e| =1}\sum_{(x,y) \in \Z^2 }  \big( G((x,y)+e) - G(x,y) \big)^2\bigg\}
\end{multline*}
where the supremum is now taken over all local functions $G:\Z^2 \to \R$. Then, by Fourier transform, the last supremum is equal to 
\begin{equation}
\label{eq:intpsi}
\iint_{[-\frac12,\frac12]^2} \frac{|{\widehat F}(k,\ell)|^2}{z + 4  \sin^{2} (\pi k) + 4 \sin^{2} (\pi \ell) } dkd\ell.
\end{equation}
%with the Fourier transform ${\widehat {\tilde \Phi}}$ of $\widetilde \Phi$  given by
%\begin{equation*}
%{\widehat {\widetilde \Phi}} (k,\ell) = \sum_{(x,y) \in \Z^2} \widetilde\Phi (x,y) e^{2i\pi (k x +\ell y)}.
%\end{equation*}
\end{proof}

\section{Proof of the macroscopic fluctuations for the volume field}
\label{sec:volumeevolution}

In this section we establish  Theorems \ref{theo:volume} and  \ref{theo:volume2}.

We are going to write in a convenient way the differential equations governing the evolution of the fluctuation fields. Recall that $\mathbf{V}_t^{3,n}$ has been defined in \eqref{eq:v3}. 
% For the sake of clarity, we denote $\kappa_n:=\kappa(\beta=1,\tau=0, \gamma_n)$ which is a constant defined ahead in Subsection \ref{ssec:hermite}. We also introduce the first order (in $\gamma_n$) sound velocity $c_n := -2 -2\gamma_n \kappa_n$.

%\begin{align*}
%\mathbf {V}_t^{3,n}(f) & := \E\bigg[ \mathbf{V}_0^n(g)  \; \bigg\{\frac{1}{\sqrt n}\sum_{x\in \bb Z }f\Big(\frac{x}{n}\Big) (\omega_x^3-\kappa_n\omega_x)\Big\}\Big].
%\end{align*}

\begin{prop}\label{prop:vol_decomposition}
 For any function $f\in\cC_c^\infty(\R)$, 
\begin{equation*}
\frac{d}{dt} \mathbf{V}_t^n(f)  =  \mathbf{V}_t^n\Big( 2 n^{a-2} \Delta_n f   -  2n^{a-1}\;\nabla _n f   \Big)+ {\gamma_n}\mathbf{V}^{3,n}_t \Big( n^{a-2}\Delta_n f \; -\; 2 n^{a-1} \nabla_n f\Big).
\end{equation*}
\end{prop}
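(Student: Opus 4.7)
The plan is to apply the forward Kolmogorov equation to the correlation function and then perform a discrete summation by parts, transferring the finite differences acting on $\omega$ onto the test function $f$. Since $\nu_{\beta,0,\gamma_n}$ is stationary, the time scale is $tn^a$, and $\mc V_0^n(g)$ depends only on $\omega(0)$, the Markov property gives
\[
\frac{d}{dt}\mathbf{V}_t^n(f) = n^a\, \E\Big[\mc V_0^n(g)\; \big(\mc L_{\gamma_n}\mc V^n(f)\big)\big(\omega(tn^a)\big)\Big].
\]
Because $f$ has compact support, $\mc V^n(f)$ is a local function in $\mathbb{L}^2(\nu_{\beta,0,\gamma_n})$, so the action of $\mc L_{\gamma_n}$ is well defined, and the centering constant $\mf v_{\gamma_n}(\beta,0)$, being annihilated by $\mc L_{\gamma_n}$, contributes nothing.

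Splitting $\mc L_{\gamma_n} = \A_{\gamma_n} + \S$, I would compute $\mc L_{\gamma_n}\omega_x$ separately for each part. Only the bonds $(x-1,x)$ and $(x,x+1)$ contribute to $\S\omega_x$, so $\S\omega_x = \omega_{x+1}+\omega_{x-1}-2\omega_x$. Direct application of the derivation $\A_{\gamma_n}$ yields
\[
\A_{\gamma_n}\omega_x = (\omega_{x+1}-\omega_{x-1}) + \gamma_n(\omega_{x+1}^3-\omega_{x-1}^3).
\]
Multiplying by $f(x/n)/\sqrt n$ and summing, the index shifts $y=x\pm 1$ in each half-sum transfer the differences onto $f$, and the elementary identities
\begin{align*}
f\big(\tfrac{x+1}{n}\big)+f\big(\tfrac{x-1}{n}\big)-2f\big(\tfrac{x}{n}\big) &= n^{-2}(\Delta_n f)\big(\tfrac{x}{n}\big), \\
f\big(\tfrac{x+1}{n}\big)-f\big(\tfrac{x-1}{n}\big) &= 2n^{-1}(\nabla_n f)\big(\tfrac{x}{n}\big) - n^{-2}(\Delta_n f)\big(\tfrac{x}{n}\big),
\end{align*}
identify the resulting sums exactly as the discrete Laplacian and gradient of $f$ (up to sign).

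Collecting terms, the contribution of $\S$ produces $n^{a-2}\Delta_n f$ acting on $\omega_x$, while the linear part of $\A_{\gamma_n}$ produces $-2n^{a-1}\nabla_n f + n^{a-2}\Delta_n f$ acting on $\omega_x$; their sum is exactly the operator $2n^{a-2}\Delta_n f - 2n^{a-1}\nabla_n f$ appearing in the statement. The cubic part, treated in exactly the same way, produces $\gamma_n$ times $n^{a-2}\Delta_n f - 2n^{a-1}\nabla_n f$ acting on $\omega_x^3$, which by definition of $\mathbf{V}_t^{3,n}$ is the second term. There is no serious obstacle; the only delicate point is keeping track of the discretisation remainder $-n^{-2}\Delta_n f$ contributed by the asymmetric difference $f(\tfrac{x+1}{n})-f(\tfrac{x-1}{n})$, which combines with the contribution of $\S$ to double the coefficient of the Laplacian.
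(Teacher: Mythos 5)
Your proof is correct and follows essentially the same route as the paper's appendix proof: apply the generator $\mathcal{L}_{\gamma_n}$ to the linear volume field, perform a discrete summation by parts, and recognise the resulting finite differences as $\Delta_n f$ and $\nabla_n f$, with the time change supplying the stated powers of $n$. The only cosmetic difference is that the paper's argument temporarily passes through the Hermite-centered field $\sum_x f(x)(\omega_x^3-\kappa_n\omega_x)$ before uncentering to recover $\mathbf{V}_t^{3,n}$, whereas you work directly with $\omega_x^3$, which is slightly more streamlined for this particular statement.
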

The proof of this proposition is given in Appendix \ref{app:equadiff}. 

\subsection{Fluctuations in the  time scale $a\le 1$}

 From Proposition \ref{prop:vol_decomposition}, by using \eqref{eq:CS-V} and \eqref{eq:CS-V3}, we obtain in the case $a<1$ that $\mathbf{V}_t^n(f)=\mathbf{V}_0^n(f)  + o(1)$ and no evolution holds.
 
 Taking the hyperbolic time scale ($a=1$),  we get that the evolution of the volume field is such that for any $t \in [0, T]$
\begin{equation*}
\mathbf{V}_t^n(f)-\mathbf{V}_0^n(f)  =    -2\int_0^t \mathbf{V}_s^n(\nabla _n f) ds+o(\gamma_n).
\end{equation*}
Thus, in the hyperbolic time scale, the initial fluctuations are transported with a velocity $-2$, and Theorem \ref{theo:volume} is proved.  This result seems to indicate that the sound velocity is $-2$. In fact this is not totally correct since a more accurate value of the sound velocity is given in Appendix \ref{app:prediction} and it is equal to $-2$ only at $0$-th order. Taking into account the first order correction in $\gamma_n$ in the sound mode velocity is fundamental in order to establish the next results.

\subsection{Triviality of the fluctuations up to the time scale $tn^a$ with  $a < a^*(b)$}

In this section, we consider the new field $\widetilde{\bf V}_t^{n}$ defined in \eqref{eq:vtilde}. The time evolution equation given by Proposition \ref{prop:vol_decomposition} can be easily rewritten in the new reference frame as:  
\begin{equation*}
\begin{split}
\frac{d}{dt}\widetilde{\bV}_t^n(f) & =  \widetilde{\bV}_t^n\Big( 2 n^{a-2} \Delta_n f   -  2n^{a-1}\;\nabla _n f   -{\bf c}_n n^{a-1} f^{\prime}\Big)\notag\\
& \quad + {\gamma_n}\widetilde{\bV}^{3,n}_t \Big( n^{a-2}\Delta_n f \; -\; 2 n^{a-1} \nabla_n f\Big).\notag
\end{split}
\end{equation*}
%where
%\begin{align*}
%\widetilde{\bV}^{3,n}(f) & := \E\bigg[ \mathcal{V}_0^n(g)  \; \bigg\{\frac{1}{\sqrt n}\sum_{x\in \bb Z }f\Big(\frac{x-{\bf c}_n tn^a}{n}\Big) \;  \omega_x^3 (tn^a)\bigg\}\bigg].
%\end{align*}
Observe that 
\begin{equation*}
\begin{split}
&\nabla_n f \big(\tfrac{x -{\bf c}_n tn^a}{n}\big)=f' \big(\tfrac{x -{\bf c}_n tn^a}{n}\big) +\tfrac{1}{2n} f'' \big(\tfrac{x -{\bf c}_n tn^a}{n}\big) +\mc O (n^{-2}),\\
&\Delta_n f \big(\tfrac{x -{\bf c}_n tn^a}{n}\big)=f'' \big(\tfrac{x -{\bf c}_n tn^a}{n}\big) +\mc O (n^{-1}).
\end{split}
\end{equation*}
Therefore, for $a \le 2$, by using \eqref{eq:CS-V} and \eqref{eq:CS-V3} we get, for $t \in [0, T]$,
\begin{equation}
\label{eq:tildevolumeevol1}
\widetilde{\bV}_t^n(f) - \widetilde{\bV}_0^n(f) = \int_0^t \left\{  \widetilde{\bV}_s^n\Big( n^{a-2} f'' +6 \chi_n \gamma_n n^{a-1} f^{\prime}\Big) -2 n^{a-1} {\gamma_n}\widetilde{\bV}^{3,n}_s (f') \right\} ds +\varepsilon_n (t)
\end{equation}
where we have $\EE \big[(\sup_{t \le T} \varepsilon_n (t) )^2\big]\to 0$ as $n\to \infty$.  
Observe first that by \eqref{eq:CS-V} the term $\int_0^t \widetilde{\bV}_s^n (n^{a-2} f'') ds$ vanishes in ${\bb L}^2(\bb P)$ as soon as $a<2$ but remains if $a=2$.

Let $\tau_x: \Omega \to \Omega$ be the shift operator defined by $(\tau_x \omega)_z= \omega_{x+z}$, $z \in \Z$, whose action is extended to functions $\varphi:\Omega \to \RR$ by $(\tau_x \varphi) (\omega) =\varphi (\tau_x \omega)$.  
Note now that
\begin{align}
\omega_x^3 
%& =  (\mc S-\mc L_n)(\omega_x^2 \omega_{x+1})+2\omega_x\omega_{x+1}^2 + \omega_x^2 \omega_{x+2} - 2 \omega_{x-1} \omega_x \omega_{x+1} + \gamma_n (\tau_x\psi)(\omega) \notag \\
& = (-\mc L_{\gamma_n})(\omega_x^2 \omega_{x+1}) \label{eq:omega3-0}\\ 
& \quad +3\omega_x(\omega_{x+1}^2-\chi_n) +  (\omega_x^2-\chi_n) (2\omega_{x+2}-3\omega_{x+1}) + (\omega_{x-1}^2-\chi_n)\omega_{x+1} \label{eq:omega3-11} \\
& \quad - 2 \omega_{x-1} \omega_x \omega_{x+1} \label{eq:omega3-1} \\ 
& \quad + \chi_n(3\omega_x+2\omega_{x+2}-2\omega_{x+1}) \label{eq:omega3-2} \\
& \quad + \gamma_n (\tau_x\psi)(\omega)\label{eq:omega3-3}
\end{align}
where $\psi:\Omega\to\R$ is a local function such that $\sup_{n} \langle \psi \rangle_{\beta,0,\gamma_n} < +\infty$. Using this in \eqref{eq:tildevolumeevol1} to rewrite the term  $n^{a-1} {\gamma_n}\widetilde{\bV}^{3,n}_s(f')$ we obtain several contributions. 

\begin{enumerate}[I)]
\item The contribution of \eqref{eq:omega3-2} to $n^{a-1}\gamma_n\int_0^t {\widetilde{\bf V}}_s^{3,n}(f') ds$ gives
\[ 3n^{a-1}\gamma_n\chi_n\int_0^t {\widetilde {\bf V}}_s^n(f') \; ds + \varepsilon_n (t), \]
with \[\lim_{n \to \infty}  n^{a-1} \gamma_n\; \EE \Big[\sup_{t \le T} \varepsilon_n^2 (t)\Big] =0\] because $a\le 2$. With the constant prefactor $2$ this transport term cancels the term $ 6 \chi_n \gamma_n n^{a-1}\widetilde{\bV}_s^n (  f^{\prime})$ appearing in \eqref{eq:tildevolumeevol1}.

\medskip

\item The contribution of \eqref{eq:omega3-3} to  $n^{a-1}\gamma_n\int_0^t {\widetilde{\bf V}}_s^{3,n}(f') ds$  gives a smaller term, which by the Cauchy-Schwarz inequality \eqref{eq:CS-V3} is at most of order $\gamma_n^2 n^{a-1}$. Therefore, this term vanishes if $\gamma_n=o(n^{\frac{-a+1}2})$, i.e. $a<2b+1$. 

\medskip

\item The contribution of \eqref{eq:omega3-0} is null. Indeed, by Dynkin's formula, we have
\begin{equation*}
\begin{split}
 &n^{a-1}\gamma_n\int_0^t (-\mc L_{\gamma_n} \phi_s) (\omega (sn^a) )\; ds\\
 &= \cfrac{\gamma_n}{n}\; \phi_t (\omega(tn^a)) -  \cfrac{\gamma_n}{n} \; \phi_0 (\omega(0)) - \cfrac{\gamma_n}{n} \int_0^t (\partial_s \phi_s) (\omega (sn^a)) ds + {\mc M}_t^n(\phi)
 \end{split}
\end{equation*} 
with 
\[
\phi_s(\omega)=\frac{1}{\sqrt n}\sum_{x\in\Z} f'\Big(\frac{x -{\bf c}_n s n^a}{n}\Big)\omega_x^2\omega_{x+1}
\]
and ${\mc M}_t^n(\phi)$ a martingale. Observe that 
$$(\partial_s \phi_s)(\omega) = \frac{{\bf c}_n n^{a-1}}{\sqrt n}\sum_{x\in\Z} f^{''}\Big(\frac{x -{\bf c}_n s n^a}{n}\Big)\omega_x^2\omega_{x+1}.$$
The Cauchy-Schwarz inequality and stationarity imply that uniformly in $t \in [0,T]$,
\begin{equation*}
\lim_{n \to \infty} {\mathbb E} \bigg[ \bigg( \cfrac{\gamma_n}{n} \phi_t (\omega(tn^a)) -  \cfrac{\gamma_n}{n} \phi_0 (\omega(0)) - \cfrac{\gamma_n}{n} \int_0^t (\partial_s \phi_s) (\omega (sn^a)) ds\bigg)^2\bigg] =0
\end{equation*}
as soon as $\gamma_n \to 0$ (recall that $a\le 2$). 
The quadratic variation of the martingale is given by  
\begin{equation*}
\EE \big[ \left\langle {\mc M}^n(\phi) \right\rangle_t \big] = \gamma_n^2 n^{a-2} \int_0^t {\mc D} (\phi_s) ds,
\end{equation*}
where the Dirichlet form $\mc D$ has been defined in \eqref{eq:diric}. By the Cauchy-Schwarz inequality and stationarity we have that $\int_0^t {\mc D} (\phi_s) ds$ is uniformly bounded in $t \in [0,T]$ by a constant depending on $f$ and $\beta$. Therefore we have
\begin{equation*}
\lim_{n \to \infty} \EE \big[ \left\langle {\mc M}^n(\phi) \right\rangle_t \big] = 0
\end{equation*}
since $a\le 2$.

\medskip
\item Finally, it remains to treat the sum of two terms, which are of the form ($p\in\{1,2\}$)
    \begin{equation}
   n^{a-1}\gamma_n \int_0^t \frac{1}{\sqrt n} \sum_{x\in\bb Z}  f'\Big(\frac{x -{\bf c}_n sn^a}{n}\Big) (\omega_x^2-\chi_n)\omega_{x+p}(s n^a) ds, \label{eq:12}
\end{equation}
and
    \begin{equation}
   n^{a-1}\gamma_n \int_0^t \frac{1}{\sqrt n} \sum_{x\in\bb Z}  f'\Big(\frac{x -{\bf c}_n sn^a}{n}\Big)(\omega_{x-1}\omega_x\omega_{x+1})(sn^a) ds . \label{eq:13}
\end{equation}
%(Note that all of them have to be multiplied by $\mc V_0^n(g)$ and then integrated w.r.t $\bb E$). 
Terms \eqref{eq:12} and \eqref{eq:13} are treated thanks to Lemma \ref{lem:marielle234} below.
\end{enumerate}

\begin{lem} 
\label{lem:marielle234}
Let $h: (s, x) \to h_s (x) \in \RR$ be a smooth bounded test function and $p\in\{1,2\}$. 
If $a<2b+1$ then 
\begin{align}
&\lim_{n \to \infty} {\mathbb E} \bigg[ \bigg( n^{a-1}\gamma_n \int_0^t \frac{1}{\sqrt n} \sum_{x\in\bb Z} h_{sn^{a-1}}\Big(\frac{x}{n}\Big) (\omega_x^2-\chi_n)\omega_{x+p}(sn^a) ds\bigg)^2 \bigg] =0, \label{eq:crossed} \\
&\lim_{n \to \infty} {\mathbb E} \bigg[ \bigg( n^{a-1}\gamma_n \int_0^t \frac{1}{\sqrt n} \sum_{x\in\bb Z} h_{sn^{a-1}}\Big(\frac{x}{n}\Big) ( \omega_{x-1}\omega_x\omega_{x+1})(sn^a) ds \bigg)^2 \bigg] =0. \label{eq:crossed2}  
\end{align}
\end{lem}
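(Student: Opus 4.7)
The plan is to derive both \eqref{eq:crossed} and \eqref{eq:crossed2} from the Kipnis--Varadhan estimate \eqref{eq:estim} combined with the Fourier analysis underlying Lemma~\ref{lem:two}, after rewriting the spatial sums in the Hermite polynomial basis of Section~\ref{ssec:hermite}. The time dependence of $h_s$ is harmless: all bounds below are uniform in $s\in[0,T]$ because $x\mapsto h_{sn^{a-1}}(x/n)$ has, uniformly in $s$, the smoothness, boundedness and compact support properties of a fixed test function, and \eqref{eq:estim} adapts to such mildly time-varying integrands (e.g.\ by freezing $h_s$ on $O(1)$ macroscopic subintervals of $[0,t]$).

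For \eqref{eq:crossed}, with $\beta=1$ we have $\chi_n=\langle\omega_0^2\rangle_W$, so for $p\in\{1,2\}$ the identity $(\omega_x^2-\chi_n)\omega_{x+p}=H_2(\omega_x)H_1(\omega_{x+p})$ puts the spatial sum in the form $\sum_{x\neq y} F_s(x,y) H_2(\omega_x) H_1(\omega_y)$ with $F_s(x,y)=\tfrac{1}{\sqrt n} h_{sn^{a-1}}(x/n)\,\mathbf 1_{y=x+p}$. Applying Lemma~\ref{lem:two}, introducing the coordinates $\xi=k+\ell$, $\eta=k-\ell$ (so that $|\widehat F_s(k,\ell)|^2$ depends only on $\xi$), integrating over $\eta$ against the resolvent $(z+c(\xi^2+\eta^2))^{-1}$, and rescaling $u=n\xi$, one obtains
\[
\big\|\psi_s\big\|_{-1,z}^2 \;\lesssim\; \int_{\R}\frac{|\mathcal F(h_{sn^{a-1}})(u)|^2}{\sqrt{z+c u^2/n^2}}\,du \;\lesssim\; \frac{C(h)}{\sqrt z}
\]
uniformly in $s$, as soon as $z = n^{-a} T^{-1}$ with $a\le 2$. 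Inserted into \eqref{eq:estim} together with the prefactor $(n^{a-1}\gamma_n)^2$, this yields
\[
\E\bigg[\bigg(n^{a-1}\gamma_n \int_0^t \psi_s\big(\omega(sn^a)\big)\,ds\bigg)^{\!2}\bigg] \;\lesssim\; T^{3/2}\,\gamma_n^2\, n^{3a/2-2},
\]
which vanishes whenever $a<(4+4b)/3$. Since the lemma is invoked from \eqref{eq:tildevolumeevol1} under the ambient constraint $a\le 2$, and since one checks that $(4+4b)/3\ge 2b+1$ for $b\le \tfrac12$ while $(4+4b)/3\ge 2$ for $b\ge \tfrac12$, the hypothesis $a<2b+1$ combined with $a\le 2$ always implies $a<(4+4b)/3$.

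For \eqref{eq:crossed2}, we write $\omega_{x-1}\omega_x\omega_{x+1}=H_1(\omega_{x-1})H_1(\omega_x)H_1(\omega_{x+1})$ and need the three-particle analogue of Lemma~\ref{lem:two}. Its proof extends directly: three-particle configurations in $\Sigma_3$ are parametrized by ordered triples of distinct integers, one defines the extension $G:\Z^3\to\R$ by the natural symmetric averaging across the diagonals (analogous to \eqref{eq:hatG}), and the three-dimensional analogue of Lemma~\ref{lem:dirich} --- that the exchange Dirichlet form dominates the Dirichlet form of the nearest-neighbour walk on $\Z^3\setminus\{\text{diagonals}\}$ --- follows by the same argument as in \cite{BG14}. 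This produces
\[
\big\|\psi_s\big\|_{-1,z}^2 \;\lesssim\; \iiint_{[-\frac12,\frac12]^3}\frac{|\widehat F_s(k_1,k_2,k_3)|^2}{z+\sum_{i=1}^3 4\sin^2(\pi k_i)}\,dk_1\, dk_2\, dk_3,
\]
with $\widehat F_s$ depending only on $k_1+k_2+k_3$. Passing to coordinates $(\xi,\vec\eta)$ adapted to the diagonal direction of $\R^3$, the two remaining Fourier variables contribute only a logarithmic factor (via $\iint (z+\cdots+|\vec\eta|^2)^{-1}\,d\vec\eta\sim \log(1/z)$), so $\|\psi_s\|_{-1,z}^2 \lesssim C(h)\log n$. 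The resulting bound $\gamma_n^2\, n^{a-2}\log n = n^{a-2-2b}\log n$ vanishes as soon as $a<2+2b$, which is weaker than the standing hypothesis $a<2b+1$.

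The main technical obstacle is the three-particle extension of Lemmas~\ref{lem:two} and~\ref{lem:dirich}. Conceptually the proof follows \cite{BG14}, but the combinatorics of the symmetrized extension is more involved because one must average not only across the codimension-$1$ diagonals $\{x_i=x_j\}$ of $\Z^3$ but also across their codimension-$2$ intersections. Once this extension is granted, \eqref{eq:crossed2} is the easier of the two estimates thanks to the dimensional gain in the random walk Green function on $\Z^3$, while \eqref{eq:crossed} is the binding one and fixes the threshold $(4+4b)/3$ that, under the ambient constraint $a\le 2$, suffices to cover the whole range $a<2b+1$ of the statement.
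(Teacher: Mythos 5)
Your argument for \eqref{eq:crossed} is essentially the paper's: the time-dependent Kipnis--Varadhan bound \eqref{eq:estime-time} plus the two-particle resolvent estimate of Lemma~\ref{lem:two}. The difference is in how you bound the Fourier integral. The paper simply bounds $|\mc F_n(h_{sn^{a-1}})|^2$ by a constant and uses the $\log(1/z)$ behaviour of the two-dimensional discrete Green function at $z=sn^{-a}$, arriving directly at the bound $n^{a-1-2b}\log n$ and thus at the stated threshold $a<2b+1$. You instead exploit the decay of $|\mc F_n(h)|$ and pay a $1/\sqrt z\sim n^{a/2}$ factor from the integral transverse to the diagonal, ending up with the threshold $a<\tfrac{4+4b}{3}$. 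Under the ambient constraint $a\le 2$ (which is always in force where the lemma is invoked) your threshold covers the needed range, as you carefully check; but taken literally your proof establishes a slightly smaller range than the lemma asserts when $b>\tfrac12$. This is a minor point, not an error, and worth being explicit about.

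For \eqref{eq:crossed2} your route is genuinely different from the paper's, and it contains a real gap. The paper does \emph{not} extend the Fourier/resolvent machinery to degree-three Hermite configurations. Instead it performs a multiscale substitution: it telescopes $\omega_{x-1}\omega_x\omega_{x+1}$ against the block averages $\overrightarrow{\omega}^\ell_{x+1}$, $\overleftarrow{\omega}^L_{x-1-L}$, $\overleftarrow{\omega}^L_{x}$ (listing five error terms \eqref{five_terms1}--\eqref{five_terms2}), controls the resulting differences by the one-block estimate (Lemma~\ref{one-block}) in conjunction with Lemma~6.2 of \cite{BGS}, controls the fully-averaged term \eqref{five_terms2} by Cauchy--Schwarz, stationarity and independence, and then optimises over $\ell=\varepsilon\sqrt n$, $L=\varepsilon n^{3/4}$. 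This classical replacement-lemma strategy stays entirely at the level of $\cH_{-1}$ variational bounds for single exchange moves and never requires a three-particle analogue of Lemma~\ref{lem:dirich}.

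Your proposal, by contrast, hinges on such an analogue: a lower bound of the exchange Dirichlet form restricted to $\Sigma_3$ in terms of the nearest-neighbour random-walk Dirichlet form on $\Z^3\setminus\{\text{diagonals}\}$, together with the right way to extend the test function across the codimension-one and codimension-two diagonals of $\Z^3$. You state this would ``follow by the same argument as in \cite{BG14},'' but \cite{BG14} treats only the two-particle case, where the diagonal $\Delta_0$ is codimension one, the boundary strips $\Delta_\pm$ are one-dimensional, and the added reflection jumps are simple to organise. In three particles the configuration geometry is genuinely more involved (three codimension-one planes meeting along a line, boundary effects on each, and exchanges that change which plane you are near), and it is not clear that the needed two-sided comparison of Dirichlet forms is obtained by merely ``the same argument.'' As written, this step is asserted rather than proved, and it is the load-bearing ingredient of your treatment of \eqref{eq:crossed2}. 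Unless you actually supply the three-particle version of Lemmas~\ref{lem:two} and~\ref{lem:dirich}, the argument for \eqref{eq:crossed2} is incomplete; the paper's block-average route avoids this issue entirely and is the safer path.
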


\begin{proof} We start with the proof of \eqref{eq:crossed}. Fix $p\in\{1,2\}$. 
First, we have to rewrite the $\mc H_{-1,z}$ estimate \eqref{eq:estim} in the case when the test function $\psi$ also depends on time $t$. More precisely, an easy modification of \cite[Lemma 3.9]{Seth}
gives: for any $\psi \in \bb L^2(\nu_{\beta,\tau,\gamma_n})$, 
\begin{equation}
\label{eq:estime-time}
\bb E\bigg[\bigg(\int_0^T \psi\big(tn^{a-1},\omega(tn^a)\big)dt\bigg)^2\bigg] \leq Cn^{-a}\int_0^T \big\|\psi(tn^{a-1},\cdot)\big\|_{-1,tn^{-a}}^2 \; dt.
\end{equation} 
From \eqref{eq:estime-time} and Lemma \ref{lem:two}, the term under the limit in \eqref{eq:crossed} is bounded  from above by 
\[ Cn^{2a-2} \; \gamma_n^2\; n^{-a} \int_0^t \iint_{[-\frac12,\frac12]^2} \frac{|\widehat{\varphi}_{sn^{a-1}}(k,\ell)|^2}{sn^{-a}+4\sin^2(\pi k)+4\sin^2(\pi \ell)} dk d\ell\; ds \]
where \[ \varphi_s(x,y)= \begin{cases}
0 & \text{ if } |x-y| \neq p, \\
n^{-\frac12} h_s(\frac x n) & \text{ if } |x-y|=p.
\end{cases} \]
Therefore,
$ |\widehat{\varphi_s}(k,\ell)|^2 = n \big|(\mc F_n h_s)(n(k+\ell))\big|^2, $
and we are reduced to estimate 
\begin{multline*}
n^{a-1} \gamma_n^2  \int_0^t \iint_{[-\frac12,\frac12]^2} \frac{\big|(\mc F_n h_{sn^{a-1}})(n(k+\ell))\big|^2}{sn^{-a}+4\sin^2(\pi k)+4\sin^2(\pi \ell)} dk d\ell\; ds\\
\lesssim n^{a-1} \gamma_n^2  \int_0^t \iint_{[-\frac12,\frac12]^2} \frac{1}{sn^{-a}+4\sin^2(\pi k)+4\sin^2(\pi \ell)} dk d\ell\; ds
\end{multline*}
because the discrete Fourier transform of $h_{sn^{a-1}}$ is bounded by a constant independent of $n$. The last integral is of order $n^{a-1} \gamma_n^2 \log n$ and goes to $0$ if $a-2b -1<0$. This proves \eqref{eq:crossed}.

Now we prove \eqref{eq:crossed2}. For that purpose we note that it is enough to bound each one of the following terms:
 \begin{align}
& {\mathbb E} \bigg[ \bigg( n^{a-1}\gamma_n \int_0^t \frac{1}{\sqrt n} \sum_{x\in\bb Z} h_{sn^{a-1}}\Big(\frac{x}{n}\Big) ( \omega_{x-1}\omega_x(\omega_{x+1}-\overrightarrow{\omega}^\ell_{x+1}))(sn^a) ds \bigg)^2 \bigg],\label{five_terms1}\\
%& {\mathbb E} \bigg[ \bigg( n^{a-1}\gamma_n \int_0^t \frac{1}{\sqrt n} \sum_{x\in\bb Z} h_{sn^{a-1}}\Big(\frac{x}{n}\Big) ( \omega_{x-1}\omega_x \overrightarrow{\omega}^\ell_{x+1})(sn^a) ds \bigg)^2 \bigg], \notag\\
& {\mathbb E} \bigg[ \bigg( n^{a-1}\gamma_n \int_0^t \frac{1}{\sqrt n} \sum_{x\in\bb Z} h_{sn^{a-1}}\Big(\frac{x}{n}\Big) (( \omega_{x-1}-\omega_{x-1-L})\omega_x\overrightarrow{\omega}^\ell_{x+1})(sn^a) ds \bigg)^2 \bigg],\notag\\
 &{\mathbb E} \bigg[ \bigg( n^{a-1}\gamma_n \int_0^t \frac{1}{\sqrt n} \sum_{x\in\bb Z} h_{sn^{a-1}}\Big(\frac{x}{n}\Big) ( (\omega_{x-1-L}-\overleftarrow{\omega}_{x-1-L}^L)\omega_x\overrightarrow{\omega}^\ell_{x+1})(sn^a) ds \bigg)^2 \bigg],\notag\\
& {\mathbb E} \bigg[ \bigg( n^{a-1}\gamma_n \int_0^t \frac{1}{\sqrt n} \sum_{x\in\bb Z} h_{sn^{a-1}}\Big(\frac{x}{n}\Big) (\overleftarrow{\omega}_{x-1-L}^L (\omega_x-\overleftarrow{\omega}_x^L)\overrightarrow{\omega}^\ell_{x+1})(sn^a) ds \bigg)^2 \bigg], \notag\\
& {\mathbb E} \bigg[ \bigg( n^{a-1}\gamma_n \int_0^t \frac{1}{\sqrt n} \sum_{x\in\bb Z} h_{sn^{a-1}}\Big(\frac{x}{n}\Big) ( \overleftarrow{\omega}_{x-1-L}^L\overleftarrow{\omega}_x^L\overrightarrow{\omega}^\ell_{x+1})(sn^a) ds \bigg)^2 \bigg], \label{five_terms2}
\end{align}
where for $k \in\mathbb N - \{0\}$ and $z\in\mathbb Z$ we define
$$\overrightarrow{\omega}_z^k=\frac{1}{k}\sum_{y=z+1}^{z+k}\omega_y.
 \quad \quad \overleftarrow{\omega}_z^k=\frac{1}{k}\sum_{y=z-k}^{z-1}\omega_y.$$
From an ad-hoc version of Lemma  6.2 of \cite{BGS}, when $L\geq \ell$,  the sum of the terms in the previous display are bounded from above by a constant times
\begin{equation*}
\gamma_n^2 n^{2a-3} \left\{   \frac{\ell^2}{n^{a-1}}+ \frac{L^2}{\ell n^{a-1}}+ \frac{L}{\ell n^{a-1}}
+ t\frac{n}{\ell L } \right\}\int_0^t\frac{1}{n}\sum_{x\in\mathbb{Z}}h_{sn^{a-1}}\Big(\frac{x}{n}\Big)^2\; ds.
\end{equation*}
We note that in order to bound the first term \eqref{five_terms1} the proof of the  Lemma  6.2 of \cite{BGS} relies on the one-block estimate that, for completeness, we prove in the next lemma. The last term \eqref{five_terms2} is estimated by using Cauchy-Schwarz inequality, stationarity and independence.  
Now, by choosing $\ell =\varepsilon \sqrt{n}$ and $L=\varepsilon n^{3/4}$, since $a<a^*(b)$, the previous expression vanishes as $n\to\infty$ and $\varepsilon\to 0$.
\end{proof}

\begin{lem}[One-block estimate] 
\label{one-block}
Fix $\ell\in \mathbb N$ and let $\varphi: \Omega \to \mathbb R$ be a local function which has mean zero w.r.t.~ $\nu_{\beta, 0, \gamma_n}$, and whose support does not intersect the set of points 
${\{0,\cdots,\ell\}}$.
There exists a constant $C>0$, such that for any $T>0$ and any function $h$ as in Lemma \ref{lem:marielle234}:
\begin{multline*}
\mathbb{E}\bigg[\bigg(\int_{0}^t \, \sum_{x\in\mathbb{Z}} h_{sn^{a-1}} \Big(\frac{x}{n}\Big) \tau_x\varphi (\omega({sn^a}))\big({\omega}_{x+1}-
\overrightarrow{\omega}_{x+1}^{\ell}\big) (sn^a) \, ds \bigg)^2\bigg] \\
\leq
 C\, \frac{\ell^2}{n^{a-1}}\langle\varphi^2\rangle_{\beta, 0,\gamma_n}\; \int_0^t\frac{1}{n}\sum_{x\in\mathbb{Z}}h_{sn^{a-1}}\Big(\frac{x}{n}\Big)^2ds.
 \end{multline*}
and
\begin{multline*}
{\mathbb E} \bigg[ \bigg( \int_0^t \sum_{x\in\bb Z} h_{sn^{a-1}}\Big(\frac{x}{n}\Big) \big( \omega_{x-1} (sn^a) -\omega_{x-1-\ell} (sn^a) \big) (\tau_x \varphi ) (\omega  (sn^a)) ds \bigg)^2 \bigg]\\
\le  C\, \frac{\ell^2}{n^{a-1}}\langle\varphi^2\rangle_{\beta, 0,\gamma_n}\; \int_0^t\frac{1}{n}\sum_{x\in\mathbb{Z}}h_{sn^{a-1}}\Big(\frac{x}{n}\Big)^2ds.
\end{multline*}
\end{lem}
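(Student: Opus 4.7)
My plan is to reduce both estimates to bounds on $\mc H_{-1,\cdot}$ norms via the time-dependent Kipnis--Varadhan inequality \eqref{eq:estime-time}, and then to exploit the fact that for any function $G$ independent of $\omega_z$ and $\omega_{z+1}$, the product $G(\omega_z-\omega_{z+1})$ is antisymmetric under the transposition $(z,z+1)$ and hence has small $\mc H_{-1,\cdot}$ norm relative to the exchange Dirichlet form \eqref{eq:diric}. Setting
\[
\Psi_s(\omega) := \sum_{x\in\Z} h_s\big(\tfrac{x}{n}\big)\,\tau_x\varphi(\omega)\,\big(\omega_{x+1}-\overrightarrow{\omega}_{x+1}^\ell\big),
\]
the estimate \eqref{eq:estime-time} immediately yields the first claimed inequality once we prove $\|\Psi_s\|_{-1,w}^2 \lesssim \ell^2\,\langle \varphi^2\rangle_{\beta,0,\gamma_n}\,\sum_x h_s^2(x/n)$, uniformly in $w>0$.

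To derive this $\mc H_{-1,\cdot}$ bound, I would telescope
\[
\omega_{x+1}-\overrightarrow{\omega}_{x+1}^\ell \;=\; \frac{1}{\ell}\sum_{j=1}^\ell\sum_{k=1}^j (\omega_{x+k}-\omega_{x+k+1})
\]
and reorder the resulting triple sum by the bond index $z=x+k$, obtaining $\Psi_s = \sum_{z\in\Z} G_{z,s}(\omega)(\omega_z-\omega_{z+1})$, where each $G_{z,s}$ is a weighted sum of at most $\ell(\ell+1)/2$ translates $\tau_x\varphi$. Under the support hypothesis on $\varphi$, each $G_{z,s}$ depends only on $\{\omega_y : y\notin\{z,z+1\}\}$, so the antisymmetry trick applies: expanding the variational representation \eqref{eq:norm-1}, using the identity $\langle G_{z,s}(\omega_z-\omega_{z+1}),g\rangle = \tfrac12\langle G_{z,s}(\omega_z-\omega_{z+1})(g-g^{z,z+1})\rangle$, and absorbing the $g$-dependent piece into $\mc D(g)=\tfrac12\sum_z\langle(g-g^{z,z+1})^2\rangle$ via Cauchy--Schwarz followed by Young's inequality, one obtains $\|\Psi_s\|_{-1,w}^2 \lesssim \chi_n\sum_z\langle G_{z,s}^2\rangle$. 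A direct combinatorial count---there are $\ell(\ell+1)/2$ triples per bond $z$, while each pair $(x,j)$ appears in exactly $j$ bonds---combined with Cauchy--Schwarz in the $x$-sum defining $G_{z,s}$, then produces the bound $\ell^2\langle\varphi^2\rangle\sum_x h_s^2(x/n)$. The second inequality follows from the identical argument applied to the dual telescoping $\omega_{x-1}-\omega_{x-1-\ell}=\sum_{k=1}^\ell(\omega_{x-k}-\omega_{x-k-1})$, whose bonds all lie strictly to the left of $x$.

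The main obstacle will be verifying the antisymmetry hypothesis $G_{z,s}\perp(\omega_z,\omega_{z+1})$ at every bond contributing to $\Psi_s$. The borderline bond $z=x+\ell$ is delicate, since the variable $\omega_{x+\ell+1}$ may in principle lie inside the support of $\tau_x\varphi$ depending on the precise reading of the support hypothesis; if so, those $O(n)$ edge terms must be peeled off and bounded crudely by Cauchy--Schwarz and stationarity, a step which contributes only an $O(\ell/n^{a-1})$ piece that is subdominant to the desired $\ell^2/n^{a-1}$ bound and therefore harmless.
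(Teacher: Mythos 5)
Your proposal is correct and follows essentially the same route as the paper's proof: apply the time-dependent Kipnis--Varadhan estimate \eqref{eq:estime-time}, telescope $\omega_{x+1}-\overrightarrow{\omega}^\ell_{x+1}$ into nearest-neighbour gradients, exploit invariance of $\tau_x\varphi$ under the bond exchange $\omega\mapsto\omega^{z,z+1}$ (``twice its half'' symmetrization) to produce the factor $g(\omega)-g(\omega^{z,z+1})$, and absorb it into the Dirichlet form by Young's inequality; the remaining variance count then yields the $\ell^2/n^{a-1}$ prefactor. The only cosmetic difference is that you first regroup the triple sum by bond index $z$ into a single $\sum_z G_{z,s}(\omega_z-\omega_{z+1})$ before applying Young, whereas the paper applies Young directly in the $(x,y,z)$ sum with an $x$-dependent weight $B_x$; this is equivalent bookkeeping, and the borderline bond you worry about is harmless here because, up to a one-unit shift in the stated support condition (the lemma is applied with $\tau_x\varphi=\omega_{x-1}\omega_x$, whose support is entirely to the left of the bonds $\{x+1,\dots,x+\ell\}$ touched by the telescoping), $\tau_x\varphi$ is indeed invariant under every exchange that appears.
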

\begin{proof}
%From  \cite[Lemma 2.4]{MR2952852}, we bound the previous expectation from above by a constant times
We only prove the first display since the proof of the second one is similar. By \eqref{eq:estime-time} and  \eqref{eq:norm-1} we bound the previous expectation from above by a constant times
\begin{multline}\int_0^t\bigg\|\sum_{x\in\mathbb{Z}}h_{sn^{a-1}}(\tfrac{x}{n})\tau_x\varphi(\omega (tn^a) )\big({\omega}_{x+1}-
 \overrightarrow{\omega}^{\ell}_{x+1}\big) (sn^a) \bigg\|_{-1}^2 ds\\
=\int_{0}^t\;\sup_{f }\bigg\{2 \sum_{x\in\mathbb{Z}}h_{sn^{a-1}}(\tfrac{x}{n}) \left\langle \tau_x\varphi(\omega)\big({\omega}_{x+1}-
\overrightarrow{\omega}^{\ell}_{x+1}\big)f(\omega) \right\rangle_{\beta,0,\gamma_n} -n^a\mathcal{D}(f)\bigg\} ds, \label{varia}
\end{multline}
where the supremum is carried over local functions and $\mathcal D$ is the Dirichlet form defined in \eqref{eq:diric}. 
The term ${\omega}_{x+1}-\overrightarrow{\omega}^{\ell}_{x+1}$ can be written  as a sum of gradients as
\begin{equation*}
{\omega}_{x+1}-\overrightarrow{\omega}^{\ell}_{x+1}=\frac{1}{\ell}\sum_{y=x+2}^{x+1+\ell}\sum_{z=x+1}^{y-1}({\omega}_{z}-{\omega}_{z+1}).
\end{equation*}
By writing the average in \eqref{varia} as twice its half and in one of the terms performing the exchange $\omega$ to $\omega^{z,z+1}$, for which the measure $\nu_{\beta,0,\gamma_n}$ is invariant,  we write the term inside the supremum in \eqref{varia} as
\[
\sum_{x\in\mathbb{Z}}h_{sn^{a-1}}(\tfrac{x}{n})\; \frac{1}{\ell}\sum_{y=x+2}
^{x+1+\ell}\sum_{z=x+1}^{y-1}\; \left\langle  \tau_x\varphi(\omega)({\omega}_z-{\omega}_{z+1})(f(\omega)-f(\omega^{z,z+1})) \right\rangle_{\beta,0,\gamma_n}.\notag
\]
Now, applying  Young's inequality in the term inside brackets in the previous expression, for any choice of positive constants $B_x$, it is bounded from above by
\begin{align}&\frac{1}{\ell}\sum_{x\in\mathbb{Z}}\sum_{y=x+2}
^{x+1+\ell}\sum_{z=x+1}^{y-1} \big| h_{sn^{a-1}}(\tfrac{x}{n}) \big| \; \frac{B_x}{2}\left\langle (\tau_x\varphi(\omega))^2(\omega_z-\omega_{z+1})^2\right\rangle_{\beta,0,\gamma_n} \label{first}\\
+&\frac{1}{\ell}\sum_{x\in\mathbb{Z}}\sum_{y=x+2}
^{x+1+\ell}\sum_{z=x+1}^{y-1}\frac{ \big| h_{sn^{a-1}}(\tfrac{x}{n}) \big|}{2B_x} \; \left\langle \big(f(\omega)-f(\omega^{z,z+1})\big)^2\right\rangle_{\beta,0,\gamma_n}. \label{second}
\end{align}
Let $\varepsilon>0$. For the choice  $2B_x=\varepsilon^{-1} \ell n^{-a} |h_{sn^{a-1}}(\tfrac{x}{n})|$ and independence,   \eqref{first} is bounded from above by a constant times
\begin{equation*}
\frac{\varepsilon^{-1} }{n^a}\sum_{x\in\mathbb{Z}}\sum_{y=x+2}
^{x+1+\ell}\sum_{z=x+1}^{y-1}\big(h_{s n^{a-1}}(\tfrac{x}{n})\big)^2\langle\varphi^2\rangle_{\beta, 0,\gamma_n}\lesssim \, \cfrac{\langle\varphi^2\rangle_{\beta, 0,\gamma_n}}{\varepsilon} \, \frac{\ell^2}{n^{a-1}}\frac{1}{n}\sum_{x\in\mathbb{Z}}\big(h_{s n^{a-1}}(\tfrac{x}{n})\big)^2.
\end{equation*}
Finally, a simple computation shows that for the choice of $B_x$ that we have fixed above, the term  \eqref{second} 
is bounded from above by a constant times $\varepsilon n^a {\mc D} (f)$. By choosing $\varepsilon$ sufficiently small, this term counterbalances with  the term $n^a {\mc D} (f)$ in \eqref{varia}. This  ends the  proof.
\end{proof}

\subsection{Fluctuations in the diffusive time scale $tn^2$ with $a=2$ and $b \in (\frac12, + \infty)$}

The estimates above show that starting from \eqref{eq:tildevolumeevol1} and using the previous estimates with $a=2$, if $b>\frac12$, only the term $\int_0^t \widetilde{\bV}_s^n ( f'') ds$ survives. All the other terms give a zero contribution. This concludes the proof of Theorem \ref{theo:volume2}.

\section{Proof of the macroscopic fluctuations for the energy field}

\label{sec:te-eff}
%\marginnotec{I merged this section with the (previously)  next section. Why don't we choose directly a=3/2 ? Or maybe shall we say that ther is no evolution for $a<3/2$}
In this section we prove Theorem \ref{theo:energy}.  
We   need to introduce some operators which are defined as follows. Let $f\in\cC_c^\infty(\R)$ and $h\in\cC_c^\infty(\R^2)$, then 
\begin{enumerate}[(i)]
%\item  $\Delta_n f : \R \to \R$ approximate the second derivative of $f$ as
%\begin{equation*}
%\Delta_n f \big( \tfrac{x}{n}\big) = n^2\; \big\{ f\big(\tfrac{x+ 1}{n} \big) + f\big( \tfrac{x- 1}{n} \big) - 2 f\big(\tfrac{x}{n}\big) \big\}.
%\end{equation*}
\item  $\nabla_n f \otimes \delta :  \frac{1}{n} \Z^2 \to \R$ approximates the distribution $f\,'(x) \delta(x=y)$ as
\begin{equation*}
\label{eq:3.9}
\big(\nabla_n f \otimes \delta\big) \big( \tfrac{x}{n}, \tfrac{y}{n}\big) =
\begin{cases} \vphantom{\Big\{}\frac{n^2}{2}\big\{f\big(\frac{x+1}{n}\big) - f\big(\frac{x}{n}\big)\big\}; & y =x+ 1\\ \vphantom{\Big\{} \frac{n^2}{2}\big\{f\big(\frac{x}{n}\big) - f\big(\frac{x-1}{n}\big)\big\}; & y =x- 1\\
0; & \text{ otherwise.}
\end{cases}
\end{equation*}
\item  $\Delta_n h : \R^2 \to \R$ approximates the $2-d$ Laplacian of $h$ as
\begin{equation*}
\Delta_n h\big( \tfrac{x}{n}, \tfrac{y}{n}\big) = n^2\big\{ h\big( \tfrac{x+1}{n}, \tfrac{y}{n}\big)+h\big( \tfrac{x-1}{n}, \tfrac{y}{n}\big) +h\big( \tfrac{x}{n}, \tfrac{y+1}{n}\big)+ h\big( \tfrac{x}{n}, \tfrac{y-1}{n}\big) - 4 h\big( \tfrac{x}{n}, \tfrac{y}{n}\big)\big\}.
\end{equation*}
\item  $\nabla_n h : \R^2 \to \R$ approximates the gradient of $h$ along the diagonal\footnote{The reader will notice that we used also the notation $\nabla_n f$ to denote the usual discrete gradient of the function $f:\RR \to \RR$. No confusions are possible since the latter acts on functions defined on $\RR$.} as
\[
\nabla_n h\big( \tfrac{x}{n}, \tfrac{y}{n}\big) = \begin{cases} \vphantom{\Big\{}\frac n 2\big\{ h\big( \frac{x+1}{n}, \frac{x+1}{n}\big) -  h\big( \frac{x}{n}, \frac{x}{n}\big)\big\}; & y =x+1\\
\vphantom{\Big\{} \frac n 2 \big\{h\big(\frac{x}{n}, \frac{x}{n}\big)-h\big(\frac{x-1}{n}, \frac{x-1}{n}\big)\big\}; & y=x-1\\
0; & \text{ otherwise.}
\end{cases}
\]
\item  $A_n h: \R \to \R$ approximates the directional derivative $(-2,-2) \cdot \nabla h$ as
\begin{equation*}
A_n h\big( \tfrac{x}{n}, \tfrac{y}{n}\big) = n\; \big\{h\big( \tfrac{x}{n}, \tfrac{y-1}{n}\big)+ h\big( \tfrac{x-1}{n}, \tfrac{y}{n}\big)- h\big( \tfrac{x}{n}, \tfrac{y+1}{n}\big)-h\big( \tfrac{x+1}{n}, \tfrac{y}{n}\big)\big\}.
\end{equation*}
\item  $\cD_n h : \frac{1}{n} \Z \to \R$ approximates the directional derivative of $h$ along the diagonal as
\begin{equation*}
\cD_n h\big( \tfrac{x}{n} \big) = n\; \big\{ h \big(\tfrac{x}{n}, \tfrac{x+1}{n}\big) - h \big( \tfrac{x-1}{n}, \tfrac{x}{n} \big) \big\}.
\end{equation*}
\item  $\widetilde {\cD}_n h :\frac{1}{n} \Z^2 \to \R$ approximates the distribution $\partial_y h(x,x) \otimes \delta(x=y)$ as
\begin{equation*}
\widetilde{\cD}_n h \big(\tfrac{x}{n},\tfrac{y}{n}\big) =
\begin{cases}\vphantom{\Big\{} n^2 \big\{h\big(\frac{x}{n}, \frac{x+1}{n}\big)-h\big(\frac{x}{n}, \frac{x}{n}\big)\big\}; & y =x+1\\
\vphantom{\Big\{} n^2 \big\{h\big(\frac{x-1}{n}, \frac{x}{n}\big)-h\big(\frac{x-1}{n}, \frac{x-1}{n}\big)\big\}; & y=x-1\\
0; & \text{ otherwise.}
\end{cases}
\end{equation*}
%
%
% All these operators are defined in Appendix \ref{app:approx}. 
\item  $B_n h:\frac{1}{n} \Z^2 \to \R$ is defined as
\begin{equation}
\label{eq:Bnh}
 B_n h\big(\tfrac{x}{n},\tfrac{y}{n}\big): = \sqrt{n}\;  \big\{h\big(\tfrac{x-1}{n},\tfrac{y}{n}\big)-h\big(\tfrac{x+1}{n},\tfrac{y}{n}\big) +\big({\bf 1}_{y=x+1} -{\bf 1}_{y=x-1}  \big) h\big(\tfrac{y}{n},\tfrac{y}{n}\big)\big\}.
 \end{equation}
 
 \end{enumerate}
In the following, we consider a function $h\in\cC_c^\infty(\R^2)$ which is symmetric, namely that satisfies $h(u,v)=h(v,u)$ for any $u,v \in \R$.

\begin{prop}
\label{prop:equadiff} For any function $f\in\cC_c^\infty(\R)$, and any symmetric function $h \in \cC_c^\infty(\R^2)$,
\begin{align}
\frac{d}{dt} \bE_t^n(f) & =      \bE_t^n\Big( n^{a-2}\;\Delta_n f\Big) - (1+\gamma_n\kappa_n)^2 \; \bQ_t^{2,n}\Big(n^{a-\frac32}\; \nabla_n f \otimes \delta\Big)\notag\\
& \quad  - 2  \gamma_n(1+\gamma_n\kappa_n)\;  \bQ_t^{4,n}\Big( n^{a-\frac32}\;\nabla_n f \otimes \delta\Big)  \notag \\
& \quad -   \; \gamma_n^2 \; \bQ_t^{6,n}\Big( n^{a-\frac32}\;\nabla_n f \otimes \delta\Big) ,\label{eq:S} \\
~\notag\\
\frac{d}{dt} \bQ_t^{2,n}(h) &= \bQ_t^{2,n}\Big(\mathbf{L}_n h \Big) -4  \bE_t^n\Big(n^{a-\frac32}\; \cD_n h\Big)  - \gamma_n \; \bE_t^{4,n}\Big(n^{a-\frac32}\; \cD_n h\Big) \notag \\ 
& \quad + 2 \bQ_t^{2,n}\Big(n^{a-2}\;\widetilde{\cD}_n h\Big)   +2 \gamma_n\;  \bQ_{t}^{4,n}\Big(n^{a-\frac12} B_n h\Big) \notag \\
 & \quad +  \gamma_n  \kappa_n \; \bQ_t^{2,n}\Big(n^{a-1}\nabla_n h\Big)\label{eq:Q}\end{align}
where the operator $\mathbf{L}_n$ is defined by
\begin{equation}\mathbf{L}_n :=n^{a-1}(1+\kappa_n \gamma_n) A_n + n^{a-2}  \Delta_n.\label{eq:operator}\end{equation}
%
%and
% %
% 
%\begin{align}
%\varphi(h,\omega,tn^a):= (\varphi_1+\varphi_2+\varphi_3)(h,\omega,tn^a) \label{eq:vphi}
%\end{align}
%where 
%\begin{align}
%\varphi_1(h,\omega,tn^a)&=  \sum_{x\neq  y}\Big[h\Big(\frac{x-1}{n},\frac{y}{n}\Big)-h\Big(\frac{x+1}{n},\frac{y}{n}\Big)\Big](\omega_x^3-\kappa_n \omega_x)\omega_y(tn^a)\label{eq:phi1}\\
%\varphi_2(h,\omega,tn^a)&= \sum_{x\in\Z}h\Big(\frac{x}{n},\frac{x}{n}\Big)\Big[(\omega_{x}^3-\kappa_n \omega_x)\omega_{x+1}-(\omega_{x+1}^3-\kappa_n \omega_{x+1})\omega_{x}\Big](tn^a)\label{eq:phi2}\\
%\varphi_3(h,\omega,tn^a)&= \sum_{x\in\Z}\Big[h\Big(\frac{x+1}{n},\frac{x+1}{n}\Big)-h\Big(\frac{x}{n},\frac{x}{n}\Big)\Big] \omega_{x}^3  (tn^a)\omega_{x+1} (tn^a).\label{eq:phi3}
%\end{align}
%
%%
\end{prop}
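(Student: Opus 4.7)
The proof is a direct but lengthy computation: one applies $n^a \cL_{\gamma_n}$ to the observables defining $\bE_t^n$ and $\bQ_t^{2,n}$ and carries out a careful discrete summation by parts. Since $\cL_{\gamma_n} = \A_{\gamma_n} + \S$, I would treat the two generators separately and assemble the pieces at the end.

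For \eqref{eq:S}, the action of the symmetric noise on the one-site energy is immediate: $\S e_{\gamma_n}(\omega_x) = e_{\gamma_n}(\omega_{x+1}) + e_{\gamma_n}(\omega_{x-1}) - 2 e_{\gamma_n}(\omega_x)$, and Abel summation turns the $f$-weighted sum into $\tfrac{1}{n^2}\sum_x \Delta_n f(x/n)\, e_{\gamma_n}(\omega_x)$, yielding $n^{a-2}\bE_t^n(\Delta_n f)$. For the antisymmetric part, setting $v_x := \partial_{\omega_x} e_{\gamma_n}(\omega_x) = \omega_x + \gamma_n \omega_x^3$, the identity $\A_{\gamma_n} e_{\gamma_n}(\omega_x) = v_x(v_{x+1} - v_{x-1})$ is a discrete divergence and telescoping produces
\[
\sum_x f\!\left(\tfrac{x}{n}\right) v_x (v_{x+1}-v_{x-1}) \; = \; -\frac{1}{n}\sum_x \nabla_n f\!\left(\tfrac{x}{n}\right) v_x v_{x+1}.
\]
Substituting $\omega_x^3 = H_3(\omega_x) + \kappa_n \omega_x$ and expanding $v_x v_{x+1}$ yields three pieces with coefficients $(1+\gamma_n\kappa_n)^2$, $\gamma_n(1+\gamma_n\kappa_n)$, and $\gamma_n^2$, matching the structures of $\bQ_t^{2,n}$, $\bQ_t^{4,n}$, and $\bQ_t^{6,n}$. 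The test function $\nabla_n f \otimes \delta$ is designed precisely so that $\sum_{x\neq y}(\nabla_n f\otimes\delta)(x/n,y/n)\,a_{x,y} = \tfrac{n}{2}\sum_x \nabla_n f(x/n)(a_{x,x+1}+a_{x+1,x})$; since each of the three pieces is symmetric in $x\leftrightarrow x+1$, this identifies all contributions and gives \eqref{eq:S}.

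For \eqref{eq:Q}, I apply $\cL_{\gamma_n}$ to $\omega_x\omega_y$ with $x\neq y$. The antisymmetric part produces $\omega_y(v_{x+1}-v_{x-1}) + \omega_x(v_{y+1}-v_{y-1})$; decomposing $v_z = (1+\gamma_n\kappa_n)\omega_z + \gamma_n H_3(\omega_z)$, the linear part contributes, after discrete summation by parts in $x$ and $y$ under the constraint $x\neq y$, the transport $\bQ_t^{2,n}(n^{a-1}(1+\gamma_n\kappa_n)A_n h)$ in the bulk, plus boundary corrections located at $|x-y|=1$. These corrections involve $\omega_y^2$, which via the identity $\omega_y^2 = 2 e_{\gamma_n}(\omega_y) - \tfrac{\gamma_n}{2}\omega_y^4$ splits into $\bE_t^n(n^{a-3/2}\cD_n h)$ and $\bE_t^{4,n}(n^{a-3/2}\cD_n h)$ with the stated prefactors, together with a residual off-diagonal pair captured by $\gamma_n\kappa_n\,\nabla_n h$. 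The cubic piece $\gamma_n H_3(\omega_z)$ feeds the bulk term $\bQ_t^{4,n}(n^{a-1/2}B_n h)$. For the symmetric noise, $\S$ acting on $\omega_x\omega_y$ gives the full two-dimensional discrete Laplacian when $|x-y|\geq 2$; for $|x-y|=1$ the swap exchanging $\omega_x$ and $\omega_y$ leaves the product invariant and one of the remaining swaps crosses the diagonal, so adding and subtracting the missing terms to complete $\Delta_n$ produces $\bQ_t^{2,n}(n^{a-2}\Delta_n h)$ together with the correction $2\bQ_t^{2,n}(n^{a-2}\widetilde{\cD}_n h)$.

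The main obstacle is the careful bookkeeping of these many diagonal corrections. Every summation by parts under the constraint $x\neq y$ produces boundary terms at $|x-y|=1$, and every application of $\S$ to a pair $\omega_x\omega_y$ with $|x-y|=1$ misses some of the usual two-dimensional neighbors. The operators $A_n$, $\Delta_n$, $\nabla_n$, $\cD_n$, $\widetilde{\cD}_n$, and $B_n$ introduced at the beginning of the section are set up precisely to encode these geometric corrections with the correct signs and orders in $n$, and the verification of \eqref{eq:Q} reduces to patiently matching each boundary term to one of them. The Hermite decomposition $\omega_x^3 = H_3(\omega_x) + \kappa_n\omega_x$ is essential throughout, as it cleanly separates the contributions that remain in the $\bQ_t^{2,n}$-sector (through the $(1+\gamma_n\kappa_n)$ coefficient) from the genuinely anharmonic ones which must be routed to $\bQ_t^{4,n}$ and $\bQ_t^{6,n}$.
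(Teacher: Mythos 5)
Your overall strategy — split $\cL_{\gamma_n}=\A_{\gamma_n}+\S$, apply each generator to the single-site and pair observables, sum by parts under the constraint $x\neq y$, and route the resulting pieces onto the orthogonal-polynomial decomposition via $\omega_x^3=H_3(\omega_x)+\kappa_n\omega_x$ — is precisely the strategy of the paper's proof. The treatment of \eqref{eq:S} is correct: with $v_x=\omega_x+\gamma_n\omega_x^3$, $\A_{\gamma_n}e_{\gamma_n}(\omega_x)=v_x(v_{x+1}-v_{x-1})$ telescopes onto $-\tfrac1n\sum_x\nabla_nf(\tfrac xn)v_xv_{x+1}$, and expanding $v_xv_{x+1}$ through $v_x=(1+\gamma_n\kappa_n)\omega_x+\gamma_nH_3(\omega_x)$ gives the three coefficients $(1+\gamma_n\kappa_n)^2$, $2\gamma_n(1+\gamma_n\kappa_n)$, $\gamma_n^2$.

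However, there is a genuine error in your accounting of the diagonal corrections in \eqref{eq:Q}. You attribute the $|x-y|=1$ boundary terms to the \emph{linear} part of $v_z$ alone, yielding a contribution proportional to $\omega_y^2$, and then invoke $\omega_y^2=2e_{\gamma_n}(\omega_y)-\tfrac{\gamma_n}{2}\omega_y^4$ to produce the $\bE_t^n$ and $\bE_t^{4,n}$ terms ``with the stated prefactors.'' This cannot be made to work: if the boundary were $c\,\omega_y^2$, then the split would give $2c\,e_{\gamma_n}-\tfrac{c\gamma_n}{2}\omega_y^4$, so matching $-4\bE_t^n$ forces $c=-2$, which then gives $+\gamma_n\bE_t^{4,n}$ rather than the correct $-\gamma_n\bE_t^{4,n}$. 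What you are missing is that the cubic piece $\gamma_nH_3(\omega_z)$ of $v_z$ \emph{also} produces diagonal corrections: when $x\pm1=y$ one picks up $\gamma_n\,\omega_yH_3(\omega_y)=\gamma_n(\omega_y^4-\kappa_n\omega_y^2)$. The $-\kappa_n\omega_y^2$ piece cancels the $\gamma_n\kappa_n\omega_y^2$ coming from the $(1+\gamma_n\kappa_n)$ prefactor of the linear boundary, and the $\gamma_n\omega_y^4$ piece combines with the remaining $\omega_y^2$ to give the clean combination $\omega_y^2+\gamma_n\omega_y^4=2e_{\gamma_n}(\omega_y)+\tfrac{\gamma_n}{2}\omega_y^4$, whence the correct prefactors $-4$ and $-\gamma_n$. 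In the paper's derivation this combined boundary is visible as the term $2\sum_x\bigl(h(x-1,x)-h(x+1,x)\bigr)\bigl(\omega_x^2+\gamma_n\omega_x^4\bigr)$. You should also note that the $\gamma_n\kappa_n\,\bQ_t^{2,n}(n^{a-1}\nabla_nh)$ term arises from the Hermite reduction of the cubic diagonal cross-terms $h(y,y)\,\omega_y\omega_{y\pm1}^3$, not as a ``residual off-diagonal pair'' from the linear boundary; and that $B_nh$ already encodes the $|x-y|=1$ exceptions (via its $\mathbf 1_{y=x\pm1}h(y,y)$ contributions), so ``feeds the bulk term'' undersells its role.
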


The proof of Proposition \ref{prop:equadiff} is given in Appendix \ref{app:equadiff}.  We remark that since the underlying model is nonlinear the time evolution of the pair (energy field ; quadratic field) is not closed and we have to deal with some hierarchy. This is the main difference with previous studies (\cite{BGJ,BGJSS,BGJS,JKO2}) whose success was very dependent of this closeness due to the linear interactions.
% 
%  Notice that $\varphi(h,\omega,tn^a)$ comes from the terms \eqref{eq:dif1},  \eqref{eq:dif2}.

\subsection{Strategy of the proof of Theorem \ref{theo:energy}}

Assume $a=\frac32$.  The expressions \eqref{eq:S} and \eqref{eq:Q} can be written, respectively,  as 
\begin{align}
\frac{d}{dt} \bE_t^n(f) = &-(1+\gamma_n\kappa_n)^2\; \bQ_t^{2,n}\Big(\nabla_n f \otimes \delta\Big) +   \bE_t^n\Big( n^{-\frac12}\;\Delta_n f\Big) \notag\\
& - 2\gamma_n(1+\gamma_n\kappa_n) \; \bQ_t^{4,n}\Big(\nabla_n f \otimes \delta\Big) - \gamma_n^2 \; \bQ_t^{6,n}\Big( \nabla_n f \otimes \delta\Big),\label{eq:Sprime} \\
~\notag\\
\frac{d}{dt} \bQ_t^{2,n}(h) =& \bQ_t^{2,n}\Big(\mathbf{L}_n h \Big) -4  \bE_t^n\Big( \cD_n h\Big) - \gamma_n \bE_t^{4,n}\Big(\cD_n h\Big) + 2 \bQ_t^{2,n}\Big(n^{-\frac12}\;\widetilde{\cD}_n h\Big) \notag\\
&  +2\gamma_n \bQ_{t}^{4,n}\Big(n B_n h\Big) + \gamma_n   \kappa_n \bQ_t^{2,n}\Big(\sqrt n\,\nabla_n h\Big). \label{eq:Q1prime}
\end{align}
Let $h_n:\tfrac1n \Z \times \tfrac1n \Z \to \R$ be the symmetric solution of the \textit{Poisson equation}
\begin{equation}\label{eq:poisson}
\mathbf{L}_n h_n=(1+\gamma_n\kappa_n)^2 \; \nabla_n f\otimes \delta, \ \text{ with } \ \mathbf{L}_n=\sqrt n\; (1+\kappa_n\gamma_n) A_n + n^{-\frac12}\Delta_n.
\end{equation}
Then, summing equations \eqref{eq:Sprime} and \eqref{eq:Q1prime}, and integrating in time between $0$ and $T>0$ fixed, we obtain 
\begin{align}
 \bE_T^n(f) = \bE_0^n(f)  & - 4 \int_0^T  \bE_t^n\big( \cD_n h_n\big) dt \label{eq:decomp1} \\
 &  + \bQ_0^{2,n}(h_n) - \bQ_T^{2,n}(h_n) -\gamma_n\; \int_0^T \bE^{4,n}_t\big(\cD_n h_n\big) dt \label{eq:decomp2}\\
 & +   \int_0^T \bE_t^n\big( n^{-\frac12}\;\Delta_n f\big) dt + \gamma_n  \kappa_n \int_0^T   \bQ_t^{2,n}\big(\sqrt n \, \nabla_n h_n\big)\; dt \label{eq:decomp3}\\
 & - \gamma_n^2 \; \int_0^T \bQ_t^{6,n}\big( \nabla_n f \otimes \delta\big) dt\label{eq:decomp4} \\
  & + 2 \int_0^T \bQ_t^{2,n}\big(n^{-\frac12}\;\widetilde{\cD}_n h_n\big) dt \label{eq:decomp5}\\
& + 2\gamma_n \int_0^T \bQ_t^{4,n}\big(   n B_n h_n- (1+\gamma_n\kappa_n) \nabla_n f \otimes \delta\big) dt.\label{eq:decomp6}
%& 
%+ \gamma_n  \kappa_n \int_0^T   \bQ_t^{2,n}\big(n^{-\frac12} \nabla_n h_n\big)\; dt  .
\end{align}
%Observe that, by definition, 
%\[
%- \nabla_n f \otimes \delta + \sqrt n\;A_n h_n = - n^{-\frac{1}{2}}\Delta_n(h_n),
%\]
%so that two terms above simplify: 
%\[
%- \gamma_n \; \cQ_t^{2,n}\big(\nabla_n f \otimes \delta\big)  + \gamma_n \cQ^{2,n}_t\big(\sqrt n\;A_n h_n\big) = -n^{-\frac12}\gamma_n\; \cQ_t^{2,n}(\Delta_nh_n). 
%\]
We want to estimate the range of the parameter $\gamma_n$ for which the unique term that contributes to the equality above is \eqref{eq:decomp1}, namely
$
 -4  \bE_t^n\big( \cD_n h_n\big).
$
This term will be replaced by 
$
 \bE_t^n\big( {\bf L} f \big)
$
thanks to the next proposition which is proved in Section \ref{ssec:main}.

\begin{prop}\label{prop:main}
The solution $h_n$ of \eqref{eq:poisson} satisfies 
\begin{equation}
\label{eq:limdh}
\lim_{n\to\infty} \tfrac 1 n \sum_{x \in\Z} \big| \mc D_n h_n \big(\tfrac x n\big) + \tfrac 1 4 {\bf L}f\big(\tfrac x n \big)\big|^2 = 0,
\end{equation}
where $\bf L$ is the operator defined in \eqref{eq:opL}.
\end{prop}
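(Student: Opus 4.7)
\bigskip

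\noindent\textbf{Proof plan for Proposition \ref{prop:main}.}
The operator $\mathbf{L}_n = \sqrt n (1+\kappa_n\gamma_n) A_n + n^{-1/2}\Delta_n$ has constant coefficients on the discrete lattice, so the natural tool is Fourier analysis on $\Z^2$. Since $\gamma_n \to 0$, $\kappa_n\to 3$ and $1+\gamma_n\kappa_n\to 1$, the situation is a small perturbation of the harmonic Poisson problem analyzed in \cite{BGJ}, and the plan is essentially to recover their computation while carefully tracking the $\gamma_n$-dependent prefactors and verifying they do not spoil the limit.

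The plan is as follows. First, reduce to Fourier space by Parseval: since $\tfrac{1}{n}\sum_x|\cdot|^2$ is an $\ell^2$-norm on $\tfrac1n \Z$, convergence of $\mc D_n h_n$ to $-\tfrac14\mathbf{L}f$ in this norm is equivalent to $L^2$-convergence of the associated discrete Fourier transform. Next, apply the (two-dimensional) discrete Fourier transform to the Poisson equation \eqref{eq:poisson}. Both $A_n$ and $\Delta_n$ diagonalize, and the right-hand side has the key structural property
\[
\widetilde{\nabla_n f\otimes\delta}(\theta,\phi)\;=\;in^{2}\big(\sin(2\pi\theta)+\sin(2\pi\phi)\big)\,\widetilde{f}(\theta+\phi),
\]
so that $\widetilde{h_n}(\theta,\phi)$ depends on $\theta$ and $\phi$ separately in its denominator but only through the diagonal frequency $u:=\theta+\phi$ in the factor coming from $f$. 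This factorization is the whole point of introducing the diagonal field $\mc D_n h_n$: integrating out the transverse variable gives a one-dimensional Fourier multiplier acting on $f$.

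Then I compute the Fourier transform of $\mc D_n h_n$ by plugging into the defining formula: the sum over $x\in\Z$ produces a delta localizing $\theta+\phi \equiv u$. Performing the change of variables $\theta=u/2+w$, $\phi=u/2-w$ and using the elementary identities $\sin(2\pi\theta)+\sin(2\pi\phi)=2\sin(\pi u)\cos(2\pi w)$ and $\sin^2(\pi\theta)+\sin^2(\pi\phi)=1-\cos(\pi u)\cos(2\pi w)$, the Fourier transform of $\mc D_n h_n$ at frequency $u$ becomes, up to elementary prefactors, a one-variable integral in $w\in(-\tfrac12,\tfrac12)$ whose integrand depends on $n$ only through $u$, $\gamma_n$ and $\kappa_n$. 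Rescaling to the macroscopic frequency $u=\xi/n$ and expanding $\sin(\pi u)\sim\pi\xi/n$, $1-\cos(\pi u)\cos(2\pi w)\sim 2\sin^2(\pi w)+O(n^{-2})$, the leading behavior of the denominator is $i(\pi\xi/n)\cos(2\pi w)-2\sin^2(\pi w)$. A further rescaling of the transverse variable $w\sim n^{-1/2}$ near the singular point $w=0$ produces the characteristic scale of the $3/2$-stable process: the integral converges to a universal expression matching the Fourier multiplier of $-\tfrac14\mathbf{L}$, namely $\tfrac{1}{4\sqrt 2}\big((2\pi|\xi|)^{3/2}-2i\pi\xi (2\pi|\xi|)^{1/2}\big)\widehat f(\xi)$. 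The factors $(1+\gamma_n\kappa_n)^2$ and $(1+\kappa_n\gamma_n)$ appearing in the numerator and denominator both tend to $1$, so they are harmless.

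The main obstacle is the justification of the passage to the limit inside the Fourier integral. Three types of bounds are needed: (a) a pointwise uniform-in-$n$ bound on the integrand near the singularity $w=0$, showing that after the rescaling $w\mapsto n^{-1/2}w$ the integrand is dominated by an integrable function on $\R$ of the form $|i-\pi t^2|^{-1}$ (which is exactly the $3/2$-stable profile); (b) control away from $w=0$, where the diffusive term $-2\sin^2(\pi w)$ keeps the denominator bounded away from zero uniformly in $n$ and the contribution vanishes in the limit; (c) a global $L^2$ bound in $\xi$ ensuring that the dominated-convergence argument can be carried through in $L^2(d\xi)$ rather than pointwise, and in particular controls the high-frequency tail $|\xi|\gtrsim n$ that falls outside the regime $u=\xi/n\ll 1$. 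Steps (a) and (c) are the delicate ones: they amount to verifying that $\widehat{\mc D_n h_n}$ is uniformly bounded in $L^2$ by a function comparable to the symbol $|\xi|^{3/2}\widehat f(\xi)$, after which dominated convergence concludes. Once these estimates are in place, the limit is identified termwise with the Fourier symbol of $-\tfrac14\mathbf{L}$, proving \eqref{eq:limdh}.
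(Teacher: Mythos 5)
Your high-level strategy coincides with the paper's: pass to Fourier, exploit the factorization so that $\mc D_n h_n$ becomes a one-dimensional Fourier multiplier acting on $f$, identify the limiting symbol with that of $-\tfrac14\mathbf L$, and justify the passage to the limit. Where you genuinely diverge is at the core technical step. You propose a \emph{soft} argument: rescale the transverse variable $w$ at scale $\sqrt{\xi/n}$ to exhibit the $3/2$-stable profile pointwise, then invoke dominated convergence in $w$ and in $\xi$. The paper instead takes a \emph{hard} route: the transverse integral (the function $K$ in \eqref{eq:defK}) is computed \emph{exactly} by the residue theorem (Lemma \ref{lem:integral_residues}, Appendix \ref{app:integral}), giving a closed-form expression $K(y)= 4(1+\gamma_n\kappa_n)\frac{1-w}{w\,a_n(w)}I(y)$ involving $\sqrt{\delta_n(w)}$. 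This explicit formula is then Taylor-expanded in Lemma \ref{lem:est} to yield the quantitative bound $|n^{3/2}G_n(\xi/n)-G_0(\xi)|\lesssim\gamma_n|\xi|^{3/2}+|\xi|^2/\sqrt n$, after which the three-piece decomposition \eqref{eq:bb1}--\eqref{eq:bb3} is controlled by Plancherel, Lemma \ref{lem:sfp}, and the Schwartz decay of $f$. The residue route buys an explicit rate in $\gamma_n$ and $n$, which is needed later (and is already needed in Proposition \ref{prop:estimate2}); your dominated-convergence route avoids complex analysis but leaves the hardest estimates -- the uniform-in-$n$ dominating function in $w$ (uniformly in $\xi$), the joint $L^2(d\xi)$ control of the high-frequency tail $|\xi|\gtrsim n$, and the quantitative effect of the $(1+\kappa_n\gamma_n)$-perturbation of the denominator -- only named, not carried out. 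As stated your proposal is a plausible plan rather than a proof; you should either execute those three bounds or switch to the exact residue computation, which is what the paper does.
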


We know two different ways to prove that the other terms vanish as $n \to \infty$: the Cauchy-Schwarz inequality, and the Kipnis-Varadhan inequality, which have been presented in Section \ref{sec:esti}. We start with the easiest term: from \eqref{eq:CS-S} one can directly see that, uniformly in $t\in [0,T]$,  
\[
\Big| \bE_t^n\big( n^{-\frac12}\;\Delta_n f\big)  \Big| \xrightarrow[n\to\infty]{} 0,
\]
independently of $\gamma_n$. The other contributions need some work. In the following proposition, which is proved  in Appendices \ref{sec:poisson} and \ref{app:integral}, we estimate some $\ell^2$-norms that will be used in the Cauchy-Schwarz argument.

\begin{prop}[$\ell^2$-norms involving the solution of the Poisson equation \eqref{eq:poisson}]\label{prop:estimate2}
If $h_n:\frac1n \Z \times \frac1n \Z \to \R$ is the symmetric solution of 
\eqref{eq:poisson}, then
\begin{align}
&\sum_{x,y\in\Z}h_n^2\Big(\frac{x}{n},\frac{y}{n}\Big) = \mc O\big(n^{\frac32}\big),\label{eq:est0}\\
&\sum_{x\in\Z}h_n^2\Big(\frac{x}{n},\frac{x}{n}\Big) = \mc O\big(n\big),\label{eq:est0x}\\
&\sum_{x\in\Z}\Big[\cD_nh_n\Big(\frac{x}{n}\Big)\Big]^2= \mc O(n),\label{eq:est1}\\
&\sum_{x\in\Z}\Big[h_n\Big(\frac{x+1}{n},\frac{x+1}{n}\Big)- h_n\Big(\frac{x}{n},\frac{x}{n}\Big)\Big]^2= \mc O\big(n^{-1}\big).\label{eq:est4}
\end{align}
\end{prop}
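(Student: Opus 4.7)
All four estimates will follow from a single Fourier-analytic computation: solve \eqref{eq:poisson} explicitly and then read off each $\ell^2$-norm via Parseval. I would start by introducing
\[
\hat h_n(\xi,\eta)=\sum_{x,y\in\Z} h_n\bigl(\tfrac{x}{n},\tfrac{y}{n}\bigr)e^{-2i\pi(\xi x+\eta y)},\qquad (\xi,\eta)\in\T^2.
\]
Both $A_n$ and $\Delta_n$ have standard trigonometric symbols, so $\mathbf L_n$ acts as multiplication by
\[
\widehat{\mathbf L}_n(\xi,\eta)=-2in^{3/2}(1+\gamma_n\kappa_n)(\sin(2\pi\xi)+\sin(2\pi\eta))-4n^{3/2}(\sin^2(\pi\xi)+\sin^2(\pi\eta)),
\]
and a direct computation gives $\widehat{\nabla_n f\otimes\delta}(\xi,\eta)=in^2(\sin(2\pi\xi)+\sin(2\pi\eta))\hat F(\xi+\eta)$ with $\hat F(\mu)=\sum_x f(x/n)e^{-2i\pi\mu x}$. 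Dividing produces an explicit symmetric expression for $\hat h_n$, of order $\sqrt n$ times a rational function of the sines; note that the prefactor $\sin(2\pi\xi)+\sin(2\pi\eta)$ cancels the singularity of the transport part of $\widehat{\mathbf L}_n$, so no spurious singular set appears.

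Second, I would translate each of the four sums into Fourier integrals. Directly, $\sum_{x,y}h_n^2(\tfrac{x}{n},\tfrac{y}{n})=\int_{\T^2}|\hat h_n|^2$. For the diagonal sums it is natural to introduce the 1D Fourier series $\hat D(\mu)=\int_0^1 \hat h_n(\xi,\mu-\xi)\,d\xi$ of $x\mapsto h_n(\tfrac{x}{n},\tfrac{x}{n})$; then \eqref{eq:est0x} is $\int_0^1 |\hat D(\mu)|^2 d\mu$, \eqref{eq:est4} reduces to $\int_0^1 |e^{2i\pi\mu}-1|^2 |\hat D(\mu)|^2 d\mu$, and an analogous identification writes \eqref{eq:est1} as $n^2\int_0^1 |e^{2i\pi\mu}-1|^2 \bigl|\int_0^1 \hat h_n(\xi,\mu-\xi)e^{-2i\pi\xi}d\xi\bigr|^2 d\mu$. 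After the change of variables $u=\xi+\eta$, $v=\xi-\eta$ the leading small-argument behavior reads
\[
|\hat h_n(\xi,\eta)|^2 \asymp n\cdot \frac{u^2\,|\hat F(u)|^2}{4u^2+\pi^2(u^2+v^2)^2},
\]
and the $v$-integral at fixed $u$ is $\asymp |u|^{-3/2}$, because the denominator is $\asymp u^2$ for $|v|\lesssim\sqrt{|u|}$ and $\asymp v^4$ beyond. Using the localization $\hat F(v/n)\approx n\,\cF(f)(-v)$ (Riemann-sum approximation, with $\cF(f)$ Schwartz) and substituting $u'=nu$ yields the four stated orders $\mc O(n^{3/2})$, $\mc O(n)$, $\mc O(n)$ and $\mc O(n^{-1})$ respectively, the extra factors of $n^2|\mu|^2$ or $|\mu|^2$ in \eqref{eq:est1}--\eqref{eq:est4} translating directly into additional powers of $1/n$ after rescaling.

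\textbf{Main obstacle.} The heart of the proof is the anisotropic scaling analysis of $\widehat{\mathbf L}_n$: the transport part is linear in $u=\xi+\eta$ and degenerates along the anti-diagonal, while the dissipative part is quadratic in $(\xi,\eta)$; the crossover $|v|\sim\sqrt{|u|}$ between the two regimes is precisely what produces the fractional exponent $|u|^{-3/2}$ that carries the $3/2$-L\'evy scaling reappearing in Proposition \ref{prop:main}. The remaining technical work -- controlling the contributions away from the origin of $\T^2$ (where the sines deviate from their linearizations) and keeping all constants uniform in $n$ for $\gamma_n\leq 1$ -- is routine but requires careful bookkeeping of the periodicity and of the error terms in the Riemann-sum approximation of $\hat F$.
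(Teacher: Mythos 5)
Your Fourier--Parseval reduction is the same as the paper's: both solve the Poisson equation explicitly in Fourier variables (your $\hat h_n$ is, up to normalization, the paper's $\mc F_n(h_n)$ of \eqref{eq:fourierofh}) and then read off the four sums from Parseval's identity. The difference is in how the resulting one-dimensional integrals over the anti-diagonal variable are estimated: the paper evaluates $I$, $J$, $K$ and $W$ exactly by the residue theorem (Lemma \ref{lem:integral_residues} and \eqref{eq:W}), whereas you propose an anisotropic crossover analysis (the $|v|\sim\sqrt{|u|}$ threshold). Your crossover argument correctly reproduces the bound $W(y)\lesssim|y|^{-3/2}$ needed for \eqref{eq:est0}, and is arguably more transparent about where the $3/2$ exponent originates.

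However, there is a concrete gap in the step from \eqref{eq:est0} to the diagonal estimates \eqref{eq:est0x}--\eqref{eq:est4}. The only explicit estimate you produce is $\int_v |\hat h_n|^2\,dv\asymp n\,|u|^{1/2}|\hat F(u)|^2$, and you then assert this ``yields the four stated orders.'' But $\hat D(\mu)=\int\hat h_n(\xi,\mu-\xi)\,d\xi$ is an integral of $\hat h_n$ itself over the anti-diagonal, not of $|\hat h_n|^2$, and Cauchy--Schwarz applied to your $L^2$ bound gives only $|\hat D(\mu)|^2\lesssim n\,|u|^{1/2}|\hat F(u)|^2$, hence $\sum_x h_n^2(\tfrac{x}{n},\tfrac{x}{n})=\mc O(n^{3/2})$ -- off by a factor $\sqrt n$ from the claimed $\mc O(n)$. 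What is needed is the $L^1$-type estimate $\int_v|\hat h_n|\,dv\asymp\sqrt n\,|u|^{1/2}|\hat F(u)|$, which is a different crossover computation (the tail $|v|\gtrsim\sqrt{|u|}$ now contributes as $\int v^{-2}dv$ rather than $\int v^{-4}dv$), and your write-up never performs it. The same issue affects \eqref{eq:est1} and \eqref{eq:est4}, whose kernels $K$ and $\sin(\pi\cdot)\,I$ likewise require pointwise (not $L^2$) bounds on anti-diagonal integrals of complex-valued kernels -- precisely what the paper extracts from Lemma \ref{lem:integral_residues} via residues. Your method can be salvaged by carrying out separate $L^1$ crossover estimates for each of $I$, $K$, $\sin\cdot I$, but as written the central claim that a single $L^2$ estimate propagates to all four norms is false.
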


A direct consequence of Proposition \ref{prop:estimate2} and the Cauchy-Schwarz inequality is the following: from \eqref{eq:est1} we have
\[
\Big|\gamma_n\bE^{4,n}_t\big(\cD_n h_n\big)\Big|\xrightarrow[n\to\infty]{} 0,
\]
since it is of order $\gamma_n$. 
From \eqref{eq:est0} we also get
\[
\Big|\bQ_T^{2,n}(h_n)\Big|   \xrightarrow[n\to\infty]{} 0, \qquad
\Big|\bQ_0^{2,n}(h_n)\Big|  \xrightarrow[n\to\infty]{} 0, \] independently of $\gamma_n$.
We know from \eqref{eq:est4} that the $\mathbb{L}^2(\P)$-norm of  
\[
 \gamma_n  \kappa_n \int_0^T   \bQ_t^{2,n}\big(\sqrt n\, \nabla_n h_n\big)\; dt
\]
is of order $\gamma_n$, and then vanishes. To sum up, both terms \eqref{eq:decomp2} and \eqref{eq:decomp3} vanish in $\bb L^2(\bb P)$, as soon as $\gamma_n=o(1)$.  

Moreover, concerning \eqref{eq:decomp4}, observe that by a first order Taylor expansion we have that $
\big(N_n^{\neq}(\nabla_n f\otimes \delta)\big)^2  = \mc O(n).$ Therefore  \eqref{eq:CS-Q} gives that 
\[
\Big|\gamma_n^2\; \bQ_t^{6,n}(\nabla_n f \otimes \delta)\Big| \xrightarrow[n\to\infty]{} 0,\]
if $\gamma_n=o(n^{-\frac14})$, since it is of order $\gamma_n^2 \sqrt n$.

Finally, the  terms that need a refined investigation are \eqref{eq:decomp5} and \eqref{eq:decomp6}. For these terms, the Cauchy-Schwarz inequality is not sharp enough, hence we are going to estimate them using a dynamical argument for \eqref{eq:decomp5} and the Kipnis-Varadhan inequality \eqref{eq:estim},  with some $\cH_{-1,z}$ norms estimates for \eqref{eq:decomp6}. This is the purpose of Section \ref{subsec:blabla1} and Section \ref{subsec:blabla2} respectively, in which we prove the following results:

\begin{prop}\label{prop_for_q_2}{For $h_n$  solution of \eqref{eq:poisson}, we have that}
\[
 \lim_{n\to\infty} \bb E\bigg[\bigg( \int_0^T \bQ_t^{2,n}\big(n^{-\frac12}\widetilde  {\mc D}_n h_n\big) dt \bigg)^2\bigg] =0.
\]
\end{prop}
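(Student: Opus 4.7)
Set $\varphi_n := n^{-1/2}\widetilde{\cD}_n h_n$. The goal is to show that $\int_0^T \bQ_t^{2,n}(\varphi_n)\, dt \to 0$. Since this quantity is deterministic, $\E[(\cdot)^2]=(\cdot)^2$; the first step is to apply Fubini and the Cauchy-Schwarz inequality on the probability space, together with the stationarity bound $\E[(\cE_0^n(g))^2]\leq C(g)$, to reduce the problem to
\[
\Big|\int_0^T \bQ_t^{2,n}(\varphi_n)\, dt\Big|^2 \leq C(g)\; \E\Big[\Big(\int_0^T \Phi_n(\omega(tn^{3/2}))\, dt\Big)^2\Big],
\]
where $\Phi_n(\omega) = \tfrac{1}{n}\sum_{x\ne y}\varphi_n(\tfrac{x}{n},\tfrac{y}{n})\omega_x\omega_y$. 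Using the support of $\widetilde{\cD}_n h_n$ on $\{|x-y|=1\}$, the symmetry of $h_n$, and a shift of indices, $\Phi_n$ simplifies to $\Phi_n = 2\sqrt n\sum_x r_n(x)\,\omega_x\omega_{x+1}$ with $r_n(x) := h_n(\tfrac{x}{n},\tfrac{x+1}{n}) - h_n(\tfrac{x}{n},\tfrac{x}{n})$.

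Next, I apply the Kipnis-Varadhan inequality \eqref{eq:estim} with $a=3/2$ to bound the expectation above by $CTn^{-3/2}\|\Phi_n\|_{-1,1/(Tn^{3/2})}^2$. Since at $\tau=0$ we have $H_1(\omega)=\omega$, we may write $\Phi_n = \sum_{x\ne y}F_n(x,y)H_1(\omega_x)H_1(\omega_y)$ with the symmetric function $F_n(x,y) = \sqrt n\, r_n(\min(x,y))\,\mathbf{1}_{|x-y|=1}$, so Lemma~\ref{lem:two} applies (the argument for $p=q=1$ being essentially identical to the stated $p\ne q$ case). A direct computation yields $|\hat F_n(k,\ell)|^2 = 4n\, |\hat r_n(k+\ell)|^2 \cos^2(\pi(k-\ell))$, and after the change of variables $u=k+\ell$, $v=k-\ell$, the trigonometric identity $4[\sin^2(\pi k)+\sin^2(\pi\ell)]=4(1-\cos(\pi u)\cos(\pi v))$ lets one compute the $v$-integral explicitly to obtain
\[
\|\Phi_n\|_{-1,z}^2 \lesssim n\int_{-1/2}^{1/2} \frac{|\hat r_n(u)|^2}{\sqrt{\sin^2(\pi u)+z}}\, du.
\]
Combined with the Kipnis-Varadhan prefactor, the whole estimate reduces to showing that $Tn^{-1/2}\int_{-1/2}^{1/2}|\hat r_n(u)|^2/\sqrt{\sin^2(\pi u)+1/(Tn^{3/2})}\, du$ vanishes as $n\to\infty$.

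The main obstacle is this final Fourier estimate on $r_n$. The function $r_n(x)$ measures the transverse variation of $h_n$ one lattice step off the diagonal, and the available bounds of Proposition~\ref{prop:estimate2} control the diagonal values and the global $\ell^2$-norm of $h_n$ but not this specific transverse difference. To close the argument I would use the explicit Fourier representation of the Poisson solution $h_n$ developed in Appendix~\ref{sec:poisson}: the diffusive term $n^{-1/2}\Delta_n$ in the operator $\mathbf{L}_n$ imposes enough transverse regularity on $h_n$ to provide additional decay of $\hat r_n(u)$ at small frequencies, sufficient to compensate the singularity of $(\sin^2(\pi u)+z)^{-1/2}$ near $u=0$. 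Quantifying this decay through an explicit manipulation of the Fourier kernel of $\mathbf{L}_n^{-1}$ is the technical crux of the argument.
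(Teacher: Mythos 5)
Your route is genuinely different from the paper's and, once the missing Fourier estimate is supplied, it works and is arguably more direct. The paper does \emph{not} apply Kipnis--Varadhan directly to $\bQ^{2,n}_t(n^{-1/2}\widetilde{\cD}_n h_n)$: instead it introduces a second auxiliary Poisson equation $\mathbf{L}_n v_n = 2n^{-1/2}\widetilde{\cD}_n h_n$ (equation \eqref{eq:v}), rewrites the time integral via the evolution equation \eqref{eq:Q1prime} with test function $v_n$, and then controls the resulting cascade of terms with Proposition~\ref{prop:estimatevn} and Proposition~\ref{prop_for_q_4_new}. Your proposal short-circuits this by estimating the $\mathcal{H}_{-1,z}$-norm of the integrand directly; your algebraic reductions (the simplification to $\Phi_n = 2\sqrt{n}\sum_x r_n(x)\,\omega_x\omega_{x+1}$, the Fourier computation $|\widehat{F}_n(k,\ell)|^2 = 4n\,|\widehat{r}_n(k+\ell)|^2\cos^2(\pi(k-\ell))$, and the integral reduction to $n\int |\widehat{r}_n(u)|^2 (\sin^2(\pi u)+z)^{-1/2}\,du$) are all correct. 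The extension of Lemma~\ref{lem:two} to $p=q=1$ that you flag is indeed unproblematic: the key Dirichlet-form comparison of Lemma~\ref{lem:dirich} is from~\cite{BG14}, where the quadratic field $\omega_x\omega_y$ is exactly the object of study, and symmetry of $G_\psi$ only makes the argument easier.

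The gap you honestly flag at the end is, however, already filled in the paper -- just not where you looked. Your $r_n(x)$ is exactly the function $w_n(\tfrac{x}{n})$ defined in \eqref{eq:w}, and its discrete Fourier transform is computed explicitly in \eqref{eq:fourierofw}: $\mc F_n(w_n)(\xi) = -\tfrac{\sqrt{n}}{2}\,L\big(\tfrac{\xi}{n}\big)\,\mc F_n(f)(\xi)$ with $|L(y)| \lesssim |\sin(\pi y)|^{3/2}$ by Lemma~\ref{lem:integral_residues}\,\eqref{eq:L}. Since $\widehat{r}_n(u) = n\,\mc F_n(w_n)(nu)$, you get $|\widehat{r}_n(u)|^2 \lesssim n^3\,|\sin(\pi u)|^3\,|\mc F_n(f)(nu)|^2$, and substituting into your reduced expression yields
\[
T n^{-1/2}\int_{-1/2}^{1/2}\frac{|\widehat{r}_n(u)|^2}{\sqrt{\sin^2(\pi u)+z}}\,du
\;\lesssim\; T n^{5/2}\int_{-1/2}^{1/2}\sin^2(\pi u)\,|\mc F_n(f)(nu)|^2\,du \;\lesssim\; T n^{-1/2},
\]
where the last step uses Corollary~\ref{cor:fourier} after $\xi = nu$. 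So your approach, completed in this way, gives the claim at rate $n^{-1/2}$ without introducing $v_n$ or invoking Propositions~\ref{prop:estimatevn} and \ref{prop_for_q_4_new}. Your intuitive explanation for the decay -- attributing it to the diffusive term $n^{-1/2}\Delta_n$ alone -- is a bit loose (the $|\sin(\pi y)|^{3/2}$ decay of $L$ really comes from the interplay of $A_n$ and $\Delta_n$ through the residue calculation behind Lemma~\ref{lem:integral_residues}), but the underlying quantitative estimate is exactly what you anticipated.
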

\begin{prop} \label{prop_for_q_4} {Let $h_n$  be solution of \eqref{eq:poisson}. If $\gamma_n = o(n^{-\frac14})$, then}
\[
 \lim_{n\to\infty} \bb E\bigg[\bigg(\gamma_n \int_0^T \bQ_t^{4,n}\big(  2n B_n h_n-2 (1+\gamma_n\kappa_n) \nabla_n f \otimes \delta\big) dt \bigg)^2\bigg] =0.
\]
\end{prop}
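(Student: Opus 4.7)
The plan is to pull the factor $\cE_0^n(g)$ out by Cauchy--Schwarz, apply the Kipnis--Varadhan inequality \eqref{eq:estim} to the remaining time integral, and then use Lemma \ref{lem:two} to reduce the resulting $\cH_{-1,z}$ norm to a Fourier integral, which is controlled using the structure of the Poisson equation \eqref{eq:poisson} satisfied by $h_n$.

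Set $\phi_n := 2nB_n h_n - 2(1+\gamma_n\kappa_n)\nabla_n f\otimes\delta$ and
\[
X_t^{4,n} := \tfrac{1}{n} \sum_{x\neq y} \phi_n\big(\tfrac{x}{n},\tfrac{y}{n}\big)\, H_3\big(\omega_x(tn^{3/2})\big)\, H_1\big(\omega_y(tn^{3/2})\big),
\]
so that $\int_0^T \bQ_t^{4,n}(\phi_n)\, dt = \E\big[\cE_0^n(g)\int_0^T X_t^{4,n}\, dt\big]$. Since $\E[(\cE_0^n(g))^2] = O(1)$ by stationarity, Cauchy--Schwarz reduces the claim to
\[
\gamma_n^2 \; \E\bigg[\bigg(\int_0^T X_t^{4,n}\, dt\bigg)^2\bigg] \xrightarrow[n\to\infty]{} 0.
\]
Applying \eqref{eq:estim} with $a = \tfrac32$ bounds the right-hand side by $CT\gamma_n^2 n^{-3/2}\|X^{4,n}\|_{-1,(Tn^{3/2})^{-1}}^2$. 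Since the Hermite degrees $p=3$ and $q=1$ differ, Lemma \ref{lem:two} applied with $F(x,y):=\tfrac{1}{n}\phi_n(\tfrac{x}{n},\tfrac{y}{n})$ yields
\[
\big\|X^{4,n}\big\|_{-1,z}^2 \;\leq\; C\iint_{[-\frac12,\frac12]^2} \frac{|\widehat F(k,\ell)|^2}{z + 4\sin^2(\pi k) + 4\sin^2(\pi\ell)}\, dk\, d\ell.
\]

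To control this Fourier integral, the key identity is $B_n h(\tfrac{x}{n},\tfrac{y}{n}) = \sqrt n\big[\tilde h(\tfrac{x-1}{n},\tfrac{y}{n}) - \tilde h(\tfrac{x+1}{n},\tfrac{y}{n})\big]$, where $\tilde h$ agrees with $h$ off the diagonal and vanishes on it. Thus $F$ splits as the sum of a pure discrete gradient in the first variable of $\tilde h_n$ plus a near-diagonal residual supported on $\{|y-x|=1\}$ coming from the mismatch between $2\sqrt n\, h_n$ on the diagonal and $(1+\gamma_n\kappa_n)\nabla_n f$. For the gradient part, the factor $\sin^2(\pi k)$ from $|\widehat{\partial_x\tilde h_n}|^2 \lesssim \sin^2(\pi k)|\widehat{\tilde h_n}|^2$ cancels one of the terms in the denominator, reducing the contribution to an $\ell^2$-integral controlled via \eqref{eq:est0}--\eqref{eq:est0x}. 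For the residual, the Poisson equation $\mathbf L_n h_n = (1+\gamma_n\kappa_n)^2\,\nabla_n f\otimes\delta$ evaluated on $\{y=x\pm 1\}$ is used to extract the cancellation between $\sqrt n\, h_n$ at the endpoint and $\nabla_n f$.

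The main obstacle is to turn these structural facts into a quantitative bound that matches the hypothesis $\gamma_n = o(n^{-1/4})$. A crude gradient bound using only $\sum h_n^2 = \mc O(n^{3/2})$ from \eqref{eq:est0} gives a Fourier integral of order $\mc O(n^{5/2})$, leading to $\mc O(\gamma_n^2 n)$, which would only vanish for $b > \tfrac12$. Sharpening to $b > \tfrac14$ requires genuinely exploiting the anisotropic structure of $\widehat{h_n}$: the symbol of $\mathbf L_n$ is dominated at leading order by $-2in^{3/2}(1+\kappa_n\gamma_n)[\sin(2\pi k) + \sin(2\pi\ell)]$, forcing $\widehat{h_n}$ to concentrate near the codiagonal $k+\ell=0$ and providing additional decay. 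Implementing this estimate carefully in both the gradient contribution and in the near-diagonal residual cancellation is the heart of the argument.
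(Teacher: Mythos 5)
Your setup — Cauchy--Schwarz to peel off $\cE_0^n(g)$, the Kipnis--Varadhan inequality \eqref{eq:estim}, Lemma~\ref{lem:two} applied to the degree-$(3,1)$ Hermite sector, and the identification of $B_n$ as a discrete gradient of a diagonal-modified function — matches the paper's framework. You also correctly identify the mechanism behind the sharpening from $b>\tfrac12$ to $b>\tfrac14$: the Poisson equation forces $\cF_n(h_n)(u,v)$ to be a function of $u+v$ only (up to the explicit kernel), which is exactly the ``concentration near the codiagonal'' you describe.

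However, there is a genuine gap, and you acknowledge it yourself in your last sentence. The hard part of the proof is precisely the quantitative implementation that you defer. The paper carries this out by computing the Fourier transform $\widehat{\Phi}_n(k,\ell)$ of the test function exactly, performing a change of variables and using the symmetry $\cF_n(h_n)(\xi-u,u)=\cF_n(h_n)(u,\xi-u)$ to collapse the $j$-sum, and arriving at the factored form
\[
\widehat{\Phi}_n(k,\ell)=n^2(1+\gamma_n\kappa_n)\,\cF_n(f)(n(k+\ell))\,R_n(k,\ell),
\qquad
R_n(k,\ell)=i\Omega(k,\ell)+(1+\gamma_n\kappa_n)\big[I(k+\ell)(e^{-2i\pi k}-e^{2i\pi k})+J(k+\ell)\big],
\]
with $I$, $J$ as in \eqref{eq:defI}--\eqref{eq:defJ}. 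The decisive inputs are the residue-theorem estimates $|I(\xi)|\lesssim|\sin(\pi\xi)|^{1/2}$, $|J(\xi)|\lesssim|\sin(\pi\xi)|^{3/2}$ (Lemma~\ref{lem:integral_residues}), which, combined with $\int |\widehat\varphi|^2/(z+\Lambda)\,dk\lesssim n^{3/2}$, give contributions of order $n\gamma_n^2$ (from $\Omega$) and $\sqrt n\,\gamma_n^2$ (from $I,J$). Your proposal never produces these pointwise bounds; the ``additional decay from $\widehat h_n$ concentrating near the codiagonal'' has to be made effective precisely through these integrals of $\Theta$ along the codiagonal, and that computation is the content of Appendix~\ref{app:integral}. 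Without it, you only reach the $\mc O(\gamma_n^2 n)$ bound you already noted as insufficient.

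A secondary point: splitting into a pure first-variable gradient of $\tilde h_n$ and a near-diagonal residual is fine, but the two pieces are not independently small — the paper's cancellation between $2n B_n h_n$ and $2(1+\gamma_n\kappa_n)\nabla_n f\otimes\delta$ happens at the level of the combined Fourier symbol $R_n$, and in particular the identity $2\,\mathrm{Re}[(e^{2i\pi(u/n-\ell)}-1)^{-1}]=-1$ is used to merge the $j\geq 1$ sum with the off-diagonal piece. If you handle gradient and residual separately, you will likely lose this cancellation and re-encounter the $b>\tfrac12$ barrier.
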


In the next sections we give the details of the proofs of  Propositions \ref{prop:main}, \ref{prop_for_q_2} and \ref{prop_for_q_4}.  

\begin{rem}Note that in the whole argument we used the restriction $\gamma_n=o(n^{-\frac14})$ only twice:
\begin{itemize}
\item first, to make \eqref{eq:decomp4} vanish as $n\to\infty$;
\item second, to prove Proposition \ref{prop_for_q_4}.
\end{itemize} Our conjecture is that in the first case, the limitation could be improved quite easily. Indeed, to treat \eqref{eq:decomp4} recall that we roughly applied the Cauchy-Schwarz inequality, but it is highly probable that a $\mc H_{-1,z}$-norm argument would give a more refined estimate, and therefore would relax the restriction. However, in the second case, we do not see any easy way to improve the result. This is why we believe that the restriction $\gamma_n=o(n^{-\frac14})$ comes from Proposition \ref{prop_for_q_4}, and only from this last result. 
\end{rem}

\subsection{Proof of Proposition \ref{prop:main}}

\label{ssec:main} 

First let us note that the Fourier transform of $h_n$ defined in \eqref{eq:poisson} is explicitly given, for any $(u,v) \in [-\frac{n}{2},\frac{n}{2}]^2$, by
\begin{equation}
\label{eq:fourierofh}
\begin{split}
\cF_n(h_n)(u,v)&=\frac{(1+\kappa_n\gamma_n)^2}{2\sqrt n} \frac{i\Omega(\tfrac{u}{n},\tfrac{v}{n}) \cF_n(f)(u+v)}{(1+\kappa_n\gamma_n) \Lambda(\tfrac{u}{n},\tfrac{v}{n})-i\Omega(\tfrac{u}{n},\tfrac{v}{n})}\\
&= \vphantom{\Bigg(}\frac{(1+\kappa_n\gamma_n)^2}{2\sqrt n}\; \Theta \big( \tfrac{u}{n},\tfrac{v}{n} \big) \, \cF_n(f)(u+v)
\end{split}
\end{equation}
where, for $(k,\ell) \in \big[-\frac{1}{2},\frac{1}{2}\big]^2$,
\begin{align}
\label{eq:lambda}
\Lambda (k,\ell) & := 4  \left[ \sin^2 (\pi k) + \sin^2 (\pi \ell)\right], \\
\Omega (k,\ell)& := 2 \left[ \sin (2\pi k) + \sin (2 \pi \ell)\right],\label{eq:omega}\\
\Theta(k,\ell)&:= \frac{i\Omega(k,\ell)}{(1+\kappa_n\gamma_n)\Lambda(k,\ell)-i\Omega(k,\ell)}.\label{eq:theta}
\end{align}
Let $G_0$ be defined for any $v \in \R$ as 
\[ 
G_0(v):=\frac{1}{2} \big| \pi v\big|^{\frac32} (1+i{\rm sgn}(v)),\]
and let us denote $q:\R \to \R$ the function 
\[ q(u):=\int_\R e^{-2i\pi u v} \; G_0(v) \mc F(f)(v)\; dv = -\frac{1}{4} {\bf L}f(u).\]
The last equality is obtained by computing the Fourier transform of the right hand side and using the inverse Fourier transform. We want to write a similar Fourier identity for $\mc D_n h_n(\frac x n)$.  Let $q_n : {\frac{1}{n}} \Z \to \R$ be the function defined by
\[q_n \big(\tfrac{x}{n}\big) = {{\mc D}}_n h_n \,  \big(\tfrac{x}{n}\big).
\]
Since $\mc F_n(h_n)$ is a symmetric function we can easily see (as in \cite[Lemma D.1]{BGJ}) that
\begin{equation*}
\mc F_n(q_n) (\xi) = -\tfrac{i}{2}  \sum_{x \in \Z} e^{ 2i \pi x\frac{ \xi}{n}} \iint_{[-\frac{n}{2}, \frac{n}{2}]^2} e^{-2i \pi x\frac{ (k+\ell) }{n}}  \Omega \big( \tfrac{k}{n}, \tfrac{\ell}{n}\big)  \mc F_n(h_n) (k,\ell) \, dk d\ell.
\end{equation*}
We use now Lemma \ref{lem:cov} (proved ahead) and the inverse Fourier transform relation to get
\[
\mc F_n(q_n) (\xi) =- \tfrac{in}{2} \int_{[-\frac{n}{2}, \frac{n}{2}]} \Omega  \big( \tfrac{\xi -\ell}{n}, \tfrac{\ell}{n}\big)\;  \mc F_n(h_n) (\xi- \ell,\ell) \, d\ell.
\]
By the explicit expression \eqref{eq:fourierofh} of $\mc F_n(h_n)$ we obtain that
\begin{equation*}
\mc F_n({q_n}) (\xi) = \frac{(1+\gamma_n\kappa_n)^2}{4} \; n^{\frac32}\; K(\tfrac{\xi}{n})\; \mc F_n(f) (\xi),
\end{equation*}
where
\begin{equation}
K(y):=- \int_{[-\frac{1}{2}, \frac{1}{2}]} i\, (\Omega \Theta)\, (y-x,x) dx. 
%=\int_{[-\frac{1}{2}, \frac{1}{2}]} \frac{\Omega^2  ( y-x, x)}{((1+\gamma_n\kappa_n)\Lambda-i\Omega) ( y-x, x)} \, dx. 
\label{eq:defK}
\end{equation}
Therefore, we have proved the identity
\begin{equation*}\mc D_nh_n\big(\tfrac x n \big) = \int_{[-\frac n 2,\frac n 2]} e^{-2i \pi x \frac \xi n}\;  n^{\frac32} G_n\big(\tfrac \xi n\big) \mc F_n(f) (\xi) \; d\xi,\end{equation*}
where $G_n$ is the 1-periodic function defined for any $y \in [-\frac12,\frac12]$ by 
\begin{equation}\label{eq:defgn}
G_n(y)=\frac{(1+\gamma_n\kappa_n)^2}4 \; K(y).
\end{equation}
The proof is now reduced to prove that $n^{\frac32}G_n(\frac \xi n)$ is close to $G_0(\xi)$, in the following sense:

\begin{lem} \label{lem:est}
For any $\xi \in [-\frac n 2,\frac n 2]$, 
$$ \Big|n^\frac32\;G_n\Big(\frac \xi n \Big) - G_0(\xi) \Big| \; \lesssim \;  \gamma_n |\xi|^{\frac32} + \cfrac{|\xi|^2}{\sqrt n}.$$
\end{lem}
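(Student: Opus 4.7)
The plan is to compare $G_n$ with the function $\tilde G_n$ associated to the harmonic case $\gamma_n = 0$ and split the error accordingly. Set $\tilde\Theta(k,\ell):=i\Omega(k,\ell)/(\Lambda(k,\ell)-i\Omega(k,\ell))$, and let $\tilde G_n$ be the function obtained from \eqref{eq:defgn} by replacing $\Theta$ by $\tilde\Theta$ in \eqref{eq:theta} and by setting the prefactor $(1+\gamma_n\kappa_n)^2$ equal to $1$. Then one writes
\begin{equation*}
n^{\frac32}G_n(\xi/n) - G_0(\xi) \;=\; \big[n^{\frac32}G_n(\xi/n) - n^{\frac32}\tilde G_n(\xi/n)\big] + \big[n^{\frac32}\tilde G_n(\xi/n) - G_0(\xi)\big],
\end{equation*}
and it suffices to bound the first bracket by $C\gamma_n|\xi|^{3/2}$ and the second by $C|\xi|^2/\sqrt n$.

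The second (harmonic) bracket is treated exactly as in \cite{BGJ}: after the change of variables $x = u/n$ in \eqref{eq:defK}, one Taylor-expands $\sin(2\pi k)$ and $\sin^2(\pi k)$ around zero inside the integrand, recognises the leading continuous integral as $G_0(\xi)$ (the residue computation at the pole of $1/(2\pi v^2 - i)$ reproduces the factor $\pi^{3/2}(1+i\,\mathrm{sgn}(\xi))/2$), and estimates the remainder coming from the cubic and quartic Taylor errors by $\mc O(|\xi|^2/\sqrt n)$.

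For the first (anharmonic) bracket, the crucial pointwise identity is
\begin{equation*}
\Omega(\Theta-\tilde\Theta) \;=\; \frac{-i\kappa_n\gamma_n\,\Omega^2\,\Lambda}{\big((1+\kappa_n\gamma_n)\Lambda - i\Omega\big)\big(\Lambda-i\Omega\big)}.
\end{equation*}
Combined with $\Lambda \leq |\Lambda - i\Omega|$ and $|\Omega\tilde\Theta| = \Omega^2/|\Lambda - i\Omega|$ it yields $|\Omega(\Theta-\tilde\Theta)| \leq C\gamma_n|\Omega\tilde\Theta|$. Together with $|\Omega\Theta| \leq |\Omega\tilde\Theta|$ (which follows from $|(1+\kappa_n\gamma_n)\Lambda-i\Omega|\ge|\Lambda-i\Omega|$) and $(1+\gamma_n\kappa_n)^2 - 1 = \mc O(\gamma_n)$, this leads to
\begin{equation*}
\big|n^{\frac32}G_n(\xi/n) - n^{\frac32}\tilde G_n(\xi/n)\big| \;\lesssim\; \gamma_n\,\sqrt n\int_{-n/2}^{n/2} \big|\Omega\tilde\Theta\big|\big(\tfrac{\xi-u}{n},\tfrac{u}{n}\big)\,du.
\end{equation*}
The integral on the right is (up to constants) nothing but the modulus of the quantity whose limit was identified as $G_0(\xi)$ in the previous paragraph, so it is $\mc O(|\xi|^{3/2}/\sqrt n)$, and the anharmonic bracket is $\mc O(\gamma_n|\xi|^{3/2})$ as claimed.

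The main technical obstacle is the sharp control of $\int|\Omega\tilde\Theta|\,du$. The trivial pointwise bound $|\tilde\Theta|\leq 1$ produces an integrand of size $|\Omega|$, whose integral is $\mc O(|\xi|)$ and, multiplied by $\sqrt n$, would be off by a factor $\sqrt n$. The correct scaling is obtained by splitting the domain at the crossover $|u|\sim\sqrt{|\xi|n}$ where $\Omega\sim\Lambda$: on $|u|\lesssim\sqrt{|\xi|n}$ one keeps $|\tilde\Theta|\leq 1$ (so $|\Omega\tilde\Theta|\lesssim|\xi|/n$), whereas on $|u|\gtrsim\sqrt{|\xi|n}$ one uses $|\tilde\Theta|\lesssim \Omega/\Lambda\sim|\xi|n/u^{2}$. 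The two contributions balance to give exactly $\mc O(|\xi|^{3/2}/\sqrt n)$.
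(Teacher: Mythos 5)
Your proof is correct but it follows a genuinely different route from the paper's. The paper works directly with the closed-form residue formula for $G_n(y)$ coming from Lemma~\ref{lem:integral_residues}: writing $w=e^{2i\pi y}$, it expresses $G_n(y)$ through $\sqrt{\delta_n(w)}$, separates a uniformly bounded piece times $w(\bar w-1)^2$ (which gives the $\mc O(|y|^2)$ error) from the leading term, and then Taylor-expands the leading term \emph{simultaneously} in $y$ and in $\gamma_n$ via the identities \eqref{eq:tournevis1}--\eqref{eq:tournevis2} and the bound $|e^{i\varphi_n(y)}-e^{i\,\mathrm{sgn}(y)\pi/4}|\lesssim|y|$; the two errors $\gamma_n|\xi|^{3/2}$ and $|\xi|^2/\sqrt n$ emerge from a single algebraic computation in which the anharmonicity and the discretisation are intertwined. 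You instead separate them at the level of the defining integral for $K$: the perturbative identity for $\Omega(\Theta-\tilde\Theta)$, together with $|\Omega\Theta|\le|\Omega\tilde\Theta|$ and $(1+\gamma_n\kappa_n)^2-1=\mc O(\gamma_n)$, cleanly extracts the factor $\gamma_n$, while the residual harmonic limit $n^{3/2}\tilde G_n\to G_0$ with error $\mc O(|\xi|^2/\sqrt n)$ is delegated to the analysis of \cite{BGJ}. This modular decomposition makes the role of $\gamma_n$ completely transparent and reuses the harmonic result instead of rederiving it --- at the cost of needing the sharp estimate $\int|\Omega\tilde\Theta|\,du=\mc O(|\xi|^{3/2}/\sqrt n)$, which you obtain correctly by splitting at the crossover $|u|\sim\sqrt{|\xi|n}$ where $\Omega\sim\Lambda$. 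One minor imprecision worth flagging: the remark that this last integral ``is (up to constants) nothing but the modulus of the quantity whose limit was identified as $G_0(\xi)$'' is not literal, since the integral of the modulus only \emph{dominates} the modulus of the integral; but you immediately acknowledge that a direct estimate is required and supply the crossover argument, so no gap results.
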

Lemma \ref{lem:est} is proved in Section \ref{app:gn}. Let us now prove \eqref{eq:limdh}: we have
\begin{align*}
\mc D_nh_n\big(\tfrac x n\big) + \tfrac 1 4 {\bf L}f\big(\tfrac x n \big) & = q_n\big(\tfrac x n\big)-q\big(\tfrac x n\big) \\
& = \int_{|\xi| \geq \frac n 2} e^{-2i\pi x \frac \xi n} \; G_0(\xi) \; \mc F(f)(\xi)\; d\xi \\
& \quad + \int_{|\xi|\leq \frac n 2} e^{-2i\pi x \frac \xi n} \; G_0(\xi) \; \Big(\mc F(f)(\xi)-\mc F_n(f)(\xi)\Big)\; d\xi \\
& \quad + \int_{|\xi| \leq \frac n 2} e^{-2i\pi x \frac \xi n} \; \Big( G_0(\xi) - n^{\frac32}\; G_n\big(\tfrac \xi n\big)\Big) \; \mc F_n(f)(\xi)\; d\xi.
\end{align*}
Therefore, we bound from the Parseval-Plancherel identity as
\begin{align}
\tfrac 1 n \sum_{x\in\Z} \big[q\big(\tfrac x n\big)-q_n\big(\tfrac x n\big) \big]^2 & \leq \tfrac 1 n \sum_{x\in\Z} \bigg|\int_{|\xi| \geq \frac n 2} e^{-2i\pi x \frac \xi n} \; G_0(\xi) \; \mc F(f)(\xi)\; d\xi\bigg|^2 \vphantom{\Bigg(} \label{eq:bb1}\\
& \quad +   \int_{|\xi|\leq \frac n 2}  \big| G_0(\xi)\big|^2  \; \big|\mc F(f)(\xi)-\mc F_n(f)(\xi)\big|^2\; d\xi 
  \vphantom{\Bigg(}  \label{eq:bb2} \\
  & \quad +  \int_{|\xi| \leq \frac n 2}  \Big| G_0(\xi) - n^{\frac32}\; G_n\big(\tfrac \xi n\big)\Big|^2 \; \big|\mc F_n(f)(\xi)\big|^2\; d\xi.
  \vphantom{\Bigg(}  \label{eq:bb3}
\end{align}
We treat each term \eqref{eq:bb1}, \eqref{eq:bb2} and \eqref{eq:bb3} separately. For the first one \eqref{eq:bb1}, we perform an integration by parts and then we use two facts: first the Fourier transform $\mc F(f)$ of $f$ is in the Schwartz space, and second, the functions $G_0$ and $G_0'$ grow at most polynomially. This implies that \eqref{eq:bb1} is bounded by a constant times
\[
\frac 1 n \sum_{x\in\Z} \frac{n^2}{|x|^2}\Big\{   \big| G_0\; \mc F(f) \big|^2 \big(\pm \tfrac n 2 \big) + \Big| \int_{|\xi| \geq \frac n 2} \big| \tfrac{d}{d\xi}[G_0\; \mc F(f)](\xi) \big| \; d\xi \Big|^2 \Big\} \; \lesssim \; \; n^{-p},   \] from some $p>0$ and it vanishes as $n\to \infty$. The second term \eqref{eq:bb2} can be bounded from above by 
\[
\int_{[-\frac n 2,\frac n 2]} |\xi|^3 \; \big|\mc F(f)(\xi)-\mc F_n(f)(\xi)\big|^2\; d\xi. 
\]
Performing the change of variables $y=\frac \xi n$ and using the fact that $f$ is in the Schwartz space (together with Lemma \ref{lem:sfp}), we can prove that \eqref{eq:bb2} does not contribute to the limit $n\to +\infty$. 
Finally, the contribution of \eqref{eq:bb3} is estimated by using Lemma \ref{lem:est}: it is bounded by a sum of terms of the form
\[\frac{1}{n^{\beta}} \int_{[-\frac n 2,\frac n 2]} |\xi|^\alpha \; \big|\mc F_n(f)(\xi)\big|^2\; d\xi \; \lesssim \; \frac{1}{n^\beta} \int_{[-\frac n 2,\frac n 2]} \frac{|\xi|^\alpha}{(1+|\xi|^p)^2}\; d\xi,\]
for some $\alpha,\beta >0$. The last inequality above is a consequence of Lemma \ref{lem:sfp}, and therefore it vanishes as $n \to + \infty$ after choosing $p$ such that $2p > \alpha +1$.

\subsection{Proof of Proposition \ref{prop_for_q_2}: the dynamical argument}
 \label{subsec:blabla1}

As in \cite{BGJ} the idea of the proof consists in using once again the differential equation \eqref{eq:Q1prime} after solving a new Poisson equation, with a different right hand side. For that purpose,  let $ v_n:\frac{1}{n}\Z\times \frac{1}{n}\Z \to \R$ be the symmetric solution of 
\begin{equation} \label{eq:v}
\mathbf{L}_n v_n= 2n^{-\frac12}\widetilde \cD_nh_n, \ \text{ with } \ \mathbf{L}_n={n}^\frac12 (1+\kappa_n \gamma_n) A_n + n^{-\frac12}\Delta_n.
\end{equation}
From \eqref{eq:Q1prime}, the term that we want to estimate is now equal to 
\begin{equation*}
\begin{split}
 \int_0^T \bQ_t^{2,n}\big(n^{-\frac{1}{2}}\widetilde{\cD}_n h_n\big) dt =& \bQ_T^{2,n}(v_n) - \bQ_0^{2,n}(v_n) + 4 \int_0^T\bE_t^n(\cD_nv_n)dt\\
  +&  \gamma_n\int_0^T \bE_t^{4,n}(\cD_n v_n) dt
 - 2 \int_0^T\bQ_t^{2,n}\big(n^{-\frac12} \widetilde\cD_nv_n\big)dt \\
 -& 2\gamma_n \int_0^T\bQ_t^{4,n}(n B_n v_n) dt- \gamma_n\kappa_n \int_0^T\bQ_t^{2,n}(\sqrt n\, \nabla_n v_n)dt.
\label{eq:dyna}
 \end{split}
 \end{equation*}
%It seems that there is no such a gain by doing this replacement, as we now have seven terms to estimate instead of one. 
It turns out that the work becomes easier, since the rough Cauchy-Schwarz inequality will be enough to control all these terms but one, thanks to the following proposition whose proof is given in Appendix \ref{sec:poisson}.

\begin{prop}[$\ell^2$-norms involving the solution of the Poisson equation \eqref{eq:v}]\label{prop:estimatevn}
If $v_n:\frac1n \Z \times \frac1n \Z \to \R$ is the symmetric solution to \eqref{eq:v}, then
\begin{align}
&\sum_{x,y\in\bb Z}v_n^2\Big(\frac{x}{n},\frac{y}{n}\Big) = \mc O(\sqrt n),\label{eq:est0-vn}\\
&\sum_{x\in\Z}v_n^2\Big(\frac{x}{n},\frac{x}{n}\Big) =\mc O(1),\label{eq:est0x-vn}\\
&\sum_{x\in\Z}\Big[\cD_nv_n\Big(\frac{x}{n}\Big)\Big]^2=\mc O(1),\label{eq:est1-vn}\\
&\sum_{x\in\Z}\Big[v_n\Big(\frac{x+1}{n},\frac{x+1}{n}\Big)- v_n\Big(\frac{x}{n},\frac{x}{n}\Big)\Big]^2=\mc O(n^{-2}),\label{eq:est4-vn}\\
& \sum_{x\in\Z}\Big[\widetilde\cD_n v_n\Big(\frac{x}{n},\frac{x+1}{n}\Big)\Big]^2 = \mc O(n^{2}).\label{eq:est5-vn}
\end{align}
\end{prop}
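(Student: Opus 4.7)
The strategy is a direct analogue of the Fourier method already used to establish Proposition \ref{prop:estimate2} (cf.~the explicit computation of $\cF_n(h_n)$ in \eqref{eq:fourierofh}). Since $v_n$ solves $\mathbf{L}_n v_n = 2n^{-1/2}\widetilde{\cD}_n h_n$ with the \emph{same} operator $\mathbf{L}_n$, Fourier inversion on $[-\tfrac{n}{2},\tfrac{n}{2}]^2$ gives
\begin{equation*}
\cF_n(v_n)(u,v) \;=\; \frac{2 n^{-1/2}\,\cF_n(\widetilde{\cD}_n h_n)(u,v)}{(1+\kappa_n\gamma_n)\Lambda(u/n,v/n) - i\,\Omega(u/n,v/n)}.
\end{equation*}
The first step is therefore to compute $\cF_n(\widetilde{\cD}_n h_n)$. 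Since $\widetilde{\cD}_n h_n$ is supported on the two off-diagonals $\{y = x\pm 1\}$ and is built from first-order differences of $h_n$ in its second variable, a direct expansion using the symmetry of $h_n$ yields, after rearrangement, an expression of the form
\begin{equation*}
\cF_n(\widetilde{\cD}_n h_n)(u,v) \;=\; n\,\Xi_n(u/n,v/n)\,\cF_n(h_n\circ \text{diag})(u+v)
\end{equation*}
where $\Xi_n$ is a smooth trigonometric kernel behaving like $\mathcal{O}(|(k,\ell)|)$ near the origin (it is a discrete gradient restricted to the diagonal). Combined with the explicit formula \eqref{eq:fourierofh}, this yields an explicit expression
\begin{equation*}
\cF_n(v_n)(u,v) \;=\; \frac{1}{n}\,\Psi_n(u/n,v/n)\,\cF_n(f)(u+v),
\end{equation*}
with $\Psi_n$ bounded uniformly in $n$ away from the origin and with the same critical singular behavior at $(0,0)$ as $\Theta$, but damped by the extra factor coming from $\Xi_n$.

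The second step is to convert each $\ell^2$-bound into a Fourier integral using Parseval-Plancherel, together with the reduction of double-to-single integrals via Lemma \ref{lem:cov} (used in Section \ref{ssec:main}) whenever the sum is over the diagonal or over a single-index quantity like $\mathcal{D}_n v_n$ or $v_n(x/n,(x\pm 1)/n) - v_n(x/n,x/n)$. Concretely, \eqref{eq:est0-vn} reduces to a two-dimensional integral of $|\Psi_n(k,\ell)|^2 |\cF_n(f)(n(k+\ell))|^2$; the extra $1/n^2$ prefactor compared to the $h_n$ case (whence the gain $n^{3/2} \to n^{1/2}$) comes from the fact that $\Psi_n$ is more regular at the origin than $\Theta$. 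For \eqref{eq:est0x-vn}, \eqref{eq:est1-vn}, \eqref{eq:est4-vn} one performs the partial sum along the diagonal, which produces a marginal one-dimensional integral in $\xi = u+v$; the various gradients ($\cD_n v_n$, discrete derivative of the diagonal values) translate into additional factors of $\sin(\pi \xi/n)$ which provide enough decay at infinity to make the integrals convergent and to pin down the correct power of $n$. For \eqref{eq:est5-vn}, note that $\widetilde{\cD}_n v_n$ restricted to $y = x+1$ is exactly of the same form as the forcing term in the Poisson equation solved by $v_n$, so one can either iterate the Fourier calculation or bound it using Cauchy--Schwarz against the already-established \eqref{eq:est0-vn} with the extra $n^2$ factor from the discrete second derivative.

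The third and last step is to check that the ensuing Fourier integrals have the claimed orders. Here one uses Lemma \ref{lem:sfp}-type decay estimates on $\cF_n(f)$ (which follows from $f\in\cC_c^\infty$), the polynomial singularity of $\Psi_n$ near the origin, and separation of the integration domain into a neighborhood of $(0,0)$ (where the singular but integrable behavior dominates) and its complement (where everything is bounded and the decay of $\cF_n(f)$ yields fast convergence).

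\textbf{Main obstacle.} The principal difficulty is the precise computation and analysis of the kernel $\Xi_n$ (equivalently $\Psi_n$): one must correctly identify its vanishing order at the origin to obtain the right power of $n$ in each bound, and one must carefully handle the diagonal-vs-off-diagonal distinction in $\widetilde{\cD}_n$. In particular, the estimates \eqref{eq:est0x-vn}, \eqref{eq:est1-vn}, \eqref{eq:est4-vn} require that one extracts a genuinely one-dimensional Fourier integral from an a priori two-dimensional one, and the gains $\mathcal{O}(1)$ and $\mathcal{O}(n^{-2})$ (rather than $\mathcal{O}(n)$ and $\mathcal{O}(n^{-1})$ as in Proposition \ref{prop:estimate2}) are the quantitative manifestation of the extra smoothness of $v_n$ compared to $h_n$; tracking this improvement through the Fourier computation is where the bookkeeping is most delicate.
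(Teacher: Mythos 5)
Your overall plan — invert the Fourier symbol of $\mathbf{L}_n$ to compute $\cF_n(v_n)$ explicitly, convert each $\ell^2$-sum into a Fourier integral via Parseval--Plancherel and Lemma~\ref{lem:cov}, and then estimate using the decay of $\cF_n(f)$ — is exactly the paper's strategy. However, two substantive issues keep this from being a proof. First, the claimed factorization $\cF_n(\widetilde{\cD}_n h_n)(u,v) = n\,\Xi_n(u/n,v/n)\,\cF_n(h_n\circ\mathrm{diag})(u+v)$ with $\Xi_n=\mc O(|(k,\ell)|)$ near the origin is structurally wrong. The correct factorization is $\cF_n(\widetilde{\cD}_n h_n)(u,v) = n\big(e^{2i\pi u/n}+e^{2i\pi v/n}\big)\cF_n(w_n)(u+v)$, where $w_n(\tfrac x n):=h_n(\tfrac x n,\tfrac{x+1}{n})-h_n(\tfrac x n,\tfrac x n)$ as in \eqref{eq:w}: the one-variable Fourier factor involves the diagonal \emph{gradient} of $h_n$, not $h_n$ restricted to the diagonal, and the remaining factor $e^{2i\pi u/n}+e^{2i\pi v/n}$ does \emph{not} vanish at the origin (it tends to $2$ there). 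The smallness you attribute to $\Xi_n$ actually resides in $\cF_n(w_n)(\xi)=-\tfrac{\sqrt n}{2}L(\tfrac\xi n)\cF_n(f)(\xi)$ (cf.~\eqref{eq:fourierofw}), with $L$ as in \eqref{eq:defL}, and it depends only on $\xi=u+v$. Relatedly, your $\cF_n(v_n)$ carries a prefactor $1/n$ where the correct formula \eqref{eq:fourierofv_new} carries $1/\sqrt n$.

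Second, and more seriously, the argument never establishes the quantitative kernel bounds that produce the precise powers of $n$ in \eqref{eq:est0-vn}--\eqref{eq:est5-vn}. After the Fourier reduction, each of the five sums is controlled by a one-dimensional integral against one or two of the auxiliary kernels $L$, $M$, $N$, $O$ of \eqref{eq:defL}, \eqref{eq:defM}, \eqref{eq:defN}, \eqref{eq:defO}, and everything hinges on the residue-theorem estimates of Lemma~\ref{lem:integral_residues}: $|L(y)|\lesssim|\sin(\pi y)|^{3/2}$, $|M(y)|\lesssim|\sin(\pi y)|^{-1/2}$, $|N(y)|\lesssim|\sin(\pi y)|^{1/2}$, $|O(y)|\lesssim|\sin(\pi y)|^{-1/2}$. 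The remark that ``the various gradients translate into additional factors of $\sin(\pi\xi/n)$'' does not pin these orders down, and without them the stated bounds cannot be extracted. Also, the Cauchy--Schwarz shortcut you propose for \eqref{eq:est5-vn} does not give the right order: $\widetilde{\cD}_n v_n$ carries a factor $n^2$, so bounding the near-diagonal sum $\sum_x v_n^2(\tfrac x n,\tfrac{x\pm1}{n})$ by $\mc O(1)$ (analogously to \eqref{eq:est0x-vn}) yields at best $\mc O(n^4)$, not $\mc O(n^2)$; the Fourier computation producing the combined gain $|L(y)N(y)|\lesssim|\sin(\pi y)|^2$ is unavoidable there as well.
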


From \eqref{eq:CS-Q} and  \eqref{eq:est0-vn} we get that
\[
\Big|\bQ_T^{2,n}(v_n)\Big|   \xrightarrow[n\to\infty]{} 0, \qquad
\Big|\bQ_0^{2,n}(v_n)\Big|  \xrightarrow[n\to\infty]{} 0, \] independently of $\gamma_n$. From  \eqref{eq:CS-S4} and  \eqref{eq:est1-vn} we get that
\[
{\color{red}\gamma_n}\int_0^T \Big|\bE_t^{4,n}(\mathcal D_nv_n)\Big| dt  =\mc O(\gamma_n n^{-\frac12}). \]
Analogously, from \eqref{eq:CS-Q} and  \eqref{eq:est5-vn} we have that

\[
\Big|\bQ_T^{2,n}(n^{-\frac 12}\widetilde{\mathcal{D}}_n v_n)\Big|   \xrightarrow[n\to\infty]{} 0, \] independently of $\gamma_n$. Finally, from  \eqref{eq:CS-Q} and \eqref{eq:est4-vn} we have that
\[
\Big|\gamma_n\bQ_T^{2,n}(\sqrt n\, \nabla_n v_n)\Big|   \xrightarrow[n\to\infty]{} 0, \] independently of $ \gamma_n$.  
Moreover, we prove below in Section \ref{sec:new_estimate} the following

\begin{prop} \label{prop_for_q_4_new} { For $v_n$ solution of \eqref{eq:v}, we have that }
\begin{equation}\label{eq:new_estimate}
 \lim_{n\to\infty} \bb E\bigg[\bigg(\gamma_n \int_0^T \bQ_t^{4,n}\big(  n B_n v_n\big) dt \bigg)^2\bigg] =0.
\end{equation}
\end{prop}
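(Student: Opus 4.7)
The plan is to follow the same Cauchy--Schwarz / Kipnis--Varadhan / $\mathcal{H}_{-1,z}$-norm scheme used throughout this section. Since $\bQ_t^{4,n}(nB_nv_n)$ is a deterministic number, the outer $\bb E[\cdot]$ is cosmetic, so it suffices to show that $\gamma_n \int_0^T \bQ_t^{4,n}(nB_nv_n)\,dt$ tends to $0$. Using $\omega_x^3 - \kappa_n\omega_x = H_3(\omega_x)$ and $\omega_y = H_1(\omega_y)$, I would express the integrand as $\bb E[\cE_0^n(g)\, W_n(\omega(tn^a))]$ with
\[
W_n(\omega) = \sum_{x \neq y} F_n(x, y)\, H_3(\omega_x) H_1(\omega_y), \qquad F_n(x, y) := (B_n v_n)(x/n, y/n)\,\mathbf{1}_{x\neq y}.
\]
A Cauchy--Schwarz inequality (using the uniform bound $\|\cE_0^n(g)\|_{\mathbb{L}^2(\bb P)} \leq C(g)$), the Kipnis--Varadhan estimate \eqref{eq:estim}, and Lemma \ref{lem:two} with $(p, q) = (3, 1)$ then yield
\[
\Big|\gamma_n \int_0^T \bQ_t^{4,n}(nB_nv_n)\,dt\Big|^2 \lesssim \gamma_n^2\, n^{-a} \iint_{[-\frac12, \frac12]^2} \frac{|\widehat{F}_n(k, \ell)|^2}{n^{-a}T^{-1} + \Lambda(k, \ell)}\,dk\,d\ell,
\]
where $\Lambda(k, \ell) = 4\sin^2(\pi k) + 4\sin^2(\pi \ell)$.

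The core computation is then the explicit Fourier transform of $F_n$. From the definition \eqref{eq:Bnh} of $B_n$, splitting $F_n$ into a ``discrete gradient in $x$'' part and a ``nearest-neighbor diagonal indicator'' part, using standard Fourier shifts and the symmetry $v_n(x/n, y/n) = v_n(y/n, x/n)$, and finally subtracting the diagonal contribution to enforce $F_n(x, x) = 0$ as required by Lemma \ref{lem:two}, I expect an identity of the schematic form
\[
\widehat{F}_n(k, \ell) = 2i\sqrt n\, \sin(2\pi k)\bigl[\widetilde v_n(k, \ell) - \widetilde D_n(k + \ell)\bigr] + R_n(k, \ell),
\]
where $\widetilde v_n(k, \ell) := \sum_{x, y \in \Z} v_n(x/n, y/n) e^{2i\pi(kx + \ell y)}$, $\widetilde D_n(\xi) := \sum_{x \in \Z} v_n(x/n, x/n) e^{2i\pi x\xi}$ is the Fourier series of the diagonal of $v_n$, and $R_n$ is the diagonal-enforcement correction, which also carries a sine prefactor.

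The decisive observation is the pointwise bound $\sin^2(2\pi k) = 4\sin^2(\pi k)\cos^2(\pi k) \leq \Lambda(k, \ell)$, together with $\sin^2(\pi(k+\ell)) \leq \Lambda(k, \ell)/2$ for the correction $R_n$, which cancels the denominator completely. Applying Parseval to $\widetilde v_n$ and, after the change of variable $\xi = k + \ell$ combined with the $1$-periodicity of $\widetilde D_n$, to $\widetilde D_n$ too, the remaining Fourier integral is controlled by
\[
n \Big(\sum_{x, y \in \Z} v_n^2(x/n, y/n) + \sum_{x \in \Z} v_n^2(x/n, x/n)\Big) \;=\; \mathcal{O}(n \cdot \sqrt n) \;=\; \mathcal{O}(n^{3/2})
\]
by estimates \eqref{eq:est0-vn} and \eqref{eq:est0x-vn} of Proposition \ref{prop:estimatevn}. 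With $a = 3/2$ the total bound reduces to $\gamma_n^2 \cdot n^{-3/2} \cdot n^{3/2} = \gamma_n^2 \to 0$. The main obstacle is the careful computation of $\widehat{F}_n$ and the verification that the diagonal-enforcement correction $R_n$ is of the same order; once this is in place, the cancellation provided by the $\sin(2\pi k)$ factor inherited from $B_n$ is what makes no further smallness condition on $\gamma_n$ necessary, in sharp contrast with Proposition \ref{prop_for_q_4} where $h_n$ is much larger and forces $\gamma_n = o(n^{-1/4})$.
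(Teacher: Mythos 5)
Your proposal is correct and takes a genuinely different route from the paper's. Both proofs start from the Kipnis--Varadhan bound and Lemma~\ref{lem:two}, but from there they diverge. The paper computes $\widehat\Psi_n$ explicitly by inserting the closed-form Fourier transform~\eqref{eq:fourierofv_new} of $v_n$, factors out $n^2 L(k+\ell)\,\cF_n(f)(n(k+\ell))$, and is left with a remainder $\widetilde R_n$ built from the residue-type integrals $O$ and $M$; it then invokes the pointwise bounds of Lemma~\ref{lem:integral_residues} on $L,O,M$ and finishes with Lemma~\ref{lem:sfp}. You instead exploit only the \emph{algebraic} structure of $B_n$: the discrete gradient in the first variable produces a $\sin(2\pi k)$ prefactor in $\widehat F_n$, while the diagonal-enforcement correction (which, using the symmetry of $v_n$, can be written as a discrete gradient along the first off-diagonal) produces a $\sin(\pi(k+\ell))$ prefactor. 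The elementary inequalities $\sin^2(2\pi k)\le 4\sin^2(\pi k)\le\Lambda(k,\ell)$ and $\sin^2(\pi(k+\ell))\le 2\sin^2(\pi k)+2\sin^2(\pi\ell)=\Lambda(k,\ell)/2$ then cancel the denominator entirely, and Parseval reduces everything to $\sum_{x,y} v_n^2 = \mc O(\sqrt n)$ and the diagonal/off-diagonal sums, all controlled by Proposition~\ref{prop:estimatevn}. (For the correction term you actually need $\sum_x v_n^2\big(\tfrac{x}{n},\tfrac{x+1}{n}\big)$, not the diagonal sum \eqref{eq:est0x-vn}, but it is trivially dominated by \eqref{eq:est0-vn} so the count is unchanged.) What each buys: your route is shorter, avoids the residue computations for this step, and makes the mechanism (the sine cancellation built into $B_n$) transparent; the paper's route uses the full frequency localization contained in $|L(\xi)|^2\lesssim|\sin(\pi\xi)|^3$ and obtains the sharper bounds $\gamma_n^2/\sqrt n$ and $\gamma_n^2/n$, whereas your Parseval step only gives $\gamma_n^2$. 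Since the model always has $\gamma_n\to 0$, both bounds suffice, so the loss of sharpness is immaterial here; your observation that no extra smallness condition on $\gamma_n$ is needed, in contrast with Proposition~\ref{prop_for_q_4}, is correct under either argument.
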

This ends the proof of Proposition \ref{prop_for_q_2}.

\subsection{Proof of Proposition \ref{prop_for_q_4}}
\label{subsec:blabla2}
Thanks to the Kipnis-Varadhan inequality \eqref{eq:estim} together with Lemma \ref{lem:two} we have reduced the problem to control the $\bb L^2(\bb P)$-norm of  \eqref{eq:decomp6} to estimating the behavior w.r.t.~$n$ of the integral  \eqref{eq:h-1-two} with $z= n^{-\frac{3}{2}}$ and $\varphi$ suitably defined.  We can make a change of variables in \eqref{eq:h-1-two} in order to get: 
\[
\frac{1}{n^2}\iint_{[-\frac{n}2,\frac{n}2]^2}\; \frac{|{\widehat{\varphi}}(\tfrac{u}{n},\tfrac{v}{n})|^2}{ z +  4\sin^{2} (\pi \tfrac{u}{n}) + 4 \sin^{2} (\pi \tfrac{v}{n}) } dudv.
\]
Now, we observe that
\[\bQ_t^{4,n}\big(  2 n B_n h_n-  2(1+\gamma_n\kappa_n) \nabla_n f \otimes \delta\big)={\mathbb E} \bigg[ {\mc E}_0^n (g) \; \bigg\{   \sum_{x\neq y} \Phi_n(x,y) H_{3\delta_x+\delta_y}(\omega(tn^{\frac32})) \bigg\} \bigg]  \]
where 
\begin{multline*}
\Phi_n(x,y)=\\ \begin{cases} 0 & \text{ if } y=x,\\ \displaystyle \vphantom{\frac12} 2\sqrt{n}\; h_n\big(\tfrac{x-1}{n},\tfrac{x+1}{n}\big)  -n\; (1+\gamma_n\kappa_n)\big(f\big(\tfrac{x+1}{n}\big)-f\big(\tfrac{x}{n}\big)\big)   & \text{ if } y=x+1,\\
 \displaystyle  \vphantom{\frac12}-2\sqrt{n}\; h_n\big(\tfrac{x+1}{n},\tfrac{x-1}{n}\big)-n\;(1+\gamma_n\kappa_n) \big(f\big(\tfrac{x}{n}\big)-f\big(\tfrac{x-1}{n}\big)\big)   & \text{ if } y=x-1,\\
  \displaystyle \vphantom{\frac12}2 \sqrt{n}\big(h_n\big(\tfrac{x-1}{n},\tfrac{y}{n}\big) - h_n\big(\tfrac{x+1}{n},\tfrac{y}{n}\big) \big) & \text{ otherwise.} 
\end{cases}
\end{multline*}
Then, compute the Fourier transform: for  $(k,\ell) \in \big[-\tfrac{1}{2},\tfrac{1}{2}\big]^2$,
\begin{align*}
\widehat{\Phi}_n(k,\ell)& = \sum_{(x,y)\in\Z^2} \Phi_n(x,y) e^{2i\pi(kx+\ell y)}\\
& = -(1+\gamma_n\kappa_n) n \sum_{x\in\Z} \big[f\big(\tfrac{x+1}{n}\big)-f\big(\tfrac{x}{n}\big)\big] e^{2i\pi x(k+\ell)} \big(e^{2i\pi\ell}+e^{2i\pi k}\big)\\
& \quad + 2n^{\frac{1}{2}}\; \sum_{j\geq 1} \sum_{u\in\Z}  h_n\big(\tfrac{u}{n},\tfrac{u+j}{n}\big) e^{2i\pi (k+\ell) u} \big(e^{2i\pi k}-e^{-2i \pi k}\big) \big(e^{2i\pi \ell j}+e^{2i\pi k j}\big)\\
& \quad - 2 n^{\frac{1}{2}}\; \sum_{u\in\Z} h_n\big(\tfrac{u}{n},\tfrac{u+1}{n}\big) e^{2i\pi u (k+\ell)} \big(e^{2i \pi (k+\ell)}-1\big).
\end{align*}
We have, for any $m\in\Z$, (see Lemma \ref{lem:cov} below)
\[
\sum_{u\in\Z} h_n\big(\tfrac{u}{n},\tfrac{u+m}{n}\big) e^{2i\pi(k+\ell)u} = n\int_{[-\frac{n}{2},\frac{n}{2}]} \cF_n(h_n)(n(k+\ell)-u,u)e^{-2i\pi m \tfrac{u}{n}} \; du.
\]
Therefore,  we have
\begin{align*}
\widehat{\Phi}_n(k,\ell)  =&  (1+\gamma_n\kappa_n) \; n^2 \cF_n(f)(n(k+\ell)) \; i\Omega(k,\ell) \\
&  + 2n^\frac32 \big( e^{2i\pi k}-e^{-2i \pi k} \big) \int_{[-\frac{n}{2},\frac{n}{2}]} \cF_n(h_n)(n(k+\ell)-u,u) \\
& \qquad \qquad \qquad \qquad \qquad \qquad \times \sum_{j\geq 1}\big[ e^{-2i\pi j \big(\tfrac{u}n - \ell\big)}+ e^{-2i\pi j\big(\tfrac{u}n -k\big)} \big]\; du\\
&  - 2n^\frac32 \big(e^{2i\pi(k+\ell)}-1\big)\int_{[-\frac{n}{2},\frac{n}{2}]} \cF_n(h_n)(n(k+\ell)-u,u) e^{-2i\pi\tfrac{u}n} \; du.
\end{align*}
Observe that $\cF_n(h_n)(\xi-u,u) = \cF_n(h_n)(u,\xi-u)$ for any $\xi,u \in \R$. Therefore, one can perform the change of variables $v=n(k+\ell) -u$ in one of the integrals in the second term above, and one gets
\begin{align*}
\widehat{\Phi}_n&(k,\ell)  = (1+\gamma_n\kappa_n) \; n^2 \cF_n(f)(n(k+\ell)) \; i\Omega(k,\ell) \\
&  + 2n^\frac32 \big( e^{2i\pi k}-e^{-2i \pi k} \big) \int_{[-\frac{n}{2},\frac{n}{2}]} \cF_n(h_n)(n(k+\ell)-u,u) \times 2 \text{Re}\left[ \Big(e^{2i\pi  \big(\tfrac{u}n - \ell\big)}-1\Big)^{-1} \right]\; du\\
&  - 2n^\frac32 \big(e^{2i\pi(k+\ell)}-1\big)\int_{[-\frac{n}{2},\frac{n}{2}]} \cF_n(h_n)(n(k+\ell)-u,u) e^{-2i\pi\tfrac{u}n} \; du.
\end{align*}
Note that
\[
2 \text{Re}\left[ \Big(e^{2i\pi  \big(\tfrac{u}n - \ell\big)}-1\Big)^{-1} \right]=-1.
\]
Using \eqref{eq:fourierofh}, we obtain
\[ \widehat{\Phi}_n(k,\ell) = n^2 (1+\gamma_n\kappa_n)  \cF_n(f)(n(k+\ell))   R_n(k,\ell) \]
where
\[R_n(k,\ell)=i\Omega(k,\ell) + (1+\gamma_n\kappa_n) \;  \; \Big\{ I(k+\ell) \big(e^{-2i\pi k}-e^{2i \pi k}\big) + J(k+\ell)\Big\}\]
and  $I,J$ are defined as (see \eqref{eq:theta} for the definition of $\Theta$) 
\begin{align}
\label{eq:defI}
I(y)&:=\int_{[-\frac 1 2, \frac 1 2]} \Theta(y-x,x)\; dx,\\
%I(y)&:=\int_{[-\frac12,\frac12]}\frac{i\Omega(y-x,x)}{((1+\gamma_n\kappa_n)\Lambda -i\Omega)(y-x,x)}\ dx, \label{eq:defI}\\
J(y)&:=\int_{[-\frac12,\frac12]}\Theta(y-x,x) e^{-2i\pi x}(1-e^{2i\pi y})\ dx. \label{eq:defJ}
\end{align}
%Lemma \ref{lem:integral_residues} proved in Appendix \ref{app:integral} ensures that
Finally, from \eqref{eq:estim} and Lemma \ref{lem:two}, the square of the $\mathbb{L}^2(\mathbb{P})$-norm of 
\[
  \gamma_n \int_0^T \bQ_t^{4,n}\big(  n B_n h_n- (1+\gamma_n\kappa_n) \nabla_n f \otimes \delta\big) dt.
\]
is bounded from above by (recall that $z=n^{-\frac32}$)
\begin{align}
&C \gamma_n^2\; n^{\frac52}\; \iint_{[-\frac{1}{2},\frac{1}{2}]^2} \frac{|R_n(k,\ell)|^2}{z+\sin^2(\pi k)+\sin^2(\pi \ell)}\big|\cF_n(f)(n(k+\ell))\big|^2 \; dk d\ell \notag\\ = \; & C \gamma_n^2\; n^{\frac52} \int_{[-\frac{1}{2},\frac{1}{2}]} U_n(\xi) \big|\cF_n(f)(n\xi)\big|^2 \; d\xi,  \label{eq:toreplace}
\end{align}
where
\[U_n(\xi):=\int_{[-\frac{1}{2},\frac{1}{2}]} \frac{|R_n(k,\xi -k)|^2}{z+\sin^2(\pi k)+\sin^2(\pi (\xi-k))} \; dk.  \]
We can bound $|R_n(k,\xi-k)|$ from above as follows:
\[|R_n(k,\xi -k)|^2\leq 2|\Omega(k,\xi-k)|^2 + 2(1+\gamma_n\kappa_n)^2 \big| 2i\sin(2\pi k) I(\xi)  + J(\xi)\big|^2. \]
Let us first estimate the contribution coming from $\Omega(k,\xi-k)$, namely: 
\begin{align*}
\int_{[-\frac{1}{2},\frac{1}{2}]} \frac{|\Omega (k,\xi-k)|^2 }{z+ \sin^2 (\pi k) + \sin^2 (\pi (\xi-k) ) }\; \; dk 
& \le  \int_{[-\frac{1}{2},\frac{1}{2}]} \frac{4 \sin^2 (\pi \xi)}{z+ \sin^2 (\pi k) + \sin^2 (\pi (\xi-k) ) }\; \; dk\\
& \le 4 n^{\frac32} \sin^2 (\pi \xi).
\end{align*}
Therefore, the contribution of this term in \eqref{eq:toreplace} gives
\[
C\gamma_n^2  \int_{[-\frac{1}{2},\frac{1}{2}]}  n^4\sin^2 (\pi \xi)  \big|\mc F_n(f)(n\xi)\big|^2\; d\xi
\] and by Lemma \ref{lem:sfp} it vanishes since $n \gamma_n^2 \to 0$. 
Moreover, we now use Lemma \ref{lem:integral_residues} in order to estimate the remaining contribution, namely
\begin{align*}
\int_{[-\frac{1}{2},\frac{1}{2}]} \frac{\big| 2i\sin(2\pi k) I(\xi)  + J(\xi)\big|^2}{z+ \sin^2 (\pi k) + \sin^2 (\pi (\xi-k) ) }\;  dk & \le C \int_{[-\frac{1}{2},\frac{1}{2}]} \frac{ \sin^2(2\pi k) |I(\xi)|^2  + |J(\xi)|^2}{z+ \sin^2 (\pi k) + \sin^2 (\pi (\xi-k) ) }\; dk \\
& \leq C  \int_{[-\frac{1}{2},\frac{1}{2}\big]} \frac{ \sin^2(2\pi k) |\sin(\pi \xi)| + |\sin(\pi \xi)|^3}{z+ \sin^2 (\pi k) + \sin^2 (\pi (\xi-k) ) }\;  dk \\
&\le C\Big[  |\sin(\pi \xi)| + n^{\frac32} |\sin(\pi \xi)|^3\Big].
\end{align*}
Therefore, the contribution of this term in \eqref{eq:toreplace} gives
\[
C\gamma_n^{2} n^{\frac52}  \int_{[-\frac{1}{2},\frac{1}{2}]}  \Big[  |\sin(\pi \xi)| + n^{\frac32} |\sin(\pi \xi)|^3\Big] \big|\mc F_n(f)(n\xi)\big|^2\; d\xi
\] and by Lemma \ref{lem:sfp} it vanishes if $\sqrt{n} \gamma_n^2 \to 0$. 
This ends the proof of Proposition \ref{prop_for_q_4}.

\subsection{Proof of Proposition \ref{prop_for_q_4_new}}
\label{sec:new_estimate}
Here we follow closely the proof of Proposition \ref{prop_for_q_4} and for that reason we skip some steps of it. First we observe that
\[\bQ_t^{4,n}\big(  n B_n v_n\big)={\mathbb E} \bigg[ {\mc E}_0^n (g) \; \bigg\{   \sum_{x\neq y} \Psi_n(x,y) H_{3\delta_x+\delta_y}(\omega(tn^{\frac32})) \bigg\} \bigg]  \]
where 
\[
\Psi_n(x,y)= \begin{cases} 0 & \text{ if } y=x,\\  \displaystyle \vphantom{\frac12}  \sqrt n\; v_n\big(\tfrac{x-1}{n},\tfrac{x+1}{n}\big)    & \text{ if } y=x+1,\\
 \displaystyle\vphantom{\frac12} -\sqrt n\; v_n\big(\tfrac{x+1}{n},\tfrac{x-1}{n}\big)  & \text{ if } y=x-1,\\
  \displaystyle\vphantom{\frac12}  \sqrt n\big(v_n\big(\tfrac{x-1}{n},\tfrac{y}{n}\big) - v_n\big(\tfrac{x+1}{n},\tfrac{y}{n}\big) \big) & \text{ otherwise.} 
\end{cases}
\]
The Fourier transform of $\Psi$ is given on  $(k,\ell) \in \big[-\tfrac{1}{2},\tfrac{1}{2}\big]^2$ by
\begin{align*}
\widehat{\Psi}_n(k,\ell)& = \sum_{(x,y)\in\Z^2} \Psi_n(x,y) e^{2i\pi(kx+\ell y)}\\
& =  n^{\frac{1}{2}}\; \sum_{j\geq 1} \sum_{u\in\Z}  v_n\big(\tfrac{u}{n},\tfrac{u+j}{n}\big) e^{2i\pi (k+\ell) u} \big(e^{2i\pi k}-e^{-2i \pi k}\big) \big(e^{2i\pi \ell j}+e^{2i\pi k j}\big)\\
& \quad - n^{\frac{1}{2}}\; \sum_{u\in\Z} v_n\big(\tfrac{u}{n},\tfrac{u+1}{n}\big) e^{2i\pi u (k+\ell)} \big(e^{2i \pi (k+\ell)}-1\big).
\end{align*}
\begin{align*}
\widehat{\Psi}_n(k,\ell)  =&\;  n^\frac32 \big( e^{2i\pi k}-e^{-2i \pi k} \big) \int_{[-\frac{n}{2},\frac{n}{2}]} \cF_n(v_n)(n(k+\ell)-u,u) \\
& \qquad \qquad \qquad \qquad \qquad \qquad \times \sum_{j\geq 1}\Big[ e^{-2i\pi j \big(\tfrac{u}n - \ell\big)}+ e^{-2i\pi j\big(\tfrac{u}n -k\big)} \Big]\; du\\
&  - n^\frac32 \big(e^{2i\pi(k+\ell)}-1\big)\int_{[-\frac{n}{2},\frac{n}{2}]} \cF_n(v_n)(n(k+\ell)-u,u) e^{-2i\pi\tfrac{u}n} \; du.
\end{align*}
The proof of this proposition is very similar to the previous one. Indeed,  what we really need is the expression of the Fourier transform of $v_n$, solution to the Poisson equation \eqref{eq:v}.  For any $(u,v) \in [-\frac{n}{2},\frac{n}{2}]^2$, we have 
\begin{equation}
\label{eq:fourierofv}
\cF_n(v_n)(u,v)=-\frac{2}{n} \frac{e^{2i\pi\frac u n }+e^{2i\pi\frac v n} }{(1+\kappa_n\gamma_n) \Lambda(\tfrac{u}{n},\tfrac{v}{n})-i\Omega(\tfrac{u}{n},\tfrac{v}{n})} \cF_n(w_n)(u+v),
\end{equation}
where $w_n : \frac1n \Z \to \R$ is defined by 
\begin{equation}\label{eq:w}
w_n\big(\tfrac x n \big):=h_n\big(\tfrac x n,\tfrac{x+1}n\big)-h_n\big(\tfrac x n,\tfrac x n\big)
\end{equation}
and therefore, for any $\xi \in [-\frac n 2,\frac n 2]$, we deduce from \eqref{eq:fourierofh} that
\begin{equation}\label{eq:fourierofw}
\mc F_n(w_n)(\xi) =\int_{[-\frac n 2,\frac n 2]} \mc F_n(h_n)(\xi-\ell,\ell) \big(e^{-2i\pi\frac\ell n}-1\big)\; d\ell = -\tfrac{\sqrt n}{2} L\big(\tfrac \xi n\big) \mc F_n(f)(\xi),
\end{equation}
where 
\begin{equation}
\label{eq:defL}
L(y):=\int_{[-\frac 1 2, \frac 1 2]} (1-e^{-2i\pi x}) \Theta(y-x,x)\; dx.
\end{equation}
From the previous computations we conclude that
\begin{equation}
\label{eq:fourierofv_new}
\cF_n(v_n)(u,v)=\frac{1}{\sqrt n} \frac{e^{2i\pi\frac u n }+e^{2i\pi\frac v n} }{(1+\kappa_n\gamma_n) \Lambda(\tfrac{u}{n},\tfrac{v}{n})-i\Omega(\tfrac{u}{n},\tfrac{v}{n})}L\big(\tfrac {u+v} {n}\big) \mc F_n(f)(u+v),
\end{equation}
Since $v_n$ is symmetric, $\cF_n(v_n)(\xi-u,u) = \cF_n(v_n)(u,\xi-u)$ for any $\xi,u \in \R$, and similar computations as before give
\begin{align*}
\widehat{\Psi}_n(k,\ell)  = & \; n^\frac32 \big( e^{2i\pi k}-e^{-2i \pi k} \big) \int_{[-\frac{n}{2},\frac{n}{2}]} \cF_n(v_n)(n(k+\ell)-u,u) \; du\\
&  - n^\frac32 \big(e^{2i\pi(k+\ell)}-1\big)\int_{[-\frac{n}{2},\frac{n}{2}]} \cF_n(v_n)(n(k+\ell)-u,u) e^{-2i\pi\tfrac{u}n} \; du.
\end{align*}
Using \eqref{eq:fourierofv_new}, we obtain
\[ \widehat{\Psi}_n(k,\ell) = n^2L(k+\ell) \cF_n(f)(n(k+\ell))  \widetilde R_n(k,\ell) \]
where
\[\widetilde R_n(k,\ell)=O(k+\ell) \big(e^{-2i\pi k}-e^{2i \pi k}\big) + \big(1-e^{2i \pi (k+\ell)}\big)  M(k+\ell)\]
with $O$ and $M$ given respectively by 
\begin{align}
\label{eq:defO}
O(y)&:=\cfrac{1}{1-e^{-2i \pi y}}\; \int_{[-\frac12,\frac12]} \Theta (y-x,x) \, dx  
%& =\int_{[-\frac12,\frac12]} \frac{e^{2i\pi(y-x)}+e^{2i\pi x}}{((1+\gamma_n\kappa_n)\Lambda - i \Omega)(y-x,x)}\; dx,\notag
\end{align}
 and \begin{align}
\label{eq:defM}
M(y)&:= \cfrac{1}{1-e^{-2i\pi y}} \; \int_{[-\frac12,\frac12]} e^{-2i \pi x} \, \Theta (y-x,x) \, dx.
%&= \int_{[-\frac12,\frac12]} \frac{1+e^{2i\pi(y-2x)}}{((1+\gamma_n\kappa_n)\Lambda-i\Omega)(y-x,x)} \; dx.\notag 
\end{align}
Finally, from \eqref{eq:estim} and Lemma \ref{lem:two}, the square of the $\mathbb{L}^2(\mathbb{P})$-norm of 
\[
  \gamma_n \int_0^T \bQ_t^{4,n}\big(  n B_n v_n\big) dt
\]
is bounded from above by 
\begin{equation}
 C \gamma_n^2 n^{\frac52} \int_{[-\frac{1}{2},\frac{1}{2}]} \widetilde U_n(\xi) \big |L(\xi)|^2|\cF_n(f)(n\xi)\big|^2 \; d\xi,  \label{eq:toreplace_new}
\end{equation}
where
\[\widetilde U_n(\xi):=\int_{[-\frac{1}{2},\frac{1}{2}]} \frac{|\widetilde R_n(k,\xi -k)|^2}{z+\sin^2(\pi k)+\sin^2(\pi (\xi-k))} \; dk.  \]
We can bound $|\widetilde R_n(k,\xi-k)|$ from above as follows:
\[|\widetilde R_n(k,\xi -k)|^2\leq 2|\sin(\pi\xi)|^2|M(\xi)|^2 + 2 |\sin(2\pi k)|^2 |O(\xi)|^2. \] 
Let us first estimate the contribution coming from $O(\xi)$, namely: 
\[
\int_{[-\frac{1}{2},\frac{1}{2}]} \frac{|\sin(2\pi k) O(\xi)|^2 }{z+ \sin^2 (\pi k) + \sin^2 (\pi (\xi-k) ) }\; \; dk  \le C|O(\xi)|^2.
\]
Therefore, from Lemma \ref{lem:integral_residues} \eqref{eq:O} the contribution of this term in \eqref{eq:toreplace_new} gives
\[
C\gamma_n^2  \int_{[-\frac{1}{2},\frac{1}{2}]}  n^{\frac52}\sin^2 (\pi \xi)  \big|\mc F_n(f)(n\xi)\big|^2\; d\xi
\] and by Lemma \ref{lem:sfp} it vanishes  independently of $\gamma_n$, since $ \gamma_n^2/\sqrt n \to 0$. 
Now using again Lemma \ref{lem:integral_residues} we estimate the contribution coming from $M(\xi)$, namely: 
\begin{align*}
\int_{[-\frac{1}{2},\frac{1}{2}]} \frac{|\sin(\pi \xi)|^2|M(\xi)|^2 }{z+ \sin^2 (\pi k) + \sin^2 (\pi (\xi-k) ) }\; \; dk 
& \le C|M(\xi)|^2 n^{\frac32}|\sin(\pi \xi)|^2
\end{align*}
Therefore, the contribution of this term in \eqref{eq:toreplace_new} gives
\[
C\gamma_n^2  \int_{[-\frac{1}{2},\frac{1}{2}]}  n^{4}\sin^4 (\pi \xi)  \big|\mc F_n(f)(n\xi)\big|^2\; d\xi
\] and by Lemma \ref{lem:sfp} it vanishes  independently of $\gamma_n$, since $ \gamma_n^2/n \to 0$. 
This ends the proof.

\section*{Acknowledgements}
This work benefited from the support of the project EDNHS  ANR-14-CE25-0011 of the French National Research Agency (ANR) and of the PHC Pessoa Project 37854WM, and also in part by the International Centre for Theoretical Sciences (ICTS) during a visit for participating in the program Non-equilibrium statistical physics (Code: ICTS/Prog-NESP/2015/10).

C.B.~thanks the French National Research Agency (ANR) for its support through the grant ANR-15-CE40-0020-01 (LSD). 

P.G.~thanks  FCT/Portugal for support through the project UID/MAT/04459/2013.  

M.J.~thanks CNPq for its support through the grant 401628/2012-4 and FAPERJ for its support through the grant JCNE E17/2012. M.J.~was partially supported by NWO Gravitation Grant 024.002.003-NETWORKS.  
 
 M.S.~thanks the Labex CEMPI (ANR-11-LABX-0007-01) for its support.
 
This project has received funding from the European Research Council (ERC) under  the European Union's Horizon 2020 research and innovative programme (grant agreement   No 715734).

\appendix

\section{Nonlinear fluctuating hydrodynamics predictions} \label{app:prediction}

The aim of this Appendix is to show that if we perturb the noisy linear Hamiltonian lattice field model by an even potential $V$ then the nonlinear fluctuating hydrodynamics theory predicts that for a zero tension $\tau=0$ the model belongs to the universality class of the harmonic case. We consider an even potential $V$, namely $V(u)=V(-u)$ and we assume moreover that $V$ is non-negative and continuous, with at most polynomial growth at infinity.

\subsection{Cumulants and rules of derivation}

We still denote by $\nu_{\beta,\tau,\gamma}$ the product measure
\begin{equation}\label{eq:nu}
d\nu_{\beta,\tau,\gamma}(\omega):= \prod_{x \in \Z}\frac{e^{-\beta [ e_\gamma (\omega_x) + \tau \omega_x]}}{Z_\gamma (\beta,\tau)} d\omega_x,\qquad \tau \in \R, \beta >0,
\end{equation}
where the energy is now
\[
e_\gamma (u):= \frac{u^2}{2}+\gamma V(u).
\] 
Note that, if $\tau=0$, the density of the marginal of \eqref{eq:nu} at site $x$ with respect to the Lebesgue measure $d\omega_x$ is an even function, and every local function $f$ which is odd has a zero average with respect to $\nu_{\beta,0,\gamma}$. 

Let us denote by $\langle f \rangle_{\beta,\tau,\gamma}$ (resp. $\langle f \, ;\, g \, ; \, h \ldots \rangle_{\beta,\tau,\gamma}$) the average of $f$ (resp. the cumulants between $f,g,h \ldots$) with respect to $\nu_{\beta,\tau,\gamma}$,  and define 
\begin{align*}
{\mf e}_\gamma (\beta,\tau):=\big\langle e_\gamma (\omega_x) \big\rangle_{\beta,\tau,\gamma}\quad \textrm{and}\quad 
{\mf v}_\gamma (\beta,\tau):= \big\langle \omega_x \big\rangle_{\beta,\tau,\gamma}. 
\end{align*}

For any $\gamma>0$ sufficiently small the application \[(\beta, \tau) \in (0,\infty) \times \RR \mapsto ({\mf e}_\gamma (\beta,\tau), {\mf v}_\gamma (\beta,\tau)) \in {\mc U}_\gamma\] is one-to-one for some open subset ${\mc U}_\gamma$. 
In the sequel, we denote $({\mf e}, {\mf v})$ for $({\mf e}_{\gamma} (\beta,\tau)$, ${\mf v}_\gamma (\beta,\tau))$ and by $(\beta, \tau):=(\beta ({\mf e}, {\mf v}), \tau ({\mf e}, {\mf v}))$ the inverse application. 
If $\gamma=0$ then
$$\log Z_0 (\beta, \tau)= \cfrac{1}{2} \log (2\pi) -\cfrac12 \log \beta +\cfrac{\tau^{2} \beta }{2}.$$
If $\gamma$ is small one can easily check that
\begin{equation*}
Z_\gamma (\beta, \tau)= Z_0 (\beta, \tau) \Big(1- \gamma \beta \big\langle V(\omega_x)\big\rangle_{\beta,\tau,0} \Big) + o(\gamma).
\end{equation*}
We can also easily compute
\begin{align*}
\frac{\partial_\gamma Z_\gamma(\beta,\tau)}{Z_{\gamma}(\beta,\tau)} & =-\beta \big\langle V(\omega_0) \big\rangle_{\beta,\tau,\gamma} \qquad
 \frac{\partial_\tau Z_\gamma(\beta,\tau)}{Z_{\gamma}(\beta,\tau)}  =-\beta \big\langle \omega_0 \big\rangle_{\beta,\tau,\gamma} \\ \frac{\partial_\beta Z_\gamma(\beta,\tau)}{Z_{\gamma}(\beta,\tau)}&=- \big\langle e_\gamma(\omega_0)+\tau\omega_0 \big\rangle_{\beta,\tau,\gamma}.
\end{align*}
The computational rules explained in \cite[Appendix 3]{SS}, may be slightly generalized into
\begin{align}
\partial_\gamma \big\langle  {\mc A} (\omega_0) \big\rangle_{\beta,\tau,\gamma} &= -\beta \big\langle {\mc A} (\omega_0) \, ; \,V(\omega_0) \big\rangle_{\beta,\tau,\gamma},\label{eq:crules1}\\
\partial_\tau \big\langle  {\mc A} (\omega_0) \big\rangle_{\beta,\tau,\gamma} &= -\beta \big\langle {\mc A} (\omega_0) \, ; \, \omega_0 \big\rangle_{\beta,\tau,\gamma}, \label{eq:crules2}\\
%\partial_\tau \big\langle  {\mc A} (\omega_0) ; \, {\mc B} (\omega_0) ;\, {\mc C} (\omega_0)  \big\rangle_{\beta,\tau,\gamma} &= -\beta \big\langle {\mc A} (\omega_0)  \, ; \, {\mc B} (\omega_0)\, ; \, {\mc C} (\omega_0) \, ; \, \omega_0 \big\rangle_{\beta,\tau,\gamma}, \label{eq:crules2}\\
\partial_\beta \big\langle  {\mc A} (\omega_0) \big\rangle_{\beta,\tau,\gamma} &= - \big\langle {\mc A} (\omega_0) \, ; \, e_{\gamma} (\omega_0) + \tau \omega_0  \big\rangle_{\beta,\tau,\gamma}.
%\partial_\beta \big\langle  {\mc A} (\omega_0) ; \, {\mc B} (\omega_0) ;\, {\mc C} (\omega_0)  \big\rangle_{\beta,\tau,\gamma}&= -\big\langle {\mc A} (\omega_0)  \, ; \, {\mc B} (\omega_0)\, ; \, {\mc C} (\omega_0) \, ; \,e_\gamma (\omega_0) + \tau \omega_0 \big\rangle_{\beta,\tau,\gamma}.
\label{eq:crules3}\end{align}
These derivation rules can be extended to higher-order cumulants as follows: 
\begin{align}
\partial_\gamma \big\langle  {\mc A} (\omega_0) ; \, {\mc B} (\omega_0)  \big\rangle_{\beta,\tau,\gamma}& = -\beta \big\langle {\mc A} (\omega_0)   ; \, {\mc B} (\omega_0) \, ; \, V(\omega_0) \big\rangle_{\beta,\tau,\gamma},\label{eq:crules11}\\
\partial_\tau \big\langle  {\mc A} (\omega_0) ; \, {\mc B} (\omega_0)  \big\rangle_{\beta,\tau,\gamma} &= -\beta \big\langle {\mc A} (\omega_0)  \, ; \, {\mc B} (\omega_0) \, ; \, \omega_0 \big\rangle_{\beta,\tau,\gamma},\label{eq:crules22}\\
\partial_\beta \big\langle  {\mc A} (\omega_0) ; \, {\mc B} (\omega_0)  \big\rangle_{\beta,\tau,\gamma}& = -\big\langle {\mc A} (\omega_0) \, ; \, {\mc B} (\omega_0) \, ; \, e_\gamma (\omega_0) + \tau \omega_0 \big\rangle_{\beta,\tau,\gamma}, \label{eq:crules33}
\end{align} and so on.

The microscopic energy current $j^{e}_{x,x+1}$ of the Hamiltonian part of the dynamics is given by 
$$j^e_{x,x+1}=- \big[ \omega_{x+1} + \gamma V'(\omega_{x+1})\big]  \big[ \omega_{x} + \gamma V'(\omega_{x})\big]$$    
and the microscopic volume current $j^{v}_{x,x+1}$ of the Hamiltonian part of the dynamics is given by
$$j^{v}_{x,x+1} = -\big[ \omega_{x+1} + \gamma V'( \omega_{x+1})\big]  -\big[ \omega_{x} + \gamma V'(\omega_{x})\big].$$
Their averages at equilibrium  are equal to
\[
J^{e} ({\mf e},{\mf v})= \langle j^e_{0,1} \rangle_{\beta,\tau,\gamma}=-{\tau}^2, \qquad J^v ({\mf e},{\mf v})= \langle j^v_{0,1} \rangle_{\beta,\tau,\gamma}=2 \tau.
\]
Observe that for $\gamma=0$, $\tau (\mf e, \mf v)= - \mf v$. For $\gamma>0$, we do not have an explicit formula  for $\tau$ in terms of ${\mf e}$ and ${\mf v}$.

We use the results of \cite{SS} and we refer the reader to this paper for more explanations. When $\tau=0$, it is not difficult to check that \begin{enumerate}
\item the \textit{sound mode} is proportional to the volume field;
\item the  \textit{heat mode} is proportional to the energy field, because we have $\partial_{\mf e} \tau=0$ (see (8.3) of \cite{SS}, or Lemma \ref{lem:tau} below).
\end{enumerate}

\subsection{Computations of coupling constants for any $\gamma \geq 0$}
\label{ssec:expansion}
In this section, we compute some \textit{coupling constants} which are introduced in \cite{SS} and are fundamental to predict the universality classes to which the model belongs. 

To simplify the exposition, we redefine $Y:=\omega_0$ which is distributed according to the probability law
\[ \frac{1}{Z_\gamma(\beta,\tau)}\exp\Big(-\beta \frac{\omega^2}{2} -\beta\gamma V(\omega)-\beta\tau\omega\Big) d\omega.\]
The following lemma is straightforward:
\begin{lem}\label{lem:simp}
Assume that $f(u)$ is an odd function, and that $g(u),h(u)$ are even functions defined on $\bb R$. Then, for any $\beta,\gamma >0$,
\begin{align*}
\big\langle f(Y) \big\rangle_{\beta,0,\gamma} & = 0, \\
\big\langle f(Y)\; ; \; g(Y) \big\rangle_{\beta,0,\gamma} & = 0, \\
\big\langle f(Y)\; ; \; g(Y) \; ; \; h(Y) \big\rangle_{\beta,0,\gamma} & = 0. 
\end{align*} 
Moreover, if $f,g,h$ are all odd functions on $\bb R$, then 
\[\big\langle f(Y)\; ; \; g(Y) \; ; \; h(Y) \big\rangle_{\beta,0,\gamma}  = 0.\]
\end{lem}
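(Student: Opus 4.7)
The plan rests entirely on the parity symmetry of the measure $\nu_{\beta,0,\gamma}$. Since $\tau=0$ and $V$ is even, the one-site density $Z_\gamma(\beta,0)^{-1}\exp\bigl(-\beta\tfrac{\omega^2}{2}-\beta\gamma V(\omega)\bigr)$ is an even function of $\omega$, so $Y$ and $-Y$ have the same law. First I would settle the first statement: if $f$ is odd, the change of variable $\omega\mapsto -\omega$ (which leaves the density invariant) gives $\langle f(Y)\rangle_{\beta,0,\gamma}=\langle f(-Y)\rangle_{\beta,0,\gamma}=-\langle f(Y)\rangle_{\beta,0,\gamma}$, forcing the mean to vanish. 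The same remark shows that $\langle F(Y)\rangle_{\beta,0,\gamma}=0$ for \emph{any} odd function $F$, a fact I will use repeatedly.

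For the second statement, I would expand the two-point cumulant as
\[
\langle f(Y)\,;\,g(Y)\rangle_{\beta,0,\gamma}=\langle f(Y)g(Y)\rangle_{\beta,0,\gamma}-\langle f(Y)\rangle_{\beta,0,\gamma}\langle g(Y)\rangle_{\beta,0,\gamma}.
\]
Since $fg$ is the product of an odd and an even function, it is odd, so the first term vanishes by the generic remark above; the subtracted term vanishes because $\langle f(Y)\rangle_{\beta,0,\gamma}=0$.

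For the remaining two statements, the plan is to use the three-point cumulant identity
\[
\langle X\,;\,Y\,;\,Z\rangle=\langle XYZ\rangle-\langle X\rangle\langle YZ\rangle-\langle Y\rangle\langle XZ\rangle-\langle Z\rangle\langle XY\rangle+2\langle X\rangle\langle Y\rangle\langle Z\rangle,
\]
and then tabulate the parities of the integrands term by term. In the mixed case ($f$ odd, $g,h$ even), the products $fgh$, $f$, $fh$, $fg$ are all odd, so every single summand on the right-hand side has an odd-function factor and vanishes. In the all-odd case ($f,g,h$ all odd), $fgh$ is again odd, and each of the individual means $\langle f\rangle$, $\langle g\rangle$, $\langle h\rangle$ vanishes, killing the remaining four terms.

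There is really no obstacle here: the whole argument is a parity bookkeeping built on the symmetry $Y\sim -Y$ under $\nu_{\beta,0,\gamma}$, combined with the explicit cumulant expansions. The only care needed is to make sure that in each subcase every term of the three-point expansion contains at least one factor whose integrand is odd with respect to $\omega\mapsto-\omega$.
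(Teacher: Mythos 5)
Your proof is correct, and it is exactly the parity-plus-cumulant-expansion argument the paper has in mind when it labels the lemma ``straightforward'' and omits the proof entirely. The only point worth double-checking, which you have done, is that every term of the three-point cumulant expansion acquires a vanishing factor in both subcases; the bookkeeping is complete and there is nothing to add.
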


\begin{rem} \label{rem:gauss}
Note that, if $\tau=0$, and $\gamma=0$, then $Y=\frac{1}{\sqrt \beta}G$, where $G$ is a standard Gaussian variable.
\end{rem}

\begin{lem}[Continuity] \label{lem:continuity}
For any $f_1, f_2, \dots, f_k$ functions defined on $\bb R$, the map
\begin{align*}
\phi: \bb R_+ & \to \bb R \\
\gamma & \mapsto \big\langle f_1(Y) \; ; \; f_2(Y) \; ;\; \dots \; ; \; f_k(Y) \big\rangle_{\beta,0,\gamma}
\end{align*}
is continuous. 
\end{lem}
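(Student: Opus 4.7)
The plan is to reduce the claim to continuity of moments via dominated convergence, exploiting the non-negativity and polynomial growth of $V$.

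First I would recall that any joint cumulant $\langle f_1(Y);\ldots;f_k(Y)\rangle_{\beta,0,\gamma}$ is a finite polynomial in joint moments of the form $\langle \prod_{i\in S} f_i(Y)\rangle_{\beta,0,\gamma}$, with $S$ ranging over subsets of $\{1,\ldots,k\}$ (the standard moment-cumulant formula). Since finite sums, products and quotients (with non-vanishing denominators) of continuous functions are continuous, it suffices to show that for any polynomial-growth function $F:\R\to\R$ (and in particular for any product $F=\prod_{i\in S} f_i$, assuming, as is implicit in the context, that the $f_i$ are of at most polynomial growth),
\[
\gamma \longmapsto I_F(\gamma):=\int_{\R} F(y)\,e^{-\beta y^2/2-\beta\gamma V(y)}\,dy
\]
is continuous on $[0,\infty)$, and likewise for $F\equiv 1$, which yields continuity of the partition function $Z_\gamma(\beta,0)$.

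Next I would apply Lebesgue's dominated convergence theorem: fix $\gamma_0\geq 0$ and take any sequence $\gamma_n\to\gamma_0$. The integrand converges pointwise to $F(y)\,e^{-\beta y^2/2-\beta\gamma_0 V(y)}$ by continuity of $V$. For a dominating function, observe that since $V\geq 0$ and $\gamma_n\geq 0$, we have $e^{-\beta\gamma_n V(y)}\leq 1$ for all $n$ and all $y$, so
\[
\bigl|F(y)\,e^{-\beta y^2/2-\beta\gamma_n V(y)}\bigr|\;\leq\; |F(y)|\,e^{-\beta y^2/2},
\]
and the right-hand side is integrable because $F$ has at most polynomial growth while $e^{-\beta y^2/2}$ decays super-polynomially. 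Dominated convergence then gives $I_F(\gamma_n)\to I_F(\gamma_0)$, proving continuity of $I_F$ on $[0,\infty)$.

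Finally, taking $F\equiv 1$ shows $\gamma\mapsto Z_\gamma(\beta,0)=I_1(\gamma)$ is continuous and, since the integrand is strictly positive, $Z_\gamma(\beta,0)>0$ for every $\gamma\geq 0$. Therefore each moment $\langle \prod_{i\in S} f_i(Y)\rangle_{\beta,0,\gamma} = I_{\prod_{i\in S}f_i}(\gamma)/I_1(\gamma)$ is continuous in $\gamma$, and hence so is the polynomial combination of these moments that defines the cumulant $\phi(\gamma)$. The only mildly delicate point is the uniform domination, which is the reason the non-negativity of $V$ is crucial (without it the bound $e^{-\beta\gamma V}\leq 1$ could fail near $\gamma_0$); since this is ensured by the standing assumption on $V$, no further work is needed.
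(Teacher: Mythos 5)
Your argument is correct and is exactly the "easy consequence of the dominated convergence theorem" that the paper invokes without spelling out the details; you have simply made explicit the reduction to moments via the moment--cumulant formula and identified the dominating function $|F(y)|e^{-\beta y^2/2}$, which works precisely because $V\geq 0$ gives $e^{-\beta\gamma V}\leq 1$ uniformly in $\gamma\geq 0$. No discrepancy with the paper's approach.
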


\begin{proof}
This is an easy consequence of the dominated convergence Theorem. 
\end{proof}

\subsubsection{Gamma function} 
Let us define
\begin{align*}
&\Gamma:=\Gamma^\gamma (\beta, \tau)= \beta \Big( \big\langle Y \; ;\; Y \big\rangle_{\beta,\tau,\gamma} \big\langle e_\gamma (Y)\; ; \; e_\gamma (Y) \big\rangle_{\beta,\tau,\gamma} -  \big\langle Y\; ;\; e_\gamma (Y) \big\rangle_{\beta,\tau,\gamma}^2 \Big)\\
& = \tfrac \beta 4 \Big( \big\langle Y\, ; \, Y \big\rangle_{\beta,\tau,\gamma} \big\langle Y^2+2\gamma V(Y)\, ; \,  Y^2+2\gamma V(Y)\big\rangle_{\beta,\tau,\gamma} - \big\langle Y\, ; \, Y^2+2\gamma V(Y)\big\rangle_{\beta,\tau,\gamma}^2\Big).
\end{align*}
To simplify notation from now on we write $e_\gamma$ for $e_\gamma (Y)$.
From the computational rules \eqref{eq:crules22} and \eqref{eq:crules33}, the derivatives of $\Gamma^\gamma$ are given by 
\begin{align*}
\partial_\tau \Gamma^\gamma(\beta,\tau) = & \; \tfrac{\beta^2}{4}\Big( - \big\langle Y\, ; \, Y \, ; \, Y \big\rangle_{\beta,\tau,\gamma} \big\langle Y^2+2 \gamma V(Y) \, ; \, Y^2+2 \gamma V(Y) \big\rangle_{\beta,\tau,\gamma} \\& \qquad
- \big\langle Y\, ; \, Y \big\rangle_{\beta,\tau,\gamma} \big\langle Y^2+2\gamma V(Y) \, ; \, Y^2+2\gamma  V(Y) \, ; \, Y \big\rangle_{\beta,\tau,\gamma} \vphantom{\bigg(} \\
&\qquad + 2 \big\langle Y\, ; \, Y^2+2\gamma V(Y) \, ; \, Y \big\rangle_{\beta,\tau,\gamma} \big\langle Y\, ; \, Y^2+2\gamma V(Y)\big\rangle_{\beta,\tau,\gamma} \Big),\vphantom{\bigg(}
\\
\vphantom{\bigg(}\partial_\beta \Gamma^\gamma(\beta,\tau) =  &\; \beta^{-1}\; \Gamma^\gamma(\beta,\tau) \\
 + \tfrac{\beta}{8}\Big( - \big\langle Y\, ; \, &Y \, ; \, Y^2 +2\gamma V(Y) {+2\tau Y} \big\rangle_{\beta,\tau,\gamma} \big\langle Y^2+2\gamma  V(Y) \, ; \, Y^2+2\gamma V(Y) \big\rangle_{\beta,\tau,\gamma} \\ 
- \big\langle Y\, ; \, &Y \big\rangle_{\beta,\tau,\gamma} \big\langle Y^2+2\gamma V(Y) \, ; \, Y^2+2\gamma V(Y) \, ; \, Y^2+2\gamma V(Y) {+2\tau Y}\big\rangle_{\beta,\tau,\gamma}  \vphantom{\bigg(} \\
+ 2 \big\langle Y\, &; \, Y^2+2\gamma V(Y) \, ; \, Y^2+2\gamma V(Y)  {+2\tau Y}\big\rangle_{\beta,\tau,\gamma} \big\langle Y\, ; \, Y^2+2\gamma V(Y)\big\rangle_{\beta,\tau,\gamma} \Big).
\end{align*}
In particular, for the value $\tau=0$, these three expressions simplify significantly thanks to Lemma \ref{lem:simp} into
\begin{align*}
\Gamma^\gamma(\beta,0) & = \tfrac \beta 4  \big\langle Y\, ; \, Y \big\rangle_{\beta,0,\gamma} \big\langle Y^2+2\gamma V(Y)\, ; \, Y^2+2 \gamma V(Y)\big\rangle_{\beta,0,\gamma},\\\partial_\tau \Gamma^\gamma(\beta,0) & \equiv 0, \vphantom{\Bigg\{} \\
\partial_\beta \Gamma^\gamma(\beta,0)& =  \beta^{-1}\;\Gamma^\gamma(\beta,0) \\
&  + \tfrac{\beta}{8}\Big( - \big\langle Y\, ; \, Y \, ; \, Y^2+2\gamma V(Y) \big\rangle_{\beta,0,\gamma} \big\langle Y^2+2\gamma V(Y) \, ; \, Y^2+2\gamma  V(Y) \big\rangle_{\beta,0,\gamma} \\
& 
- \big\langle Y\, ; \, Y \big\rangle_{\beta,0,\gamma} \big\langle Y^2+2\gamma V(Y) \, ; \, Y^2+2\gamma V(Y)\, ; \, Y^2+2\gamma  V(Y) \big\rangle_{\beta,0,\gamma} \Big). \vphantom{\Bigg(} 
\end{align*}
In particular Lemma \ref{lem:simp} and Lemma \ref{lem:continuity} directly imply the following

\begin{lem}[Derivative of $\Gamma$ and its inverse at $\tau=0$] 
\label{lem:gamma}
The map $\gamma \mapsto \Gamma^\gamma(\beta,\tau)$ is continuous, and moreover
\[
\partial_\tau \Gamma^\gamma (\beta,0)  \equiv 0, \qquad 
 \partial_\tau \Big(\frac{1}{\Gamma^\gamma}\Big) (\beta,0) \equiv 0. \]

\end{lem}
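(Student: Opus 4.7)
The plan is to exploit two general principles: the continuity of cumulants in the parameter $\gamma$ (Lemma \ref{lem:continuity}), and the parity argument at $\tau=0$ (Lemma \ref{lem:simp}), applied to the explicit formulas for $\Gamma^\gamma$ and $\partial_\tau \Gamma^\gamma$ that have already been derived just above the lemma statement.

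First I would establish continuity: $\Gamma^\gamma(\beta,\tau)$ is a polynomial expression in the four cumulants
\[
\big\langle Y ; Y \big\rangle_{\beta,\tau,\gamma}, \quad \big\langle e_\gamma ; e_\gamma \big\rangle_{\beta,\tau,\gamma}, \quad \big\langle Y ; e_\gamma \big\rangle_{\beta,\tau,\gamma},
\]
each of which is of the form $\langle f_1(Y);f_2(Y)\rangle_{\beta,\tau,\gamma}$ with $f_i$ depending polynomially on $\gamma$ (through $e_\gamma = \frac{Y^2}{2}+\gamma V(Y)$). Writing out the bilinearity of cumulants, the $\gamma$-dependence of each summand splits into a prefactor $\gamma^j$ times a cumulant $\langle g_1(Y);\dots;g_k(Y)\rangle_{\beta,\tau,\gamma}$ with fixed $g_i$'s, to which Lemma \ref{lem:continuity} applies. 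The sum is therefore continuous in $\gamma$.

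Next, for the vanishing of $\partial_\tau \Gamma^\gamma(\beta,0)$, I would simply set $\tau=0$ in the formula displayed in the text right before the lemma and inspect each of the three cumulants. Each one contains exactly one odd factor among $\{Y,\, Y^2+2\gamma V(Y)\}$: namely one factor of $Y$ (odd), with the remaining two factors being $Y^2+2\gamma V(Y)$, which is even since $V$ is even. Therefore Lemma \ref{lem:simp} (the second or third identity, depending on whether the cumulant is of order $2$ or $3$) forces each of the three terms to vanish, so $\partial_\tau \Gamma^\gamma(\beta,0) = 0$.

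Finally, for $\partial_\tau(1/\Gamma^\gamma)(\beta,0) \equiv 0$, the chain rule gives
\[
\partial_\tau\Big(\tfrac{1}{\Gamma^\gamma}\Big)(\beta,0) \;=\; -\,\frac{\partial_\tau \Gamma^\gamma(\beta,0)}{\big(\Gamma^\gamma(\beta,0)\big)^2},
\]
which is well-defined and zero as soon as $\Gamma^\gamma(\beta,0) \neq 0$. But from the simplified expression
\[
\Gamma^\gamma(\beta,0) \;=\; \tfrac{\beta}{4}\; \big\langle Y\,;\,Y\big\rangle_{\beta,0,\gamma}\; \big\langle Y^2+2\gamma V(Y)\,;\,Y^2+2\gamma V(Y)\big\rangle_{\beta,0,\gamma},
\]
both cumulants are variances of non-constant random variables and are therefore strictly positive, so $\Gamma^\gamma(\beta,0)>0$ and the chain rule conclusion follows. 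The only mildly subtle point in the whole argument is verifying that each cumulant produced by bilinear expansion at $\tau=0$ fits one of the odd/even templates of Lemma \ref{lem:simp}; this is routine since $V$ being even makes $e_\gamma$ even and $Y$ is trivially odd.
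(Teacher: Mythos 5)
Your proposal follows the same route as the paper, which gives essentially no details and simply asserts that the claim ``directly'' follows from Lemma~\ref{lem:simp} and Lemma~\ref{lem:continuity}; you supply the missing steps, including the (slightly glossed over) chain-rule argument together with the positivity $\Gamma^\gamma(\beta,0)>0$. The one imprecision is in the parity bookkeeping: you say each relevant cumulant ``contains exactly one odd factor ... with the remaining two factors being $Y^2+2\gamma V(Y)$'' and invoke only the second and third identities of Lemma~\ref{lem:simp}. But setting $\tau=0$ in the three-term formula for $\partial_\tau\Gamma^\gamma$, the vanishing factor in the first term is $\langle Y\,;\,Y\,;\,Y\rangle_{\beta,0,\gamma}$, which has \emph{three} odd entries and needs the fourth identity of Lemma~\ref{lem:simp}; and in the third term the order-$3$ cumulant $\langle Y\,;\,Y^2+2\gamma V(Y)\,;\,Y\rangle_{\beta,0,\gamma}$ does \emph{not} vanish (it has two odd and one even entry, a case Lemma~\ref{lem:simp} leaves open) -- what kills that term is the companion factor $\langle Y\,;\,Y^2+2\gamma V(Y)\rangle_{\beta,0,\gamma}$, which fits the one-odd-one-even case. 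So the mechanism is term-specific rather than a single template. This is a presentational slip rather than a gap: in every term at least one cumulant factor falls under one of the four identities of Lemma~\ref{lem:simp}, so $\partial_\tau\Gamma^\gamma(\beta,0)=0$, and the chain-rule step then gives $\partial_\tau(1/\Gamma^\gamma)(\beta,0)=0$ as you say.
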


\subsubsection{Tension and its derivatives}
Now let us compute the derivatives of $\tau$ with respect to ${\mf e}$ and $\mf v$ when $\tau({\mf e},{\mf v}) = 0$. 
Equation (8.3) of \cite{SS} gives
\begin{align}
\label{eq:dertau0}
\partial_{\mf v} \tau& = - \frac{1}\Gamma \big\langle e_\gamma(\omega_0)  \, ; \, e_\gamma(\omega_0)  + \tau \omega_0 \big\rangle_{\beta, \tau,\gamma}, \\ 
\label{eq:dertau2.0}
\partial_{\mf e} \tau& = \frac 1 \Gamma \big\langle \omega_0 \, ; \, e_\gamma(\omega_0) + \tau \omega_0 \big\rangle_{\beta, \tau,\gamma},
\end{align} 
which read with our notations
\begin{align}
\label{eq:dertau}
\partial_{\mf v} \tau& = - \frac{1}{4\Gamma^\gamma(\beta,\tau)} \big\langle Y^2+2\gamma V(Y)  \, ; \, Y^2+2 \gamma V(Y)  +2 \tau Y\big\rangle_{\beta, \tau,\gamma}, \\ 
\label{eq:dertau2}
\partial_{\mf e} \tau& = \frac 1 {2\Gamma^\gamma(\beta,\tau)} \big\langle Y \, ; \, Y^2+2\gamma V(Y) + 2\tau Y \big\rangle_{\beta, \tau,\gamma}.
\end{align}
For $\tau({\mf e}, {\mf v})=0$, it is then trivial that \begin{align*} \partial_{\mf e} \tau \big|_{\tau=0} &\equiv 0, \\
\partial_{\mf v}\tau\big|_{\tau=0} & = -\frac{1}{4\Gamma^\gamma(\beta,0)}\big\langle Y^2+ 2 \gamma V(Y)  \, ; \, Y^2+2\gamma V(Y)  \big\rangle_{\beta, 0,\gamma}.\end{align*}
One can even goes further, and compute the second order derivatives of $\tau$ as follows: 
\begin{lem} \label{lem:tau}
If $({\mf e},{\mf v})$ are such that $\tau({\mf e},{\mf v})=0$, we have that
\begin{align}
\tau=0, \quad  \partial_{\mf e} \tau\big|_{\tau=0} \equiv 0, \quad 
 \partial_{\mf v}^2 \tau\big|_{\tau=0} \equiv 0,\quad 
 \partial_{\mf e}^2 \tau\big|_{\tau=0} \equiv 0. \notag 
\end{align}
%and also 
%\begin{align*} \partial^2_{\mf v \mf e} \tau\big|_{\tau=0}& = O(\gamma)
%\\  \partial^2_{\mf e \mf v} \tau\big|_{\tau=0} &= O(\gamma).\end{align*}
\end{lem}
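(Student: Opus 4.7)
The plan is to exploit the $\ZZ/2\ZZ$ symmetry of the model induced by the even potential, which makes $\tau$ an odd function of $\mf v$. Because $V$ is even, the measure $\nu_{\beta,\tau,\gamma}$ is the pushforward of $\nu_{\beta,-\tau,\gamma}$ under the involution $\omega \mapsto -\omega$. Applying this to the defining formulas of $\mf e_\gamma$ and $\mf v_\gamma$, and using that $e_\gamma$ is even while $u\mapsto u$ is odd, I obtain
\begin{align*}
\mf e_\gamma(\beta,-\tau) &= \mf e_\gamma(\beta,\tau), \\
\mf v_\gamma(\beta,-\tau) &= -\mf v_\gamma(\beta,\tau).
\end{align*}

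Next, I would invert the diffeomorphism $(\beta,\tau)\mapsto(\mf e,\mf v)$ and read off the corresponding symmetry on the inverse. The above relations show that $(\beta,-\tau)\mapsto(\mf e,-\mf v)$, hence, passing to the inverse, $\beta(\mf e,-\mf v)=\beta(\mf e,\mf v)$ and $\tau(\mf e,-\mf v)=-\tau(\mf e,\mf v)$. In other words, \emph{for fixed $\mf e$, the function $\mf v\mapsto \tau(\mf e,\mf v)$ is odd}. I would also note that the locus $\{\tau=0\}$ coincides with $\{\mf v=0\}$: indeed, by Lemma \ref{lem:simp} (applied to $f(Y)=Y$), $\mf v_\gamma(\beta,0)=\langle Y\rangle_{\beta,0,\gamma}=0$, and conversely if $\mf v=0$ then by oddness $\tau(\mf e,0)=0$.

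All three remaining claims then follow at once from the parity of $\tau$ in $\mf v$. Since $\tau(\mf e,\cdot)$ is odd in $\mf v$, so is $\partial_{\mf e}\tau(\mf e,\cdot)$ (differentiation in $\mf e$ preserves parity in $\mf v$); evaluating at $\mf v=0$ gives $\partial_{\mf e}\tau|_{\tau=0}\equiv 0$. Likewise $\partial_{\mf v}^{2}\tau$ is the second derivative in $\mf v$ of an odd function, hence odd, so it vanishes at $\mf v=0$, giving $\partial_{\mf v}^{2}\tau|_{\tau=0}\equiv 0$. Finally $\partial_{\mf e}^{2}\tau$ is obtained by two $\mf e$-derivatives of an odd function of $\mf v$, so it remains odd in $\mf v$ and vanishes at $\mf v=0$, proving $\partial_{\mf e}^{2}\tau|_{\tau=0}\equiv 0$.

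The only genuinely delicate step is the first one: justifying the pushforward identity and the resulting oddness of $\tau$ requires checking that the inverse map $(\mf e,\mf v)\mapsto(\beta,\tau)$ respects the symmetry, which is a direct consequence of the uniqueness of the inverse on $\mc U_\gamma$. Everything else is a bookkeeping exercise in parity. One could alternatively verify the identities by direct differentiation of \eqref{eq:dertau} and \eqref{eq:dertau2}, using Lemma \ref{lem:simp} together with the cumulant derivation rules \eqref{eq:crules11}--\eqref{eq:crules33} to show that the relevant three- and four-point cumulants at $\tau=0$ vanish because they always contain an odd number of odd factors; however, the symmetry argument is cleaner and handles the three identities uniformly.
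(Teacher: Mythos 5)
Your proof is correct, and it takes a genuinely different and notably cleaner route than the paper's. The paper proves Lemma~\ref{lem:tau} by a term-by-term computation: it first evaluates $\partial_\tau \mf v$, $\partial_\beta \mf v$, $\partial_\tau \mf e$, $\partial_\beta \mf e$ at $\tau=0$ using the cumulant derivation rules \eqref{eq:crules1}--\eqref{eq:crules33} together with Lemma~\ref{lem:simp}, then differentiates the explicit formulas \eqref{eq:dertau}--\eqref{eq:dertau2} for $\partial_{\mf v}\tau$ and $\partial_{\mf e}\tau$ once more in $\tau$ and $\beta$, and finally inverts the Jacobian matrix to pass from $(\beta,\tau)$-derivatives to $(\mf v,\mf e)$-derivatives; each vanishing is traced back to an odd cumulant being zero. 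You instead observe that the evenness of $V$ makes $\omega\mapsto-\omega$ push $\nu_{\beta,\tau,\gamma}$ forward to $\nu_{\beta,-\tau,\gamma}$, hence $\mf e$ is even and $\mf v$ is odd in $\tau$, hence (by uniqueness of the inverse on $\mc U_\gamma$) $\tau(\mf e,\cdot)$ is odd in $\mf v$ and $\{\tau=0\}=\{\mf v=0\}$; all three identities then drop out from parity bookkeeping, since $\partial_{\mf e}$ preserves and $\partial_{\mf v}$ flips the parity in $\mf v$. Your argument is shorter, handles the three vanishings uniformly, and makes transparent that they are all manifestations of a single $\ZZ/2$ symmetry. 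What the paper's longer calculation buys, beyond the lemma, is explicit non-zero values (e.g.\ $\partial_\tau\mf v|_{\tau=0}$, $\partial_\beta\mf e|_{\tau=0}$, $\partial_{\mf v}\tau|_{\tau=0}$ and hence the sound velocity $c(\beta,0,\gamma)$, and $\partial_\beta(\partial_{\mf v}\tau)|_{\tau=0}$) which are reused in Appendix~\ref{app:prediction} to identify the coupling constant $G^2_{11}$; the symmetry argument by itself only produces the vanishings. One small point worth stating explicitly in a write-up: the involution $(\mf e,\mf v)\mapsto(\mf e,-\mf v)$ preserves $\mc U_\gamma$, which follows because it conjugates to $(\beta,\tau)\mapsto(\beta,-\tau)$ on the domain side.
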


\begin{proof}
The second derivatives of $\tau$ are evaluated as explained in \cite[Appendix 3.2]{SS}. First, we compute the derivatives of $\mf v$ and $\mf e$ with respect to $\tau$ and $\beta$, and then we use the Jacobian inversion. 
Recall that, in our notation
\[{\mf v} (\beta, \tau) = \big\langle Y \big\rangle_{\beta,\tau,\gamma}, \qquad
{\mf e} (\beta, \tau) = \tfrac12\big\langle Y^2 + 2\gamma  V(Y)  \big\rangle_{\beta,\tau,\gamma}.
\]
Therefore  at $\tau=0$, $\mf v(\beta,0)  \equiv 0$.
%\begin{align*}
%\mf v(\beta,0) & \equiv 0  \\
%\mf e(\beta,0) & =\gamma(2-\beta)< Z^2;V(Z)>_{\beta,0,0} -2\gamma^2\beta < V(Z);V(Z)>_{\beta,0,0} \\&+\frac{\gamma^2\beta^2}{4}< Z^2;V(Z);V(Z)>_{\beta,0,0}  +o(\gamma^2).
%\end{align*}
We deduce from the derivation rules \eqref{eq:crules2}  that 
\begin{align*}
\partial_{\tau} {\mf v}&= - \beta \big\langle Y \, ; \, Y\big\rangle_{\beta,\tau,\gamma}, \\ 
\partial_\beta {\mf v}&=- \tfrac12 \big\langle Y\, ; \, Y^2+2\gamma V(Y) + 2\tau Y\big\rangle_{\beta,\tau,\gamma}  
\end{align*}
and from Lemma \ref{lem:simp} we directly deduce
\begin{align}  
\partial_\tau{\mf v\big|_{\tau=0}} & \neq 0, \label{eq:dtauv}\\
\partial_\beta{\mf v}\big|_{\tau=0}  & \equiv 0. \label{eq:dbetav}\end{align}
In the same way 
\begin{align*}
\partial_{\tau} {\mf e}&=-\beta \big\langle Y\, ; \, Y^2+2\gamma V(Y) \big\rangle_{\beta,\tau,\gamma },
% - \tfrac{\gamma \beta }{4} \Big\{ \big\langle  \omega_0 \, ;\, \omega_0^4 \big\rangle_{\beta,\tau,0} - \beta \big\langle  \omega_0 \, ;\, e_0 \, ; \, \omega_0^4 \big\rangle_{\beta,\tau,0} \Big\}+ o(\gamma),
\\
\partial_\beta {\mf e}&=-\tfrac14\big\langle Y^2+2\gamma V(Y)  \, ;\, Y^2+2\gamma  V(Y)  +2\tau Y \big\rangle_{\beta,\tau,0} 
%\\
%& \quad - \tfrac{\gamma}{4} \Big\{ \big\langle \omega_0^4 \, ;\, e_0 +\tau \omega_0 \big\rangle_{\beta,\tau,0} + \big\langle \omega_0^4  \, ;\, e_0  \big\rangle_{\beta,\tau,0}- \beta \big\langle e_0 \, ; \, \omega_0^4 \, ;\, e_0 +\tau \omega_0\big\rangle_{\beta,\tau,0} \Big\}+ o(\gamma).
\end{align*}
and, in particular, for $\tau=0$ we get 
\begin{align}
\partial_{\tau} {\mf e} \big|_{\tau=0} & \equiv 0, \label{eq:dtaue}\\
\partial_{\beta} {\mf e} \big|_{\tau=0}& \neq 0. \label{eq:dbetae}
\end{align}
%-\frac{1}{2\beta^2}-\gamma \langle Z^2 ; V(Z)\rangle_{\beta,0,0}
%-\gamma^2 \langle V(Z); V(Z)\rangle_{\beta,0,0}\\
%&+\frac{\gamma\beta}{4} \langle Z^2; Z^2,V(Z)\rangle_{\beta,0,0}+\gamma^2\beta \langle Z^2; V(Z), V(Z)\rangle_{\beta,0,0}\\&- \frac{\gamma^2\beta}{8}\langle Z^2;Z^2;V(Z); V(Z)\rangle_{\beta,0,0}+o(\gamma^2).
% \label{eq:dbetae}
%\end{equation*}
By using \eqref{eq:dertau}, \eqref{eq:dertau2}, \eqref{eq:crules2}, Lemma \ref{lem:simp} and Lemma \ref{lem:gamma}, we get that 
\begin{equation*}
\begin{split}
\partial_\tau (\partial_{\mf v} \tau) \big|_{\tau=0} &= -\frac14 \partial_\tau\Big(\frac{1}{\Gamma^\gamma}\Big)(\beta,0)\big\langle Y^2+2\gamma V(Y)  \, ;\, Y^2+2\gamma V(Y)  \big\rangle_{\beta,0,\gamma} \\
& \quad + \frac{\beta}{4\Gamma^\gamma(\beta,0)}  \big\langle Y \, ; \, Y^2+2\gamma V(Y)  \, ;\, Y^2+2\gamma V(Y)  \big\rangle_{\beta, 0,\gamma}
\equiv 0. \vphantom{\Bigg(}
\end{split}
\end{equation*}
Similarly,
\begin{align} 
\partial_\tau (\partial_{\mf e} \tau) \big|_{\tau =0}&= 
\frac12\partial_\tau\Big(\frac{1}{\Gamma^\gamma}\Big)(\beta,0) \big\langle Y\, ;\, Y^2+2\gamma V(Y)\big\rangle_{\beta,0,\gamma} + \frac{1}{\Gamma^\gamma(\beta,0)} \big\langle Y\, ; \, Y\big\rangle_{\beta,0,\gamma} \notag  \\ 
&  \quad - \frac{\beta}{2\Gamma^\gamma(\beta,0)} \big\langle Y\, ; \, Y^2+2\gamma V(Y) \, ; \, Y\big\rangle_{\beta,0,\gamma}. \label{eq:dertauderetau}
%&=-\frac{\gamma\beta}{\Gamma^\gamma(\beta,0)} \big\langle Z\, ; \, Z\, ; \, V(Z) \big\rangle_{\beta,0,\gamma} \neq 0.
\end{align}
Finally, combining \eqref{eq:crules3}, Lemma \ref{lem:simp} and Lemma \ref{lem:gamma} we have
\begin{align}
\partial_\beta(\partial_{\mf v} \tau)\big|_{\tau=0} & = -\frac14\partial_\beta\Big(\frac{1}{\Gamma^\gamma}\Big)(\beta,0) \big\langle Y^2+2\gamma V(Y)\, ; \, Y^2+2\gamma V(Y) \big\rangle_{\beta,0,\gamma}\notag \\ & \quad + \frac1{8\Gamma^\gamma(\beta,0)} \big\langle Y^2+2\gamma V(Y)\, ; \,Y^2+2\gamma V(Y) \, ; \, Y^2+2\gamma V(Y) \big\rangle_{\beta,0,\gamma} \label{eq:derbetadervtau}
\end{align}
and 
\begin{align*}
\partial_\beta (\partial_{\mf e} \tau)\big|_{\tau=0}  
& = \frac{1}{2}\partial_\beta\Big(\frac{1}{\Gamma^\gamma}\Big)(\beta,0)\big\langle Y\, ;\, Y^2+2\gamma V(Y)\big\rangle_{\beta,0,\gamma} \\ 
& \quad - \frac{1}{4\Gamma^\gamma(\beta,0)} \big\langle Y\, ; \, Y^2+ 2\gamma V(Y)\, ; \, Y^2+2\gamma V(Y) \big\rangle_{\beta,0,\gamma}  \equiv 0.\vphantom{\Bigg(}
\end{align*}
We deduce from \eqref{eq:dtauv} and \eqref{eq:dbetae} that
\[
\frac{1}{\partial_\tau{\mf v}}\bigg|_{\tau=0}  \neq 0, \qquad
\frac{1}{\partial_\beta{\mf e}}\bigg|_{\tau=0}  \neq 0.
\]
Recall the values \eqref{eq:dtauv},\eqref{eq:dbetav}, \eqref{eq:dtaue}, \eqref{eq:dbetae}. Then, the values of the second derivatives of $\tau$ are given by 
\begin{align*}\begin{pmatrix} 
\partial_{\mf v}^2 \tau\\
\partial_{\mf e \mf v}^2 \tau
\end{pmatrix}\bigg|_{\tau=0}
&=
\begin{pmatrix}
\partial_{\tau}\mf v & \partial_{\tau} {\mf e}\\
\partial_{\beta} {\mf v} & \partial_\beta {\mf e} 
\end{pmatrix}\bigg|_{\tau=0}^{-1} 
\begin{pmatrix}
\partial_\tau (\partial_{\mf v} \tau)\\
\partial_\beta (\partial_{\mf v} \tau)
\end{pmatrix}\bigg|_{\tau=0} \\ 
& = \begin{pmatrix}
\Big[\partial_{\tau}\mf v\big|_{\tau=0}\Big]^{-1} & 0 \\
0 & \Big[\partial_\beta \mf e\big|_{\tau = 0}\Big]^{-1}
\end{pmatrix} \begin{pmatrix}
0 \\\partial_\beta (\partial_{\mf v} \tau)\big|_{\tau=0}
\end{pmatrix},  \notag \\
~\\ 
\begin{pmatrix}\partial_{\mf v \mf e}^2 \tau\\
\partial_{\mf e}^2 \tau
\end{pmatrix}\bigg|_{\tau=0} & 
= \begin{pmatrix}
\partial_{\tau}\mf v & \partial_{\tau} {\mf e}\\
\partial_{\beta} {\mf v} & \partial_\beta {\mf e} 
\end{pmatrix}\bigg|_{\tau=0}^{-1} 
\begin{pmatrix}
\partial_\tau (\partial_{\mf e} \tau)\\
\partial_\beta (\partial_{\mf e} \tau)
\end{pmatrix}\bigg|_{\tau=0} \\  & = 
\begin{pmatrix}
\Big[\partial_{\tau}\mf v\big|_{\tau=0}\Big]^{-1} & 0 \\
0 & \Big[\partial_\beta \mf e\big|_{\tau = 0}\Big]^{-1}
\end{pmatrix}
\begin{pmatrix}
\partial_\tau (\partial_{\mf e} \tau)\big|_{\tau=0}\\
0
\end{pmatrix} 
\end{align*}
and therefore $ \partial_{\mf v}^2 \tau\big|_{\tau=0} \equiv 0, $ and $
\partial_{\mf e}^2 \tau\big|_{\tau=0} \equiv 0$.
\end{proof}

From \cite{SS}, the sound mode (mode 1) has velocity 
\[c(\beta,\tau,\gamma)= -\frac{2}{\Gamma^\gamma(\beta,\tau)}\big\langle \tau\omega_0 + e_\gamma(\omega_0) \; ; \; \tau\omega_0 + e_\gamma(\omega_0) \big\rangle_{\beta,\tau,\gamma}= 2(\partial_{\mf v} - \tau \partial_{\mf e}) \tau,\]  and the heat mode (mode 2) has velocity $0$. The coupling constants $G^1_{\alpha\alpha'}$ and $G^2_{\alpha \alpha'}$, $\alpha, \alpha' \in \{1,2\}$ determine the universality class of the model. In the case considered here, we have that
\begin{equation*}
%G^2_{11}= -\cfrac{1}{\beta} \sqrt{-\cfrac{c}{2\Gamma}}<0\, , \quad G^{2}_{\alpha,\alpha'}=0 \; \text{if $(\alpha,\alpha') \ne (1,1)$},
G^1_{11}=-\Big(\cfrac{-2}{\beta c}\Big)^{\frac12}\; \big [ \partial_{\mf v} - \tau \partial_{\mf e} \big]^2 \tau.
\end{equation*}
%When $\lambda=0$, $\Gamma=  \beta \big\langle \omega_x ; \omega_x \big\rangle_{\beta,0,\gamma} \big\langle e_\gamma (\omega_x) ; e_\gamma (\omega_x) \big\rangle_{\beta,0,\gamma}=\beta \big\langle \omega_x^2 \big\rangle_{\beta,0,\gamma} \big\langle e_\gamma (\omega_x) ; e_\gamma (\omega_x) \big\rangle_{\beta,0,\gamma} $ and 
When $\tau=0$, the sound mode has velocity $c(\beta,0,\gamma):= 2\partial_{\mf v} \tau\big|_{\tau=0} $
%because
%\begin{equation*}
%\begin{split}
%\big\langle \omega_x^2 \big\rangle_{\beta,0,\gamma} &= \cfrac{Z_0 (\beta,0)}{Z_{\gamma} (\beta,0)} \left\{ \big\langle \omega_x^2 \big\rangle_{\beta,0,0} - \cfrac{\gamma \beta}{4} \big\langle \omega_x^6 \big\rangle_{\beta,0,0} + o(\gamma) \right\}\\
%&= \left\{ 1-\frac{3}{4\beta}\gamma + o(\gamma)  \right\} \left\{ \cfrac{1}{\beta} - \cfrac{15 \gamma}{4 \beta^2} + o(\gamma) \right\}\\
%&= \cfrac{1}{\beta} -\cfrac{9}{2\beta^2} \gamma + o(\gamma) 
%\end{split}
%\end{equation*}
and from Lemma \ref{lem:tau} we have
$$G_{11}^1 =- \Big(\cfrac{-2}{\beta c}\Big)^\frac12\; \partial_{\mf v}^2 \tau\big|_{\tau=0} \equiv 0.$$
%Therefore, at $\tau=0$ we are always in the second case or third case. To prove that $G_{11}^2 \neq 0$, we first compute all the constants at $\tau=0$ and $\gamma = 0$, and then we use a continuity argument w.r.t.~$\gamma$. 
%%
Therefore, according to \cite[Section 2.2]{SS}, there are four possibilites: 
\begin{enumerate}[1.]
%\item If $G_{11}^1\ne0$, the sound mode is KPZ and the heat mode is $\tfrac{5}{3}$-L\'evy;
\item If $G_{22}^1=0$ and $G_{11}^2 \neq 0$, the sound mode is diffusive and the heat mode is L\'evy with exponent $\frac{3}{2}$;
\item If $G_{22}^1=0$ and $G_{11}^2 = 0$, the sound mode and the heat mode are diffusive;
\item If $G_{22}^1\neq 0$ and $G_{11}^2 \neq 0$, the sound mode and the heat mode are Gold-L\'evy;
\item If $G_{22}^1\neq 0$ and $G_{11}^2 = 0$, the sound mode is L\'evy with exponent $\frac{3}{2}$ and the heat mode is diffusive.
\end{enumerate}

In the following section we prove that our model belongs to the first case.

\subsection{Coupling matrices}
Let us introduce some definitions and notations, taken from \cite{SS}.

\begin{de} \label{def:coupling}
~

\begin{enumerate}[1.]
\item \textsc{Constants:} let us denote
\begin{align*}
c&:=2(\partial_{\mf v}-\tau \partial_{\mf e})\tau, \\
Z_1&:=-\sqrt{\tfrac{-\beta c}{2}} \quad \text{and} \quad Z_2:=\sqrt{\tfrac{-c}{2\Gamma}}, \\
\widetilde Z_1 & := \sqrt{\tfrac{-c}{2\beta}} \quad \text{and} \quad \widetilde Z_2 := \sqrt{\tfrac{-\Gamma c}{2}}.
\end{align*}
\item \textsc{Vectors:} let us denote
\[ \psi_1:=\frac{1}{Z_1}\begin{pmatrix}
1\\-\tau
\end{pmatrix} \quad \text{and}\quad \psi_2:=\frac{1}{Z_2}\begin{pmatrix}
\partial_{\mf e} \tau \\ -\partial_{\mf v}\tau
\end{pmatrix}. \]
\item \textsc{Matrices:} let us denote
\begin{align*}
R&:=\begin{pmatrix}
(\widetilde Z_1)^{-1}\; \partial_{\mf v}\tau & (\widetilde Z_1)^{-1}\; \partial_{\mf e}\tau \\
(\widetilde Z_2)^{-1}\; \tau & (\widetilde Z_2)^{-1}
\end{pmatrix}, \\
H_{\mf v}&:=2\begin{pmatrix}
\partial^2_{\mf v}\tau & \partial_{\mf v}\partial_{\mf e} \tau \\
\partial_{\mf v}\partial_{\mf e}\tau & \partial_{\mf e}^2\tau
\end{pmatrix},\\
H_{\mf e}&:=-\tau H_{\mf v} - 2 \begin{pmatrix}
(\partial_{\mf v}\tau)^2 & \partial_{\mf v}\tau \; \partial_{\mf e}\tau \\ 
\partial_{\mf v}\tau \; \partial_{\mf e}\tau & (\partial_{\mf e}\tau)^2
\end{pmatrix}.
\end{align*}
\end{enumerate}
\end{de}

With Definition \ref{def:coupling}, we are now able to define the \textit{coupling matrices} $G^1$ and $G^2$ as follows: 
\begin{align}
G^1_{\alpha\alpha'}&:= \frac{1}{2}\Big( R_{11} (\psi_{\alpha}^T \cdot H_{\mf v} \psi_{\alpha'}) + R_{12} (\psi_{\alpha}^T \cdot H_{\mf e} \psi_{\alpha'})\Big) \label{eq:G1}\\
 G^2_{\alpha\alpha'}&:= \frac{1}{2}\Big( R_{21} (\psi_{\alpha}^T \cdot H_{\mf v} \psi_{\alpha'}) + R_{22} (\psi_{\alpha}^T \cdot H_{\mf e} \psi_{\alpha'})\Big) \label{eq:G2}
\end{align}

A first corollary of Lemma \ref{lem:continuity} is the following:

\begin{cor}[Continuity of the coupling constants]\label{cor:continuity}
All the constants that are defined in Definition \ref{def:coupling} and also in \eqref{eq:G1} and \eqref{eq:G2} are continuous functions of $\gamma \in \bb R_+$. 
\end{cor}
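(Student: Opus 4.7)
The plan is to reduce the continuity of all the coupling constants to Lemma \ref{lem:continuity}, by expressing every quantity in Definition \ref{def:coupling} and in \eqref{eq:G1}--\eqref{eq:G2} as a rational combination of cumulants of polynomial observables under $\nu_{\beta,0,\gamma}$, with nonvanishing denominators.

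First, I would collect the ingredients computed in the proof of Lemma \ref{lem:tau}: the four quantities $\partial_\tau \mathfrak{v}$, $\partial_\beta \mathfrak{v}$, $\partial_\tau \mathfrak{e}$, $\partial_\beta \mathfrak{e}$, evaluated at $\tau = 0$, are explicit cumulants $\langle f_1(Y); \ldots; f_k(Y)\rangle_{\beta,0,\gamma}$ of polynomial functions of $Y$. By Lemma \ref{lem:continuity}, each is continuous in $\gamma$. The Jacobian of $(\beta,\tau)\mapsto(\mathfrak{e},\mathfrak{v})$ at $\tau = 0$ is diagonal with entries $\partial_\tau\mathfrak{v}|_{\tau=0} = -\beta\langle Y;Y\rangle_{\beta,0,\gamma}$ and $\partial_\beta\mathfrak{e}|_{\tau=0}$, both strictly negative. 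Hence Jacobian inversion yields $\partial_{\mathfrak{v}}\tau|_{\tau=0}$ and $\partial_{\mathfrak{e}}\tau|_{\tau=0}$ as continuous functions of $\gamma$.

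Second, $\Gamma = \Gamma^\gamma(\beta,0)$ is by its very definition a polynomial combination of cumulants, hence continuous in $\gamma$; the same holds for $c = 2\partial_{\mathfrak{v}}\tau|_{\tau=0}$. At $\gamma = 0$ one checks $\Gamma > 0$ and $c < 0$, so by continuity these strict inequalities persist in a neighborhood of $\gamma = 0$, where this work places itself. Consequently, the square roots defining $Z_1,Z_2,\widetilde Z_1,\widetilde Z_2$ are continuous in $\gamma$, and so are the vectors $\psi_\alpha$ and the matrix $R$, which are rational combinations of the preceding continuous quantities with nonvanishing denominators.

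Third, the entries of $H_{\mathfrak{v}}$ -- the second derivatives $\partial_{\mathfrak{v}}^2\tau$, $\partial_{\mathfrak{v}}\partial_{\mathfrak{e}}\tau$, $\partial_{\mathfrak{e}}^2\tau$ at $\tau = 0$ -- are obtained via the same Jacobian inversion from $\partial_\tau(\partial_{\mathfrak{v}}\tau)$, $\partial_\beta(\partial_{\mathfrak{v}}\tau)$, $\partial_\tau(\partial_{\mathfrak{e}}\tau)$, $\partial_\beta(\partial_{\mathfrak{e}}\tau)$ at $\tau = 0$, see formulas \eqref{eq:dertauderetau} and \eqref{eq:derbetadervtau}. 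Each of these is a rational combination of cumulants with denominator $\Gamma^\gamma(\beta,0) \ne 0$, hence continuous in $\gamma$ by Lemma \ref{lem:continuity}. The matrix $H_{\mathfrak{e}} = -\tau H_{\mathfrak{v}} - 2(\nabla\tau)(\nabla\tau)^T$ at $\tau = 0$ reduces to a continuous expression in the already-established continuous quantities.

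Finally, $G^1_{\alpha\alpha'}$ and $G^2_{\alpha\alpha'}$ are defined by \eqref{eq:G1}--\eqref{eq:G2} as bilinear forms in $\psi_\alpha, \psi_{\alpha'}$ involving the matrices $H_{\mathfrak{v}}, H_{\mathfrak{e}}$ and the coefficients $R_{ij}$; since all these factors are continuous in $\gamma$, so are the coupling constants. The only delicate point in this plan is to keep track of where denominators are supposed to be nonzero ($\Gamma$, $c$, $\partial_\tau\mathfrak{v}|_{\tau=0}$, $\partial_\beta\mathfrak{e}|_{\tau=0}$), but these are explicit cumulants whose values at $\gamma = 0$ are non-degenerate, so by continuity they remain so on a neighborhood of $\gamma = 0$, which is the regime relevant here.
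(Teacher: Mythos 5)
The paper gives no explicit proof of this corollary; it is presented as an immediate consequence of Lemma \ref{lem:continuity}, so there is nothing to diverge from. Your argument correctly spells out the chain of reductions the paper leaves implicit: every constant in Definition \ref{def:coupling} and in \eqref{eq:G1}--\eqref{eq:G2} is an algebraic combination of cumulants (continuous by Lemma \ref{lem:continuity}) with the only denominators being $\Gamma$, $c$, $\partial_\tau\mathfrak{v}|_{\tau=0}$, $\partial_\beta\mathfrak{e}|_{\tau=0}$, all of which are nonvanishing variances (up to sign) and hence stay bounded away from zero.
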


We now give the values of each quantity that appears in \eqref{eq:G1} and \eqref{eq:G2}, taken first at $\tau=0$ and $\gamma = 0$.

\begin{prop}[Without anharmonicity, $\gamma=0$]
\label{prop:taylor} If $(\mf e,\mf v)$ are such that $\tau(\mf e,\mf v)=0$, and if $\gamma=0$ then we have
\begin{enumerate}[1.]
\item \textsc{Constants:}
\begin{align*}
c&=-2,\\
Z_1 & = -\sqrt\beta \quad \textrm{and}  \quad Z_2=\sqrt{2} \beta,\\
\widetilde Z_1 & = -\frac{1}{\beta}Z_1 =  -\frac{1}{\sqrt\beta} \quad \textrm{and} \quad  
\widetilde{Z}_2=\frac{1}{\sqrt 2 \beta}.
\end{align*}
\item \textsc{Vectors:}
\[
\psi_1= \begin{pmatrix}
-\frac{1}{\sqrt \beta} \\ 0
\end{pmatrix} \quad \textrm{and}\quad  
\psi_2 = \begin{pmatrix}
0 \\ \frac{1}{\sqrt 2 \beta}
\end{pmatrix}.
\]
\item \textsc{Matrices:}
\begin{align*}
R= \begin{pmatrix}
-\sqrt \beta & 0 \\
  0 & \sqrt 2 \beta 
\end{pmatrix},  \qquad
H_{\mf v} & = \begin{pmatrix}
0 & 0 \\
0 & 0
\end{pmatrix},\qquad
H_{\mf e}  = \begin{pmatrix}
-2 & 0 \\
0 & 0 
\end{pmatrix}.
\end{align*} 
\end{enumerate}
\end{prop}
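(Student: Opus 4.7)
The proof should be a direct substitution exercise, since the main nontrivial vanishing identities have already been established in Lemma \ref{lem:tau}.

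First, I would reduce everything to explicit Gaussian moments. Remark \ref{rem:gauss} says that at $\tau = 0$ and $\gamma = 0$, the random variable $Y = \omega_0$ is centered Gaussian with variance $1/\beta$, so all odd moments vanish and
\[
\langle Y^2 \rangle_{\beta,0,0} = \tfrac{1}{\beta}, \qquad \langle Y^2\,;\,Y^2 \rangle_{\beta,0,0} = \langle Y^4 \rangle_{\beta,0,0} - \langle Y^2 \rangle_{\beta,0,0}^2 = \tfrac{3}{\beta^2} - \tfrac{1}{\beta^2} = \tfrac{2}{\beta^2}.
\]
From the definition of $\Gamma$ (with $V \equiv 0$) this gives
\[
\Gamma^0(\beta,0) = \tfrac{\beta}{4}\; \tfrac{1}{\beta}\; \tfrac{2}{\beta^2} = \tfrac{1}{2\beta^2}.
\]
Then \eqref{eq:dertau} with $\tau = 0$, $\gamma = 0$ yields
\[
\partial_{\mf v} \tau \big|_{\tau = 0, \gamma = 0} = -\tfrac{1}{4\Gamma^0(\beta,0)} \langle Y^2\,;\,Y^2\rangle_{\beta,0,0} = -1,
\]
and Lemma \ref{lem:tau} already tells us $\partial_{\mf e}\tau\big|_{\tau=0} \equiv 0$. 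Hence the sound velocity is
\[
c = 2(\partial_{\mf v} - \tau\partial_{\mf e})\tau \big|_{\tau=0,\gamma=0} = 2\partial_{\mf v}\tau\big|_{\tau=0,\gamma=0} = -2.
\]
The four normalization constants $Z_1, Z_2, \widetilde Z_1, \widetilde Z_2$ then follow by plugging $c = -2$ and $\Gamma = 1/(2\beta^2)$ into the formulas of Definition \ref{def:coupling}.

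Second, the vectors $\psi_1$ and $\psi_2$ are obtained by substituting $\tau = 0$, $\partial_{\mf e}\tau = 0$, and $\partial_{\mf v}\tau = -1$ into their definitions, which kills the second coordinate of $\psi_1$ and the first coordinate of $\psi_2$. For the matrix $R$, the same substitutions produce the stated diagonal form. The matrix $H_{\mf v}$ vanishes identically because Lemma \ref{lem:tau} asserts that \emph{all} second derivatives $\partial_{\mf v}^2\tau$, $\partial_{\mf v}\partial_{\mf e}\tau$, $\partial_{\mf e}^2\tau$ are zero at $\tau = 0$. Finally, for $H_{\mf e}$, the term $-\tau H_{\mf v}$ vanishes trivially, and the remaining matrix reduces to its top-left entry $-2(\partial_{\mf v}\tau)^2 = -2$, all other entries being killed by the factor $\partial_{\mf e}\tau = 0$.

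There is no real obstacle: the proposition is essentially a bookkeeping calculation, and the only analytic input beyond elementary Gaussian moments is the vanishing statement of Lemma \ref{lem:tau}, which has already been proved. I would organize the proof as a single paragraph listing the three groups of quantities (constants, vectors, matrices) and, for each, pointing out that the stated value follows by substitution of $c = -2$, $\Gamma = 1/(2\beta^2)$, $\tau = 0$, $\partial_{\mf v}\tau = -1$, $\partial_{\mf e}\tau = 0$, and $\partial^2\tau = 0$ into the relevant definitions.
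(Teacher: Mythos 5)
Your overall plan is correct, and most of the computations (the Gaussian moments, the value $\Gamma^0(\beta,0)=\tfrac{1}{2\beta^2}$, the value $\partial_{\mf v}\tau|_{\tau=0,\gamma=0}=-1$, $c=-2$, and the resulting $Z$'s and $R$) go through exactly as you describe, and match what the paper does (explicit Gaussian moments via Remark \ref{rem:gauss} and Lemma \ref{lem:cumul}).

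However, there is a genuine gap in the justification of $H_{\mf v}=0$. You write that Lemma \ref{lem:tau} ``asserts that \emph{all} second derivatives $\partial_{\mf v}^2\tau$, $\partial_{\mf v}\partial_{\mf e}\tau$, $\partial_{\mf e}^2\tau$ are zero at $\tau=0$.'' This is a misreading: Lemma \ref{lem:tau} only states the vanishing of the two diagonal entries $\partial_{\mf v}^2\tau$ and $\partial_{\mf e}^2\tau$; it says nothing about the cross derivative $\partial_{\mf v}\partial_{\mf e}\tau$. Indeed, for general $\gamma$ this cross derivative does \emph{not} vanish at $\tau=0$ (only the diagonal entries of $H_{\mf v}$ vanish for all $\gamma$, which is exactly what the later proof of $G_{22}^1=0$ uses). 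Since $H_{\mf v}$ has $2\,\partial_{\mf v}\partial_{\mf e}\tau$ on its off-diagonal, you cannot conclude $H_{\mf v}=0$ from Lemma \ref{lem:tau} alone. To close the gap you have two natural options: (a) use the explicit fact, stated in the paper just after the definition of the currents, that at $\gamma=0$ one has $\tau(\mf e,\mf v)=-\mf v$, so $\tau$ is linear in $\mf v$ and independent of $\mf e$, hence the full Hessian vanishes; or (b) compute directly from \eqref{eq:dertauderetau} at $\gamma=0$ using the Gaussian moments $\langle Y;Y\rangle=\tfrac1\beta$ and $\langle Y;Y^2;Y\rangle=\tfrac{2}{\beta^2}$, which give $\partial_\tau(\partial_{\mf e}\tau)\big|_{\tau=0,\gamma=0}=\tfrac{1}{\Gamma\beta}-\tfrac{\beta}{2\Gamma}\cdot\tfrac{2}{\beta^2}=0$ and hence $\partial_{\mf v}\partial_{\mf e}\tau\big|_{\tau=0,\gamma=0}=0$ via the Jacobian inversion in the proof of Lemma \ref{lem:tau}. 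With either fix, the rest of your argument is sound.
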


\begin{proof}
This proposition follows from easy computations, using Remark \ref{rem:gauss}, and the following straightforward lemma: 
\begin{lem}
\label{lem:cumul}
Let $G$ be a standard Gaussian variable of mean zero and variance 1. Then we have that all the odd moments of $G$ are zero and
%\begin{equation}
%\langle Z^2 \rangle=1, \quad \langle Z^4 \rangle =3, \quad  \langle Z^6 \rangle =15, \quad \langle Z^8 \rangle =105, \quad \langle Z^{10} \rangle =945, \quad \langle Z^{12} \rangle =10395,
%\end{equation}
%{\color{red} 
%Can be replaced by
\[
\langle G^{2n} \rangle = \frac{(2n)!}{n! \; 2^n}.
\]
We also have
\[
\langle G^{p_1} \; ; \; G^{p_2} \; ;\; \cdots \; ; \; G^{p_k} \rangle = 0 \qquad \text{ as soon as }  \quad p_1+\cdots +p_k \text{ is odd}. 
\]
and
\begin{equation*}
\begin{split} & \langle G \, ; \, G \rangle= 1, \quad \qquad \qquad \langle G^2 \, ; \, G^2\rangle=2, \\ 
 &\langle G \, ; \, G\, ; \, G^2 \rangle=2, \qquad \quad \langle G^2 \, ; \, G^2\, ; \, G^2 \rangle=8. \end{split}
\end{equation*}
%and
%\begin{equation}
%\begin{split}
%&\langle G \, ; \, G\, ; \, G^2 \rangle=2, \qquad \quad \langle G \, ; \, G\, ; \, G^4 \rangle=12,\\
%&\langle G^2 \, ; \, G^2\, ; \, G^2 \rangle=8,\, \qquad \langle G^2 \, ; \, G^2\, ; \, G^4 \rangle=72, \qquad \langle G^2 \, ; \, G^4\, ; \, G^4 \rangle=768, \\
%\end{split}
%\end{equation}
%and
%\begin{align*}
%& \langle G \, ; \, G\, ; \, G^2 \, ; \, G^4 \rangle=72,\qquad \; \,\langle G^2 \, ; \, G^2\, ; \, G^2 \, ; \, G^4 \rangle=576,\\
%&  \langle G \, ; \, G\, ; \, G^4 \, ; \, G^4 \rangle=768,  \qquad  \langle G^2 \, ; \, G^2\, ; \, G^4 \, ; \, G^4 \rangle=7680,
%\end{align*}
%and 
%\[ 
% \langle G \, ; \, G\, ; \, G^2 \, ; \, G^4\, ; \, G^4 \rangle=7680, \qquad  \langle G^2 \, ; \, G^2\, ; \, G^2 \, ; \, G^4 \, ; \, G^4 \rangle=92160.
%\]
\end{lem}
\end{proof}

Therefore, from Proposition \ref{prop:taylor} we conclude that
\[ G_{11}^2 \big|_{\tau=0,\gamma=0} = - \sqrt 2 \neq 0.\] 
Since the map 
$\gamma \mapsto G_{11}^2$ is continuous in $\gamma \geq 0$, we conclude that  there exists $\gamma_0 >0$ such that, for any $\gamma \leq \gamma_0$, $G_{11}^2 \neq 0$. 

It remains to compute $G_{22}^1$, which is given by the following:

\begin{prop}
Assume that $(\mf e,\mf v)$ are such that $\tau(\mf e,\mf v)=0$. Then for any $\gamma \geq 0$, 
\[ G_{22}^1 = 0.\] 
\end{prop}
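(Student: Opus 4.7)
The plan is to exploit the degeneracy of $\psi_2$ and of $H_{\mf v}, H_{\mf e}$ at $\tau=0$ that follows directly from Lemma \ref{lem:tau}, and show that both quadratic forms $\psi_2^T H_{\mf v} \psi_2$ and $\psi_2^T H_{\mf e} \psi_2$ appearing in the definition \eqref{eq:G1} vanish.

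First, I would recall that Lemma \ref{lem:tau} provides, at any $({\mf e},{\mf v})$ such that $\tau=0$, the three identities
\[
\partial_{\mf e}\tau \big|_{\tau=0} \equiv 0, \qquad \partial_{\mf v}^2\tau \big|_{\tau=0} \equiv 0, \qquad \partial_{\mf e}^2\tau\big|_{\tau=0} \equiv 0.
\]
In particular, using the first identity, the vector $\psi_2$ in Definition \ref{def:coupling} reduces at $\tau=0$ to
\[
\psi_2 \big|_{\tau=0} = \frac{1}{Z_2}\begin{pmatrix} 0 \\ -\partial_{\mf v}\tau \end{pmatrix},
\]
so only the $(2,2)$-entries of $H_{\mf v}$ and $H_{\mf e}$ contribute to $\psi_2^T H_{\mf v}\psi_2$ and $\psi_2^T H_{\mf e}\psi_2$.

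Next, I would compute these two entries. For $H_{\mf v}$, its $(2,2)$-entry is $2\,\partial_{\mf e}^2 \tau$, which vanishes at $\tau=0$ by Lemma \ref{lem:tau}. Hence
\[
\psi_2^T H_{\mf v} \psi_2 \big|_{\tau=0} = \frac{2(\partial_{\mf v}\tau)^2}{Z_2^2}\,\partial_{\mf e}^2 \tau\,\Big|_{\tau=0} = 0.
\]
For $H_{\mf e}$, using the definition
\[
H_{\mf e} = -\tau H_{\mf v} - 2\begin{pmatrix} (\partial_{\mf v}\tau)^2 & \partial_{\mf v}\tau\,\partial_{\mf e}\tau \\ \partial_{\mf v}\tau\,\partial_{\mf e}\tau & (\partial_{\mf e}\tau)^2 \end{pmatrix},
\]
the prefactor $-\tau$ kills the first term and $\partial_{\mf e}\tau|_{\tau=0}=0$ kills all entries of the second matrix except the $(1,1)$-entry. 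Therefore
\[
H_{\mf e}\big|_{\tau=0} = \begin{pmatrix} -2(\partial_{\mf v}\tau)^2 & 0 \\ 0 & 0 \end{pmatrix},
\]
and since $\psi_2|_{\tau=0}$ has a zero first coordinate we again get $\psi_2^T H_{\mf e} \psi_2|_{\tau=0} = 0$.

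Substituting both vanishings into \eqref{eq:G1} yields
\[
G^1_{22} = \tfrac{1}{2}\bigl( R_{11}\, \psi_2^T H_{\mf v} \psi_2 + R_{12}\, \psi_2^T H_{\mf e} \psi_2\bigr) = 0,
\]
regardless of the value of $\gamma\ge 0$. There is no real obstacle here: the argument is a direct computation, and the only non-trivial input is Lemma \ref{lem:tau}, which itself relied on the parity symmetry of $\nu_{\beta,0,\gamma}$ (encoded in Lemma \ref{lem:simp}) together with the derivation rules \eqref{eq:crules1}--\eqref{eq:crules33}. Combined with the continuity statement of Corollary \ref{cor:continuity} and the fact that $G^2_{11}|_{\tau=0,\gamma=0} = -\sqrt 2 \neq 0$ established from Proposition \ref{prop:taylor}, this places the model in the first regime of the four cases listed in \cite[Section 2.2]{SS}, confirming the heuristic prediction that the sound mode is diffusive and the heat mode is $3/2$-L\'evy.
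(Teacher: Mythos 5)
Your proposal is correct and follows essentially the same approach as the paper's own proof: use Lemma \ref{lem:tau} to establish the degeneracy of $\psi_2$, $H_{\mf v}$, and $H_{\mf e}$ at $\tau=0$, then compute the matrix products in \eqref{eq:G1} directly. Your version is slightly leaner in that it exploits the vanishing first coordinate of $\psi_2$ to reduce everything to the $(2,2)$-entries, so it only needs $\partial_{\mf e}\tau|_{\tau=0}=0$ and $\partial_{\mf e}^2\tau|_{\tau=0}=0$ (plus the explicit form of $H_{\mf e}$), while the paper also records the facts that $\partial_{\mf v}^2\tau|_{\tau=0}=0$ and that $R$ is diagonal — which are true but not needed for $G^1_{22}$ specifically.
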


\begin{proof}
We let the reader check, using the proof of Lemma \ref{lem:tau}, that for any $\gamma>0$, 
\begin{enumerate}[1.]
\item the diagonal coefficients of $H_{\mf v}$ are equal to 0;
\item the off-diagonal coefficients of $H_{\mf e}$ are equal to 0, as well as the second diagonal coefficient. In other words, the unique non-zero coefficient of $H_{\mf e}$ is the first one which is equal to $-2 (\partial_{\mf  v}\tau)^2$;
\item the off-diagonal of $R$ are equal to 0;
\item the second component of $\psi_1$ is equal to 0;
\item the first component of $\psi_2$ is equal to 0.
\end{enumerate}
Then, by computing the matrix product appearing in \eqref{eq:G1}, the result follows.
\end{proof}
%{\color{red} I think we need to go further in the expansions of $G_{11}^1$ and $G_{22}^1$. The dichotomy is the following: 
%\begin{enumerate}
%\item If $G_{11}^1 >0$ and $G_{22}^1 \geq 0$ then we obtain KPZ + $\frac53$-Levy 
%\item If $G_{11}^1=0$ and $G_{22}^1=0$ then we obtain DIFF + $\frac 32$-Levy 
%\item If $G_{11}^1=0$ and $G_{22}^1>0$ then we obtain two Gold-Levy.
%\end{enumerate}
%}

\section{Estimates on the Poisson equation}
\label{sec:poisson}

In this appendix we prove Proposition \ref{prop:estimate2} and Proposition \ref{prop:estimatevn}.  Several times we will use the following change of variable property proved in \cite{BGJ} and that we recall here.

\begin{lem}
\label{lem:cov}
Let $f: \R^2 \to {\mathbb C}$ be a $n$-periodic function in each direction of $\R^2$. Then we have
\begin{equation*}
\int\int_{[-\frac{n}{2}, \frac{n}{2}]^2} f (k,\ell) \, dk d\ell \; = \; \int \int_{[-\frac{n}{2}, \frac{n}{2}]^2} f(\xi- \ell, \ell) \, d\xi d\ell.
\end{equation*}
\end{lem}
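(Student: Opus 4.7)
\smallskip

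The plan is to reduce the identity to a one-dimensional statement by Fubini and then exploit the $n$-periodicity of $f$ in its first argument. First I would write
\[
\iint_{[-\frac{n}{2},\frac{n}{2}]^2} f(k,\ell)\, dk\, d\ell = \int_{-\frac{n}{2}}^{\frac{n}{2}} \bigg( \int_{-\frac{n}{2}}^{\frac{n}{2}} f(k,\ell)\, dk \bigg) d\ell,
\]
and, for each fixed $\ell$, perform the affine change of variable $\xi = k + \ell$ in the inner integral. The Jacobian is $1$, but the interval of integration shifts, giving
\[
\int_{-\frac{n}{2}}^{\frac{n}{2}} f(k,\ell)\, dk = \int_{-\frac{n}{2}+\ell}^{\frac{n}{2}+\ell} f(\xi-\ell,\ell)\, d\xi.
\]

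The key observation is that, since $f$ is $n$-periodic in its first argument, the integrand $\xi \mapsto f(\xi-\ell,\ell)$ is $n$-periodic in $\xi$. The integral of any $n$-periodic function over an interval of length $n$ is independent of the position of that interval, hence
\[
\int_{-\frac{n}{2}+\ell}^{\frac{n}{2}+\ell} f(\xi-\ell,\ell)\, d\xi = \int_{-\frac{n}{2}}^{\frac{n}{2}} f(\xi-\ell,\ell)\, d\xi.
\]
Substituting back into the iterated integral and applying Fubini once more yields the claim.

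There is no real obstacle: the only point to be careful about is that the shift of the interval of integration produced by the change of variable $k \mapsto \xi - \ell$ must be absorbed by invoking periodicity in the first argument, not the second. One could alternatively phrase the whole argument by viewing the integration domain as $(\R/n\Z)^2$ equipped with Haar measure, in which case translation invariance immediately gives the identity; but the direct Fubini plus periodicity argument above is the shortest route.
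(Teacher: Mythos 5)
Your proof is correct. The paper itself does not reproduce a proof but cites this change-of-variable identity from \cite{BGJ}; your argument --- Fubini, the affine substitution $\xi = k+\ell$ at fixed $\ell$, and then absorbing the resulting shift of the integration interval $[-\frac{n}{2}+\ell,\frac{n}{2}+\ell]$ back to $[-\frac{n}{2},\frac{n}{2}]$ using the $n$-periodicity of $f$ in its \emph{first} argument --- is precisely the standard and natural route, and you correctly identify that it is periodicity in the first (not the second) variable that makes the interval shift harmless.
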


Another useful lemma is
\begin{lem}
\label{lem:sfp}
If $g \in {\mc S} (\R)$, then for any $p\geq 1$, there exists a constant $C:=C(p,g)$ such that for any $|y| \leq \frac12$,
\begin{equation*}
\big| \cF_n(g) (ny) \big|^2 \leq \frac{C}{1+ (n|y|)^{p}}.
\end{equation*}
\end{lem}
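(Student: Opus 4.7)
The plan is to recognize $\mathcal{F}_n(g)(ny)$ as a periodization of $\mathcal{F}(g)$ via the Poisson summation formula, and then exploit the Schwartz decay of $\mathcal{F}(g)$.

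First, I apply the Poisson summation formula to the function $f(x) := g(x/n) e^{2i\pi x y}$ on $\mathbb{R}$. A straightforward change of variables $u = x/n$ identifies its Fourier transform as $\hat{f}(k) = n \, \mathcal{F}(g)(n(y-k))$, so that
\[
\mathcal{F}_n(g)(ny) = \frac{1}{n} \sum_{x\in\mathbb{Z}} g\big(\tfrac{x}{n}\big) e^{2i\pi x y} = \sum_{k\in\mathbb{Z}} \mathcal{F}(g)\big(n(y-k)\big).
\]
This identity reduces the problem to estimating the right-hand side, using only the fact that $\mathcal{F}(g) \in \mathcal{S}(\mathbb{R})$ since $g \in \mathcal{S}(\mathbb{R})$.

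Next, I isolate the $k=0$ term. Because $\mathcal{F}(g)$ is Schwartz, for any exponent $q$ there exists $C_q$ such that $|\mathcal{F}(g)(\xi)| \le C_q/(1+|\xi|^q)$ for all $\xi \in \mathbb{R}$. Applied with $q = p/2$ (for example), this immediately gives $|\mathcal{F}(g)(ny)|^2 \le C/(1+(n|y|)^p)$, which is already of the desired form.

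For the tail $\sum_{k\neq 0} \mathcal{F}(g)(n(y-k))$, I use that $|y| \le 1/2$ implies $|y-k| \ge |k|/2$ whenever $k\neq 0$. Invoking the Schwartz estimate with a larger exponent, say $p+2$, yields
\[
\sum_{k\neq 0} \big|\mathcal{F}(g)(n(y-k))\big| \;\le\; \sum_{k\neq 0} \frac{C}{1 + (n|k|/2)^{p+2}} \;\le\; \frac{C'}{n^{p+2}}.
\]
A short case analysis (splitting $|y| \le 1/n$ versus $1/n < |y| \le 1/2$) shows that this tail is bounded by $C''/(1+(n|y|)^p)$ uniformly in $n \ge 1$.

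Combining the two estimates and squaring (using $(a+b)^2 \le 2a^2+2b^2$) yields the claim. The argument is essentially routine once Poisson summation is in hand; the only minor subtlety is the case split in the last step to absorb the uniform tail bound $1/n^{p+2}$ into the target bound $1/(1+(n|y|)^p)$, but no real obstacle arises.
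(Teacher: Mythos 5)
Your proof is correct. The Poisson summation identity
\[
\cF_n(g)(ny)=\sum_{k\in\Z}\cF(g)\big(n(y-k)\big)
\]
follows exactly as you claim (the function $x\mapsto g(x/n)e^{2i\pi xy}$ is Schwartz, so Poisson summation applies without qualification, and the change of variables gives the stated Fourier transform), the separation of the $k=0$ term from the $|k|\ge 1$ tail using $|y-k|\ge |k|/2$ for $|y|\le 1/2$ is sound, and the final case split on $n|y|\lessgtr 1$ to absorb the uniform tail bound $Cn^{-(p+2)}$ into $C''/(1+(n|y|)^p)$ goes through as sketched (one needs $C'(1+(n|y|)^p)\le 2C'n^p\le 2C'n^{p+2}$, which is immediate since $|y|\le 1/2$ and $n\ge 1$). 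Squaring via $(a+b)^2\le 2a^2+2b^2$ closes the argument. The paper itself states Lemma \ref{lem:sfp} without proof, treating it as a standard consequence of the Schwartz class, so there is no authorial argument to compare against; yours is the natural periodization argument and is complete.
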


The following result is an easy corollary of the previous lemma.
\begin{cor}\label{cor:fourier}
If $g \in {\mc S}(\R)$, then, for any $p\geq 0$,
\[ \lim_{n\to\infty} \int_{[-\frac{n}{2},\frac{n}{2}]} |\xi|^p\, \left\vert \cF_n(g)(\xi)-(\mc F g)(\xi)\right\vert^2 {d}\xi =0,\]
and there exists a constant $C>0$ such that
\[ \int_{[-\frac{n}{2},\frac{n}{2}]} \vert \xi\vert^p \,\vert \cF_n(g)(\xi) \vert^2 {d}\xi \leq C.\]
\end{cor}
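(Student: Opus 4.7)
The plan is to derive Corollary \ref{cor:fourier} directly from Lemma \ref{lem:sfp} via a Riemann sum pointwise convergence together with a dominated convergence argument on $\R$, with no further machinery required.

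For the second (uniform) bound, I would apply Lemma \ref{lem:sfp} with the exponent $p+2$ in place of $p$. Via the substitution $y=\xi/n$, this yields $|\cF_n(g)(\xi)|^2 \leq C/(1+|\xi|^{p+2})$ for every $|\xi| \leq n/2$. Consequently
\[
\int_{[-n/2,n/2]} |\xi|^p \, |\cF_n(g)(\xi)|^2 \, d\xi \; \leq\; C\int_\R \frac{|\xi|^p}{1+|\xi|^{p+2}} \, d\xi,
\]
and the right-hand side is finite and independent of $n$.

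For the convergence of the first integral to zero, I would combine pointwise convergence with dominated convergence. Pointwise, for each fixed $\xi\in\R$, $\cF_n(g)(\xi)$ is a mesh-$1/n$ Riemann sum for the absolutely convergent integral defining $\mathcal{F}(g)(\xi)$; since $g\in\mathcal{S}(\R)$, standard tail estimates yield $\cF_n(g)(\xi)\to\mathcal{F}(g)(\xi)$ as $n\to\infty$. A uniform integrable majorant on $\R$ is then obtained by extending the integrand by zero outside $[-n/2,n/2]$ and combining Lemma \ref{lem:sfp} applied to $\cF_n(g)$ with the obvious Schwartz decay of $\mathcal{F}(g)$, giving
\[
|\xi|^p \, |\cF_n(g)(\xi) - \mathcal{F}(g)(\xi)|^2 \, \mathbf{1}_{[-n/2,n/2]}(\xi) \; \leq\; \frac{C\, |\xi|^p}{1+|\xi|^{p+2}} \;\in\; L^1(\R).
\]
The dominated convergence theorem then concludes. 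There is no real obstacle: the only subtlety is that the integration domain depends on $n$, which is handled by the indicator-function trick above; otherwise the result is, as the paper states, a direct consequence of Lemma \ref{lem:sfp}.
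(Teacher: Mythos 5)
Your proof is correct, and since the paper states Corollary~\ref{cor:fourier} only as ``an easy corollary of the previous lemma'' without spelling it out, the argument you give -- apply Lemma~\ref{lem:sfp} with exponent $p+2$ to get the uniform bound $|\cF_n(g)(\xi)|^2 \lesssim (1+|\xi|^{p+2})^{-1}$ on $[-n/2,n/2]$, deduce the second estimate by integration, and obtain the first by combining pointwise Riemann-sum convergence $\cF_n(g)(\xi)\to\cF(g)(\xi)$ with dominated convergence using the majorant $C|\xi|^p/(1+|\xi|^{p+2})\in L^1(\R)$ -- is precisely the intended route. The indicator-function device to handle the $n$-dependent domain is exactly the right way to set up dominated convergence, and requiring $p+2\ge 1$ (the hypothesis of Lemma~\ref{lem:sfp}) is automatic for $p\ge 0$.
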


We start with some estimates concerning the solution $h_n$ of \eqref{eq:poisson}, and then we treat the solution $v_n$ of \eqref{eq:v}. All technical estimates involving integral calculus are detailed in the next section, see Appendix \ref{app:integral}.

\subsection{Proof of Proposition \ref{prop:estimate2}} 

Let us recall the explicit expression for the Fourier transform of  $h_n$ given in  \eqref{eq:fourierofh}. 

\subsubsection{Proof of  \eqref{eq:est0}}
\label{ssec:l2ofh}
From the Parseval-Plancherel relation and from \eqref{eq:fourierofh}, we have  that
\begin{align*}
\tfrac{1}{n^2}\sum_{x,y}&h_n^2\big(\tfrac x n,\tfrac y n\big) 
		\\&=\iint_{[-\frac{n}{2}, \frac{n}{2}]^2} | \mc F_n(h_n) (k, \ell) |^2 dk d\ell \\
		& = \frac{(1+\kappa_n\gamma_n)^4}{4n} \iint_{[-\frac{n}{2}, \frac{n}{2}]^2}\frac{\Omega^2\big( \tfrac{k}{n}, \tfrac{\ell}{n}\big) \, |{\mc F_n(f)} (k+\ell)|^2}{\left(1+\kappa_n\gamma_n\right)^2\Lambda^2 \big( \tfrac{k}{n}, \tfrac{\ell}{n}\big) + \Omega^2 \big( \tfrac{k}{n}, \tfrac{\ell}{n}\big)} \; dk d\ell \\
		&\lesssim \frac{1}{n} \int_{[-\frac{n}{2}, \frac{n}{2}]} \big|1 - e^{\tfrac{2i\pi\xi}{n}}\big|^2\; \big|\mc F_n(f) (\xi)\big|^2 \; \left[ \int_{[-\frac{n}{2}, \frac{n}{2}]} \frac{d\ell}{\Lambda^2 \big( \tfrac{\xi - \ell}{n}, \tfrac{\vphantom{\xi}\ell}{n}\big ) + \Omega^2 \big( \tfrac{\xi - \ell}{n}, \tfrac{\vphantom{\xi}\ell}{n}\big)} \right] \; d\xi \\
		&\approx n \int_{[-\frac{1}{2}, \frac{1}{2}]^2} \sin^2 (\pi y) |\mc F_n(f) (ny) |^2\; W(y) dy,
\end{align*}
where for the last equality we performed the changes of variables $y=\frac{\xi}{n}$ and $x= \frac{\ell}{n}$. The function $W$ is defined by
\begin{equation}
\label{eq:W}
W(y) =  \int_{[-\frac{1}{2}, \frac{1}{2}]}  \frac{dx}{\Lambda^2(y-x,x) + \Omega^2(y-x, x)}.
\end{equation}
It is proved in \cite[Lemma F.5]{BGJ} that $W(y) \leq C |y|^{-\frac32}$ on $[-\frac{1}{2}, \frac{1}{2}]$. Hence, we get, by using  Lemma \ref{lem:sfp} with $p=3$ and the elementary inequality $\sin^2 (\pi y) \leq \pi^2 y^2$, that
\[
\iint_{[-\frac{n}{2}, \frac{n}{2}]^2} | \mc F_n(h_n) (k, \ell) |^2 dk d\ell \; \lesssim\; n \int_{[-\frac{1}{2}, \frac{1}{2}]} \cfrac{|y|^{\frac12}}{1+ (n|y|)^3} dy \; \lesssim \; n^{-\frac12},
\]
which proves \eqref{eq:est0}.

\subsubsection{Proof of  \eqref{eq:est0x}}
\label{ssec:diag}

 The Plancherel-Parseval equality gives 
\[
\tfrac{1}{n}\sum_{x\in\Z}h_n^2\big(\tfrac{x}{n},\tfrac{x}{n}\big)=\int_{[-\frac{n}{2},\frac{n}{2}]} \big| \cF_n({{\overline h}}_n)(\xi) \big|^2 d\xi,
\]
where 
$
{\overline h}_n\big(\tfrac x n\big)=h_n\big(\tfrac x n,\tfrac{x}{n}\big), $ and then $\cF_n(\overline{h}_n)(\xi)=\tfrac{1}{n}\sum_{x\in\Z}h_n\big(\tfrac x n,\tfrac{x}{n}\big)e^{2i\pi x\frac\xi n}.
$
%Let us define the two-dimensional Fourier transform as
%\[
%\cF_n(h_n)(k,\ell)=\frac{1}{n^2}\sum_{x,y\in\Z}h_n\Big(\frac x n,\frac y n\Big)e^{2i\pi (x\frac k n+y\frac\ell n)}.
%\]
By definition,
\begin{align*}
\cF_n(\overline{h}_n)(\xi)&=\frac{1}{n}\sum_{x\in\Z}\bigg( \iint_{[-\frac{n}{2},\frac{n}{2}]^2}dkd\ell \; 
\cF_n(h_n)(k,\ell)e^{-2i\pi x \frac{k+\ell}{n}} e^{2i\pi x\frac{\xi}{n}}\bigg)\\
& = \int_{[-\frac{n}{2},\frac{n}{2}]}d\ell \; \cF_n(h_n)(\xi-\ell,\ell).
\end{align*}
By \eqref{eq:fourierofh}, we compute: 
\begin{align}
\cF_n(\overline{h}_n)(\xi) & = \frac{(1+\kappa_n\gamma_n)^2}{2\sqrt n} \int_{[-\frac{n}{2},\frac{n}{2}]}d\ell \; \frac{i\Omega(\tfrac{\xi-\ell}{n},\tfrac{\ell}{n})}{((1+\gamma_n\kappa_n)\Lambda-i\Omega)(\tfrac{\xi-\ell}{n},\tfrac{\ell}{n})}\cF_n(f)(\xi)\notag\vphantom{\Bigg(}\\
& = \vphantom{\Bigg(}\frac{(1+\kappa_n\gamma_n)^2}{2}\; \sqrt n\; I\Big(\frac{\xi}{n}\Big) \cF_n(f)(\xi),\label{eq:fourier1}
\end{align}
where $I$ has already been defined in \eqref{eq:defI}. 
From Lemma \ref{lem:integral_residues}, we have that
$|I(y)| \; \lesssim \; |\sin(\pi y)|^{\frac12},$
and consequently, from \eqref{eq:fourier1} we get
\[
\big|\cF_n({{\overline h}}_n)(\xi)\big|^2  \; \lesssim \;  n \frac{|\xi|}{n} \big|\cF_n(f)(\xi)\big|^2,
\]
and using Corollary \ref{cor:fourier}
\[
\tfrac{1}{n}\sum_{x\in\Z}h_n^2\big(\tfrac{x}{n},\tfrac{x}{n}\big)  \; \lesssim \; \int_{[-\frac{n}{2},\frac{n}{2}]}  d\xi\; |\xi|\; \big| \cF_n(f)(\xi) \big|^2 \; \lesssim \; 1,
\]
which proves \eqref{eq:est0x}.

\subsubsection{Proof of \eqref{eq:est1}.}
Recall that $q_n : {\frac{1}{n}} \Z \to \R$ is the function defined by
\[q_n \big(\tfrac{x}{n}\big) = {{\mc D}}_n h_n \,  \big(\tfrac{x}{n}\big).
\]
%Since $\mc F_n(h_n)$ is a symmetric function we can easily see (as in \cite[Lemma D.1]{BGJ}) that
%\begin{equation*}
%\mc F_n(q_n) (\xi) = -\tfrac{i}{2}  \sum_{x \in \Z} e^{ 2i \pi x\frac{ \xi}{n}} \iint_{[-\frac{n}{2}, \frac{n}{2}]^2} e^{-2i \pi x\frac{ (k+\ell) }{n}}  \Omega \big( \tfrac{k}{n}, \tfrac{\ell}{n}\big)  \mc F_n(h_n) (k,\ell) \, dk d\ell.
%\end{equation*}
%We use now Lemma \ref{lem:cov} and the inverse Fourier transform relation to get
%\[
%\mc F_n(q_n) (\xi) =- \tfrac{in}{2} \int_{[-\frac{n}{2}, \frac{n}{2}]} \Omega  \big( \tfrac{\xi -\ell}{n}, \tfrac{\ell}{n}\big)\;  \mc F_n(h_n) (\xi- \ell,\ell) \, d\ell.
%\]
%By the explicit expression \eqref{eq:fourierofh} of $\mc F_n(h_n)$ we obtain that
We already proved in Section \ref{ssec:main} that
\begin{equation*}
\mc F_n({q_n}) (\xi) = \frac{(1+\gamma_n\kappa_n)^2}{4} \; n^{\frac32}\; K\Big(\frac{\xi}{n}\Big)\; \mc F_n(f) (\xi),
\end{equation*}
where $K$ is defined in \eqref{eq:defK}. 
%\begin{equation}K(y):=\int_{[-\frac{1}{2}, \frac{1}{2}]} \frac{\Omega^2  ( y-x, x)}{((1+\gamma_n\kappa_n)\Lambda-i\Omega) ( y-x, x)} \, dx. \label{eq:defK}
%\end{equation}
From Lemma \ref{lem:integral_residues} we get
\[\big|\mc F_n(q_n)(\xi) \big|^2 \; \lesssim \; |\xi|^3 \big|\mc F_n(f)(\xi) \big|^2 \] 
and therefore
\[\tfrac{1}{n} \sum_{x\in\bb Z} \big[\mc D_nh_n\big(\tfrac{x}{n}\big)\big]^2 \; \lesssim \; \int_{[-\frac{n}{2},\frac{n}{2}]}  d\xi\; |\xi|^3\; \big| \cF_n(f)(\xi) \big|^2 \; \lesssim \; 1,
\]
using again Corollary \ref{cor:fourier}. This proves \eqref{eq:est1}.

\subsubsection{Proof of  \eqref{eq:est4}.}
\label{ssec:grad} 

By Parseval-Plancherel's relation we have that
\begin{equation*}
\tfrac{1}{n} \sum_{x \in \ZZ} \big[h_n\big(\tfrac{x+1}{n},\tfrac{x+1}{n}\big)- h_n\big(\tfrac{x}{n},\tfrac{x}{n}\big)\big]^2= \int_{[-\frac{n}{2},\frac{n}{2}]} \big|1- e^{2i\pi \frac{\xi}{n}}\big|^2 \big|\cF_n({{\overline h}}_n)(\xi) \big|^2 d\xi
\end{equation*}
so that by \eqref{eq:fourier1}, together with Lemma \ref{lem:integral_residues} and Corollary \ref{cor:fourier}, we get that
\begin{align*}\tfrac{1}{n} \sum_{x \in \ZZ} &\big[h_n\big(\tfrac{x+1}{n},\tfrac{x+1}{n}\big)- h_n\big(\tfrac{x}{n},\tfrac{x}{n}\big)\big]^2\\
& =(1+ \kappa_n \gamma_n)^4 n \int_{[-\frac{n}{2},\frac{n}{2}]}  \sin^{2} \big( \tfrac{\xi}{n} \big) I^2  \big( \tfrac{\xi}{n} \big) \big| {\mc F}_n (f) (\xi)\big|^2 \, d\xi \\
& \; \lesssim \; n^{-2} \, \int_{[-\frac{n}{2},\frac{n}{2}]}  |\xi|^3 \big| {\mc F}_n (f) (\xi)\big|^2 \, d\xi \; \lesssim \; n^{-2}.
\end{align*}
This proves \eqref{eq:est4}.

\subsection{Proof of Proposition \ref{prop:estimatevn}: estimates on $v_n$}

In this section we will use the explicit expression of the Fourier transform of $v_n$, given in \eqref{eq:fourierofv_new}, and then repeat the same computations as in the proof of Proposition \ref{prop:estimate2}.

\subsubsection{Proof of \eqref{eq:est0-vn}}

The same computation of Section \ref{ssec:l2ofh} gives that 
\begin{align*}
\tfrac{1}{n^2}\sum_{x,y\in \bb Z} v_n^2\big(\tfrac x n,\tfrac y n\big) & \; \lesssim \; n \int_{[-\frac12,\frac12]} \big|\mc F_n(f)(ny)\big|^2 \big|L(y)\big|^2\; W(y)\; dy \\
& \; \lesssim \; n \int_{[-\frac12,\frac12]} \big|\mc F_n(f)(ny)\big|^2 \big|\sin(\pi y)\big|^{\frac32}\; dy.
\end{align*}
The last inequality comes from Lemma \ref{lem:integral_residues} below, and using Lemma \ref{lem:sfp}, this proves  \eqref{eq:est0-vn}.
%\textcolor{red}{Now, from Lemma \ref{lem:sfp} and by a change of variables last expression can be bounded from above by 
%\begin{equation*}
% n^{-\frac 32} \int_0^{n/2} \frac{z^{3/2}}{1+z^p} dz
%\end{equation*}
% and for $p>0$ sufficiently big the integral is finite.  This proves \eqref{eq:est0-vn}.}

\subsubsection{Proof of \eqref{eq:est0x-vn}}

Similarly to Section \ref{ssec:diag} we have
\[ 
\tfrac1n \sum_{x\in\Z} v_n^2\big(\tfrac x n , \tfrac x n\big) = \int_{[-\frac n 2,\frac n 2]} \big|\mc F_n(\overline{v}_n(\xi)\big|^2  \; d\xi,
\]
where \[
\overline{v}_n\big(\tfrac x n\big)  = v_n\big(\tfrac x n,\tfrac x n\big),\qquad \text{and}\qquad 
\mc F_n(\overline{v}_n)(\xi)  = \int_{[-\frac n 2,\frac n 2]} \mc F_n(v_n)(\xi-\ell,\ell)\; d\ell.
\]
From \eqref{eq:fourierofv_new} we get
\begin{align*}
\mc F_n(\overline{v}_n)(\xi) & = \frac{1}{\sqrt n}\int_{[-\frac n 2,\frac n 2]} \frac{e^{2i\pi \frac{\xi-\ell}n}+e^{2i\pi \ell} }{((1+\gamma_n\kappa_n)\Lambda - i \Omega)(\frac{\xi-\ell}{n},\frac \ell n)} \; d\ell \; L\Big(\frac \xi n\Big) \mc F_n(f)(\xi) \vphantom{\Bigg(} \\
& = \sqrt n \; O\Big(\frac\xi n\Big)L\Big(\frac \xi n\Big)\; \mc F_n(f)(\xi) , \vphantom{\Bigg(} 
\end{align*}
where $O$ has been defined in \eqref{eq:defO}. 
From Lemma \ref{lem:integral_residues} we get
\[
\tfrac1n \sum_{x\in\Z} v_n^2\big(\tfrac x n , \tfrac x n\big) \; \lesssim \; n^{-1} \int_{[-\frac n 2,\frac n 2]}  |\xi|^2\; \big|\mc F_n(f)(\xi)\big|^2\; d\xi.
\]
From Corollary \ref{cor:fourier} we get \eqref{eq:est0x-vn}.

\subsubsection{Proof of \eqref{eq:est1-vn}}

A straightforward computation gives that 
\[ \mc F_n(\mc D_n v_n)(\xi) = -2n\big(1-e^{2i\pi\frac \xi n}\big) \mc F_n(w_n)(\xi) M\big(\tfrac \xi n \big),
\]
where $w_n:\frac1n \Z \to \R$ has been defined in \eqref{eq:w} and $M$ has been defined in \eqref{eq:defM}. 
Finally from \eqref{eq:fourierofw} we get 
\[
\mc F_n(\mc D_nv_n)(\xi)=n^{\frac32}\big(1-e^{2i\pi\frac \xi n}\big) \mc F_n(f)(\xi) L\big(\tfrac \xi n \big) M\big(\tfrac \xi n\big),
\]
where $L$ is defined in \eqref{eq:defL}. Therefore, using Lemma \ref{lem:integral_residues} below, 
\begin{align*}
\big\| \mc D_nv_n\big\|_{2,n}^2& \lesssim \;  n^3 \int_{[-\frac n 2, \frac n 2]} \sin^2\big(\tfrac{\pi \xi}{n}\big) \big|\mc F_n(f)(\xi)\big|^2 \big|L\big(\tfrac \xi n \big)\big|^2 \big| M\big(\tfrac \xi n \big)\big|^2 \; d\xi \\
& \lesssim \;  n^4 \int_{[-\frac12,\frac12]} \sin^4(\pi y) \big|\mc F_n(f)(ny)\big|^2\; dy,
\end{align*}
and  from Lemma \ref{lem:sfp} this proves \eqref{eq:est1-vn}.

\subsubsection{Proof of \eqref{eq:est4-vn}}
As in Section \ref{ssec:grad}, using Lemma \ref{lem:integral_residues} we have
\begin{align*}
 \sum_{x \in \Z} \big[v_n\big(\tfrac{x+1} n,\tfrac{x+1}n\big)&-v_n\big(\tfrac x n,\tfrac x n\big)\big]^2  = n\int_{[-\frac n 2,\frac n 2]} \big|1-e^{2i\pi\frac \xi n}\big|^2 \; \big|\mc F_n(\overline{v}_n)(\xi)\big|^2\; d\xi \\
& = 4n^2 \int_{[-\frac n 2,\frac n 2]} \sin^2\big(\tfrac \xi n\big) \; \big|O \big(\tfrac \xi n\big)\big|^2 \;  \big|L \big(\tfrac \xi n\big)\big|^2 \; \big|\mc F_n(f)(\xi)\big|^2\; d\xi \\
&\lesssim \;  n^{-2}  \int_{[-\frac n 2,\frac n 2]}  |\xi|^4\; \big|\mc F_n(f)(\xi)\big|^2\; d\xi,
\end{align*}
which proves \eqref{eq:est4-vn}, from Corollary \ref{cor:fourier}.

\subsubsection{Proof of \eqref{eq:est5-vn}}
Let $\theta_n:\frac1n\Z\to\R$ be defined by
\[ \theta_n\big(\tfrac x n\big) = v_n\big(\tfrac x n,\tfrac{x+1} n\big)-v_n\big(\tfrac x n, \tfrac x n\big), \]
so that 
\[ \tfrac{1}{\sqrt n} \widetilde{\mc D}_nv_n\big(\tfrac x n,\tfrac y n \big) = n^{\frac32} \begin{cases} \theta_n(\frac x n); & y=x+1 \\ 
\theta_n(\frac{x-1}n); & y=x-1 \\ 
0 & \text{otherwise}.\end{cases}\]
Moreover, 
\[ \mc F_n(\theta_n)(\xi)  = \int_{[-\frac n 2,\frac n 2]} \mc F_n(v_n)(\xi-\ell,\ell)\big(e^{-2i\pi\frac \ell n}-1\big)\; d\ell \\
 = \sqrt n \mc F_n(f)(\xi) L\big(\tfrac \xi n\big) N\big(\tfrac \xi n \big), \]
where $L$ has been defined in \eqref{eq:defL} and $N$ is given by 
\begin{equation}\label{eq:defN}
N(y) =\cfrac{1}{1- e^{-2 i \pi y}} \, \int_{[-\frac 1 2,\frac 12]} (e^{-2 i \pi x} -1) \, \Theta (y-x,x)\,  dx. 
\end{equation}
% \\ & \notag = \int_{[-\frac 1 2,\frac 12]} \frac{(e^{2i\pi (y-x)}+e^{2i\pi x})(e^{-2i\pi x}-1)}{((1+\gamma_n\kappa_n)\Lambda-i\Omega)(y-x,x)}\; dx.
%\end{align}
Therefore, using Lemma \ref{lem:integral_residues} below, we get
\begin{align*}
n^{-2} \big\| \widetilde{\mc D}_nv_n\big\|_{2,n}^2 = n \big\|\theta_n\big\|_{2,n}^2 & \lesssim \; n^2 \int_{[-\frac n 2, \frac n 2]}  \big|\mc F_n(f)(\xi)\big|^2 \big|L\big(\tfrac \xi n \big)\big|^2 \big| N\big(\tfrac \xi n \big)\big|^2 \; d\xi \\
& \lesssim \;  \frac{1}{n^2} \int_{[-\frac n2,\frac n2]} |\xi|^4 \big|\mc F_n(f)(\xi)\big|^2\; d\xi
\end{align*}
and by Corollary \ref{cor:fourier} the proof ends.
\section{Technical integral estimates}
\label{app:integral}

Recall the definitions of $I,J,K,L,M,N,O$ given in \eqref{eq:defI}, \eqref{eq:defJ}, \eqref{eq:defK}, \eqref{eq:defL}, \eqref{eq:defM}, \eqref{eq:defN}, \eqref{eq:defO}, respectively. Note that we have the relations 
\begin{equation} 
 M(y) = \frac{1}{|w-1|^2} \, J(y), \qquad
N(y)  = \frac{w}{1-w} \, L(y), \qquad O(y) = \frac{w}{w-1} \, I(y) \label{eq:relation} \end{equation} 
and also 
\begin{equation}
L(y)=I(y) - \frac{1}{1-w} J(y). \label{eq:relation2} 
\end{equation}

\subsection{Uniform bounds}
In this section we prove the following:

\begin{lem} 
\label{lem:integral_residues}
For any $y\in\big[-\tfrac12,\tfrac12\big]$
\begin{align}
|I(y)| & \lesssim  |\sin(\pi y)|^{\frac12}, \label{eq:I}\\
|J(y)|&  \lesssim |\sin(\pi y)|^{\frac32},\label{eq:J} \\
|K(y) | & \lesssim |\sin(\pi y)|^{\frac32},\label{eq:K} \\
|L(y)| &  \lesssim  |\sin(\pi y)|^{\frac32}, \label{eq:L}
\end{align}
and therefore \eqref{eq:relation} implies also that \begin{align}
|M(y)| &  \lesssim  |\sin(\pi y)|^{-\frac12}, \label{eq:M} \\
|N(y)| &  \lesssim  |\sin(\pi y)|^{\frac12}, \label{eq:N} \\
|O(y)| &  \lesssim  |\sin (\pi y)|^{-\frac12}. \label{eq:O}
\end{align}
%And we also have
%\begin{equation} \Big| 2i\Omega(k,\ell)+n^2(1+\gamma_n\kappa_n)\big[I(k+\ell)\big(e^{-2i\pi k}-e^{2i\pi k}\big)+J(k+\ell)\big(1-e^{2i\pi(k+\ell)}\big)\big] \Big|  \leq \label{eq:IJ}
%\end{equation}
\end{lem}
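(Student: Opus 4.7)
The plan is to evaluate each integral by the residue theorem after passing to the variables $w = e^{2i\pi y}$ and $z = e^{2i\pi x}$, in which $dx = dz/(2i\pi z)$ and the domain becomes the unit circle $|z|=1$. A direct computation rewrites
\begin{align*}
i\Omega(y-x, x) &= \frac{(w-1)(z^2+w)}{wz}, \\
(1+\kappa_n\gamma_n)\Lambda(y-x, x) - i\Omega(y-x, x) &= -\frac{A z^2 - 4\alpha w z + A w}{wz},
\end{align*}
where $\alpha := 1+\kappa_n\gamma_n$ and $A := (\alpha+1)w + (\alpha-1)$, so that $\Theta(y-x,x)$ becomes a rational function of $z$ with denominator $A(z-z_+)(z-z_-)$. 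The two roots $z_\pm = (2\alpha w \pm \sqrt{4\alpha^2 w^2 - A^2 w})/A$ satisfy the Vieta relations $z_+z_- = w$ and $z_++z_- = 4\alpha w/A$, and the algebraic identity
\[
4\alpha^2 w^2 - A^2 w = -(\alpha+1)^2\, w\, (w-1)\!\left(w - \tfrac{(\alpha-1)^2}{(\alpha+1)^2}\right)
\]
produces the key estimate $|z_+ - z_-| \asymp |w-1|^{1/2}$, uniformly for $y \in [-\tfrac12, \tfrac12]$ and for $\alpha$ close to~$1$. Since $|z_+z_-|=1$ and the two roots are distinct for $y \ne 0$, exactly one lies inside $|z|<1$; call it $z_\star$, the choice being uniquely pinned down by continuity in $y$ away from~$0$.

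For $I$, collecting the residues at $z=0$ (of value $(1-w)/A$) and at $z_\star$ (which, using $z_\star^2+w = z_\star(z_++z_-)$, reduces to $-(w-1)(z_++z_-)/[A(z_\star - z_{\star'})]$, where $z_{\star'}$ is the other root) and applying the above estimate on $|z_+-z_-|$ immediately gives $|I(y)| \lesssim |w-1|^{1/2} \asymp |\sin(\pi y)|^{1/2}$, which is~\eqref{eq:I}. For $K$ the factor $(w-1)$ already present in $i\Omega$ furnishes an extra power of $|w-1|$, and the same residue analysis yields~\eqref{eq:K}; for $J$ the definition contains the explicit factor $1-w$, and again the same strategy gives~\eqref{eq:J}. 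All three bounds extend uniformly to the whole of $[-\tfrac12, \tfrac12]$ because, away from any fixed neighbourhood of $y=0$, both roots $z_\pm$ are bounded away from the unit circle and every quantity in the residue formulas is manifestly bounded.

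The delicate estimate is~\eqref{eq:L}. Extracting the prefactor $(w-1)$, the residues at the double pole $z=0$ and the simple pole $z_\star$ combine into
\[
L(y) = -\frac{(w-1)(A-4\alpha)}{A^2} - \frac{4\alpha w(w-1)(z_\star - 1)}{A^2\, z_\star (z_\star - z_{\star'})}.
\]
At $w=1$ both summands vanish ($A=2\alpha$, $z_\star = 1$); the cancellation at the next order is seen by inserting the expansions $A = 2\alpha + (\alpha+1)(w-1)$ and $z_\pm = 1 \pm i(w-1)^{1/2}/\sqrt{\alpha} + O(w-1)$ (from the explicit formula for $z_\pm$ together with the factorisation of the discriminant above), which show that the two summands equal $\pm(w-1)/(2\alpha)$ at leading order with opposite signs, leaving a remainder of order $|w-1|^{3/2}$. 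This proves~\eqref{eq:L}, and the bounds~\eqref{eq:M}--\eqref{eq:O} follow at once from the algebraic identities~\eqref{eq:relation} together with $|1-e^{-2i\pi y}| \asymp |\sin(\pi y)|$. The main obstacle is precisely this cancellation step for $L$: one must track the subleading corrections to $z_\pm$ carefully, uniformly in $y$, to confirm that no further cancellation is forced; every other bound reduces to a routine residue calculation coupled with the discriminant estimate.
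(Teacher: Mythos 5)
Your overall strategy---change variables to $w=e^{2i\pi y}$, $z=e^{2i\pi x}$, rewrite the integrand as a rational function with denominator $A(z-z_+)(z-z_-)$, and apply the residue theorem at $z=0$ and at the root inside the unit circle---is exactly the paper's approach, and your formulas agree with theirs (your $A$ equals the paper's $w\,a_n(w)$, your factorisation of the discriminant matches their $\delta_n(w)/a_n^2(w)$, and the bound $|z_+-z_-|\asymp|w-1|^{1/2}$ is their estimate on $\sqrt{\delta_n(w)}$). The residue computations for $I$, and the claim that the same mechanism disposes of $J$ and $K$, line up with the paper, which in fact obtains the exact algebraic identities $J(y)=2(1+\kappa_n\gamma_n)\tfrac{1-w}{w a_n(w)}I(y)$ and $K(y)=2J(y)$ from the residues.

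The genuine gap is in your treatment of \eqref{eq:L}. You compute $L$ directly by residues (correctly: the double-pole residue at $z=0$ is $-\tfrac{(w-1)(A-4\alpha)}{A^2}$, and the simple-pole residue at $z_\star$ is as you write), and then argue the cancellation by inserting the local expansion $z_\pm = 1 \pm i(w-1)^{1/2}/\sqrt{\alpha} + O(w-1)$ near $w=1$. But you only verify that the $O(w-1)$ parts of the two summands cancel; the assertion that the remainder is $O(|w-1|^{3/2})$, rather than merely $o(|w-1|)$, is stated without proof and genuinely requires tracking the next-order corrections to $z_\pm$ (and checking uniformity in $\alpha=1+\kappa_n\gamma_n$), a nontrivial step that you flag yourself as ``the main obstacle'' but do not carry out. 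The paper sidesteps this entirely: it uses the algebraic identity $L(y)=I(y)-\tfrac{1}{1-w}J(y)$ of \eqref{eq:relation2}, which combined with the exact relation between $J$ and $I$ above collapses to
\[
L(y)=I(y)\Bigl[1-\tfrac{2(1+\gamma_n\kappa_n)}{w\,a_n(w)}\Bigr]
     =I(y)\,\tfrac{(2+\gamma_n\kappa_n)(w-1)}{w\,a_n(w)},
\]
so the extra factor of $|w-1|$ is exact, no asymptotics needed. Had you exploited \eqref{eq:relation2} (which is already at your disposal, since the paper lists it alongside \eqref{eq:relation} that you do use for $M,N,O$), the hard step would have evaporated. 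As written, your proof of \eqref{eq:L} is incomplete.
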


\begin{proof} The proof consists in using the residue theorem to estimate each integral. For any $y \in \big[-\frac{1}{2}, \frac{1}{2}\big]$ we denote by $w:=w(y)$ the complex number $w=e^{2i\pi y}$. We denote by $\cC$ the unit circle positively oriented, and $z:=e^{2i\pi x}$ is the complex integration variable in $\cC$. With these notations we have
\begin{equation}\label{eq:lambomeg}
\Lambda ( y - x, x ) = 4 - z ( \bar w + 1 ) - \bar z ( w + 1 ) , \quad 
i \Omega( y - x, x ) = z ( 1 - \bar w ) + \bar z ( w - 1 ).
\end{equation}
Some quantities are going to appear many times, therefore for the sake of clarity we introduce some notations.   Hereafter, for any complex number $z$, we denote by $\sqrt z$ its principal square root, with positive real part. Precisely, if $z=re^{i\varphi}$, with $r \ge 0$ and $\varphi \in (-\pi,\pi]$, then the principal square root of $z$ is $\sqrt z=\sqrt r e^{i\varphi/2}.$
We introduce the degree two complex polynomial:
\begin{equation} P_w(z):=z^2-\frac{4(1+\gamma_n\kappa_n)z}{(1+\kappa_n\gamma_n)(1+\bar w)+1-\bar w}+w=(z-z_-)(z-z_+), \label{eq:pz}\end{equation}
where $|z_-|<1$ and $|z_+|>1$. The useful identities are \[
z_-z_+ =w, \qquad
z_-+z_+=\frac{4(1+\kappa_n\gamma_n)}{(1+\kappa_n\gamma_n)(\bar w+1)+1-\bar w}.\]
Finally, we denote 
\begin{align*}
a_n(w):&= (1+\kappa_n\gamma_n)(1+\bar w)+1-\bar w=2+\kappa_n\gamma_n(1+\bar w)\\
\delta_n(w)  :&= (1-w)(2+\kappa_n\gamma_n)^2+(1-\bar w)(\kappa_n\gamma_n)^2 \\
&= 4(1+\kappa_n\gamma_n)(1-w)+(\kappa_n\gamma_n)^2(2-w-\bar w),
\end{align*}
so that the discriminant of $P_w$ is $4 \delta_n(w) / a_n^2 (w)$ and
\[
z_+ = \frac{2(1+\kappa_n\gamma_n)+\sqrt{\delta_n(w)}}{a_n(w)},\qquad  z_-=\frac{2(1+\kappa_n\gamma_n)-\sqrt{\delta_n(w)}}{a_n(w)}.
\]
Note the following identity:
\begin{equation}
\label{eq:star}
 w a_n^2(w) + \delta_n(w) = 4(1+\kappa_n\gamma_n)^2.
\end{equation}
Moreover
\begin{equation}
\label{eq:delta}
\delta_n(w)= (2+\kappa_n\gamma_n)^2 (1-w) \bigg[1-\frac{(\kappa_n\gamma_n)^2 \bar w}{(2+\kappa_n\gamma_n)^2}\bigg].
\end{equation}
With these notations, note also that \eqref{eq:lambomeg} gives two very useful identities:
\begin{align*}
&i\Omega(y-x,x)  = (z+w\bar z)(1-\bar w)\\
&\big[(1+\kappa_n\gamma_n)\, \Lambda-i\Omega \big] (y-x,x)  = -\frac{a_n(w)}{z}P_w(z).
\end{align*}
%\marginnotec{I changed a $-1$ into $1$, ok?}
Let us now give some estimates as $n\to \infty$. For $n$ large enough, we have
\[ \bigg|\frac{(\kappa_n\gamma_n)^2 \bar w}{(2+\kappa_n\gamma_n)^2}\bigg|\leq \frac{1}{2}.\]
Therefore
\begin{equation}\label{eq:esti-delta}
\bigg|\frac{1}{\sqrt{\delta_n(w)}}\bigg|  = \bigg|\frac{(2+\kappa_n\gamma_n)^{-1}}{\sqrt{1-w}}\bigg|\; \bigg|1- \frac{(\kappa_n\gamma_n)^2 \bar w}{(2+\kappa_n\gamma_n)^2}\bigg|^{-\frac12} \lesssim  \big|\sin(\pi y)\big|^{-\frac12}
\end{equation}
since 
\[
\sqrt{1-w}=\big|\sin(\pi y)\big|^{\frac12}\; e^{\frac{i\pi y}{2}} \big(1-i\;\mathrm{sgn}(y) \big),
\]
and $|1+z|^{-\frac12}  \lesssim 1$ for all $|z| \leq \frac12$. 
Let us resume in the following lemma the estimates that shall be needed:

\begin{lem}\label{lem:estidelta}
For any $w=e^{2i\pi y}$ with $y \in [-\frac12,\frac12]$, and any $n$ large enough,
\[
\bigg|\frac{1}{\sqrt{\delta_n(w)}}\bigg|  \lesssim  \big|\sin(\pi y)\big|^{-\frac12}.
\]
We also have
\[1   \lesssim  |a_n(w)|  \lesssim  1. \]
\end{lem}
\bigskip
\begin{enumerate}[(i)]
\item \textsc{Proof of \eqref{eq:I}.} We have
\[
I(y)= \frac{1}{2 i \pi} \frac{1-w}{w} \frac{1}{a_n(w)}\oint_{\cC} dz\; f_{w} (z),
\]
with the function $f_{w}$ defined by
\begin{equation*}
f_w (z)= \frac{(z^2 +w)}{z (z-z_+)(z-z_-)}
\end{equation*}
where $z_\pm$ are the two complex solutions of $P_w(z)=0$ defined in \eqref{eq:pz}.
Since $|z_-| <1$ and $|z_+|>1$, by the residue theorem we have
\[
I(y) =\frac{1-w}{w\; a_n(w)} \Big[ {\rm{Res}} (f_w,0) +{\rm {Res}} (f_w,z_-)\Big].
\]
It is easy to see that
\begin{equation*}
{\rm{Res}} (f_w,0) = 1, 
\end{equation*}
and 
\[
{\rm {Res}} (f_w,z_-) =\frac{z_-^2+w}{z_-(z_--z_+)}=-\frac{2(1+\gamma_n\kappa_n)}{\sqrt{\delta_n(w)}}.
\]
It follows by Lemma \ref{lem:estidelta} that
\begin{align*}
|I(y)| & =  \bigg|\frac{w-1}{w \; a_n(w)} \bigg[ 1 -\frac{2(1+\gamma_n\kappa_n)}{\sqrt{\delta_n(w)}}\bigg]\bigg|  = \frac{2|\sin(\pi y)|}{|a_n(w)|} \bigg|1 -\frac{2(1+\gamma_n\kappa_n)}{\sqrt{\delta_n(w)}}\bigg| \\
&  \lesssim  \big|\sin(\pi y) \big| \Big[ 1 + |\delta_n(w)|^{-\frac12}\Big]  \lesssim  \big|\sin(\pi y)\big|^{\frac12} .
\end{align*}
\bigskip

\item \textsc{Proof of \eqref{eq:J}.}
In the same way, we have
\begin{equation*}
\begin{split}
J(y)&= \frac{(1-w)^2}{w\; a_n(w)} \, \cfrac{1}{2i \pi} \oint g_w (z) dz= \frac{(1-w)^2}{w\; a_n(w)} \Big[ {\rm{Res}} (g_w,0) +{\rm {Res}} (g_w,z_-)\Big],
\end{split}
\end{equation*}
with the function $g_{w}$ defined by
\begin{equation*}
g_w (z)= \frac{(z^2 +w)}{z^2 (z-z_+)(z-z_-)}.
\end{equation*}
Here we have
\begin{equation*}
{\rm{Res}} (g_w,0) =\cfrac{z_+ + z_-}{w}=  \frac{4(1+\gamma_n\kappa_n)}{w\; a_n(w)}, 
\end{equation*}
and 
\[
{\rm {Res}} (g_w,z_-) =\frac{z_-^2+w}{z_-^2(z_--z_+)}=-\frac{2(1+\gamma_n\kappa_n)}{z_-\sqrt{\delta_n(w)}}.
\]
By \eqref{eq:star} it follows that
\begin{align*}
J(y) & = \frac{2(1-w)^2(1+\gamma_n\kappa_n)}{w^2 \; a_n^2(w)} \bigg[1-\frac{2(1+\gamma_n\kappa_n)}{\sqrt{\delta_n(w)}}\bigg] \\ &  = 2(1+\gamma_n\kappa_n)\, \frac{1-w}{w\; a_n(w)}\,  I(y).
\end{align*}
Therefore we conclude that, for any $y\in[-\frac12,\frac12]$, 
$ |J(y)| \lesssim |\sin(\pi y)|^{\frac32}.$
\bigskip

\item \textsc{Proof of \eqref{eq:K}.} Again, similar computations give
\begin{equation*}
\begin{split}
K(y)&=   \frac{(1-w)^2}{w^2} \frac{1}{a_n(w)}\, \cfrac{1}{2i \pi} \oint k_w (z) dz\\
&   =\frac{(1-w)^2}{w^2} \frac{1}{a_n(w)} \Big[{\rm{Res}} (k_w,0) +{\rm {Res}} (k_w,z_-)\Big] ,
\end{split}
\end{equation*}
with the function $k_{w}$ defined by
\begin{equation*}
k_w (z)= \frac{(z^2 +w)^2}{z^2 (z-z_+)(z-z_-)}.
\end{equation*}
Here we have
\begin{equation*}
{\rm{Res}} (k_w,0) = z_-+z_+=\frac{4(1+\gamma_n\kappa_n)}{a_n(w)}, 
\end{equation*}
and 
\[
{\rm {Res}} (k_w,z_-) =\frac{(z_-^2+w)^2}{z_-^2(z_--z_+)}=-\frac{8(1+\gamma_n\kappa_n)^2}{a_n(w)\sqrt{\delta_n(w)}}.
\]
It follows that
\begin{align*}
K(y) & =  \frac{(w-1)^2}{w^2\; a_n^2(w)}4(1+\kappa_n\gamma_n) \bigg[1-\frac{2(1+\gamma_n\kappa_n)}{\sqrt{\delta_n(w)}}\bigg]\\
&=4(1+\gamma_n\kappa_n) \frac{1-w}{w\; a_n(w)} I(y) =2 J(y).
\end{align*} 
Therefore we conclude that, for any $y\in[-\frac12,\frac12]$, 
$ |K(y)| \lesssim |\sin(\pi y)|^{\frac32}.$
\bigskip

\item \textsc{Proof of \eqref{eq:L}.} 
Here we use the relation \eqref{eq:relation2}, and we obtain: 
\[L(y) = I(y) - \frac{1}{1-w}J(y)=I(y)\bigg[1-\frac{2(1+\gamma_n\kappa_n)}{wa_n(w)}\bigg].\]
It is easy to check that \[\bigg|1-\frac{2(1+\gamma_n\kappa_n)}{wa_n(w)}\bigg|=\bigg|\frac{(2+\gamma_n\kappa_n)(w-1)}{wa_n(w)}\bigg| \lesssim |\sin(\pi y)|.\]
Using  \eqref{eq:I} we conclude that, for any $y\in[-\frac12,\frac12]$, 
$ |L(y)| \lesssim  |\sin(\pi y)|^{\frac32}.$
%\bigskip
%
%\item \textsc{Proof of \eqref{eq:M}, \eqref{eq:N} and \eqref{eq:O}.} Recall the relationships between $M,N,O$ and $I,J,L$ given in \eqref{eq:relation}. Therefore, the last estimates straightforwardly follow from \eqref{eq:J}, \eqref{eq:L} and \eqref{eq:I}, respectively.
\end{enumerate}
This concludes the proof of Lemma \ref{lem:integral_residues}.
\end{proof}

\subsection{Proof of Lemma \ref{lem:est}} \label{app:gn}
Recall that $G_n$ has been defined in \eqref{eq:defgn}, and it equals
\begin{align*} 
G_n(y)&=\frac{(1+\gamma_n\kappa_n)^2}{4}K(y)=(1+\gamma_n\kappa_n)^3 \frac{1-w}{w\; a_n(w)} I(y) \\
& =\frac{(1+\gamma_n\kappa_n)^3}{a_n^2(w)}\frac{(1-w)^2}{w^2}\bigg[1-\frac{2(1+\gamma_n\kappa_n)}{\sqrt{\delta_n(w)}}\bigg]\\
&=\cfrac{1+\kappa_n \gamma_n}{4} \; w (\bar w -1)^2\;  \bigg(\frac{w a_n^2 (w) }{4(1+\kappa_n \gamma_n)^2}\bigg)^{-1} \;\Bigg[ 1-\bigg( 1- \cfrac{w a_n^2 (w) }{4(1+\kappa_n \gamma_n)^2}\bigg)^{-\frac12}\Bigg]\\
&=\cfrac{1+\kappa_n \gamma_n}{4} \; w (\bar w -1)^2 g \bigg(\frac{w a_n^2 (w) }{4(1+\kappa_n \gamma_n)^2}\bigg) \\
& \qquad - \cfrac{1+\kappa_n \gamma_n}{4} \; w (\bar w -1)^2 \bigg( 1- \cfrac{w a_n^2 (w) }{4(1+\kappa_n \gamma_n)^2}\bigg)^{-\frac12}
\end{align*}
where $g: u \in \C -\{1\} \to u^{-1} (1- (1-u)^{-\frac12})+ (1-u)^{-\frac12}$. We have that $g(u)=\tfrac{1}{1+ \sqrt{1-u}}$. Since $\sqrt{1-u}$ has a positive real part, we deduce that the function $g$ is uniformly bounded. Therefore, 
\begin{equation*}
\left| G_n (y) +  \cfrac{1+\kappa_n \gamma_n}{4} \, w (\bar w -1)^2 \bigg( 1- \cfrac{w a_n^2 (w) }{4(1+\kappa_n \gamma_n)^2}\bigg)^{-\frac12} \right| \lesssim | w-1|^2.
\end{equation*}
Let us now observe that, since  $a_n(w)= (1+\kappa_n\gamma_n)(1+\bar w)+1-\bar w$, we have that
\begin{equation}
\label{eq:tournevis1}
1-\cfrac{w a_n^2 (w)}{4 (1 +\gamma_n \kappa_n)^2} = \bigg( 1+\cfrac{1}{(1+ \kappa_n \gamma_n)^2 } \bigg)\sin^2 (\pi y) - \cfrac{ i}{1+\kappa_n \gamma_n } \sin (2\pi y).
\end{equation}
Note also that
\begin{multline*}
{\rm{Arg}} \bigg( 1-\cfrac{w a_n^2 (w)}{4 (1 +\gamma_n \kappa_n)^2}\bigg) =-{\rm{sgn}} (y) \cfrac{\pi}{2} \\ + \arctan \left(\frac 12  \Big(1+\kappa_n \gamma_n+\frac{1}{1+\kappa_n\gamma_n}\Big)  \tan (\pi y)\right).\end{multline*}
Since $\sin (2\pi y) =2 \sin (\pi y) \cos (\pi y)$ and $\cos^2(\pi y)= 1- \sin^2(\pi y)$, we have that
\begin{equation}
\label{eq:tournevis2}
\bigg|  1-\cfrac{w a_n^2 (w)}{4 (1 +\gamma_n \kappa_n)^2}   \bigg|^2  =  \frac{4\sin^2 (\pi y)}{(1+\kappa_n\gamma_n)^2}  \left\{\left( \frac{1+\kappa_n\gamma_n}{2}-\frac{1}{2(1+ \kappa_n \gamma_n) } \right)^2 \sin^2 (\pi y)+{1} \right\}.
\end{equation}
Therefore, by observing that $(1-\bar w)^2 \;w = - 4 \sin^2 (\pi y)$, we have that
\begin{equation*}
\begin{split}
G_n (y) &= (1+ \kappa_n \gamma_n)^{\frac32} \,\cfrac{|\sin (\pi y)|^{\frac32}}{\sqrt 2\Big\{\big( \frac{1+\kappa_n\gamma_n}{2}-\frac{1}{2(1+ \kappa_n \gamma_n)} \big)^2 \sin^2 (\pi y)+{1} \Big\}^{\frac14}} \; e^{i \varphi_n (y)} + \mc O(|y|^2)\\
&=\cfrac{|\sin(\pi y)|^{\frac32}}{\sqrt 2} \, e^{i \varphi_n (y)} \; + \; \varepsilon_n (y) +\mc O(|y|^2)
\end{split}
\end{equation*}
where
\begin{equation*}
\varphi_n (y) = {\rm{sgn}} (y) \cfrac{\pi}{4} -\cfrac{1}{2} \, \arctan  \left(\frac 12  \Big(1+\kappa_n \gamma_n+\frac{1}{1+\kappa_n\gamma_n}\Big)  \tan (\pi y)\right)
\end{equation*}
and
$|\varepsilon_n (y)| \le \gamma_n |y|^{\frac32}.
$
Moreover we have that
\begin{align*}
\Big| e^{i \varphi_n (y)} -e^{i {\rm{sgn}} (y) \tfrac{\pi}{4}}\Big| &= \bigg| 1- e^{-\tfrac{i}{2} \arctan  \big(\frac 12  \big(1+\kappa_n \gamma_n+\frac{1}{1+\kappa_n\gamma_n}\big)  \tan (\pi y)\big) }\bigg|\\
&\lesssim \bigg| \arctan  \bigg(\frac 12  \Big(1+\kappa_n \gamma_n+\frac{1}{1+\kappa_n\gamma_n}\Big)  \tan (\pi y)\bigg) \bigg|\lesssim |y|.
\end{align*}
We conclude that
\begin{align*}
G_n (y) &= \cfrac{|\sin(\pi y)|^{\frac32}}{\sqrt 2} \, e^{i {\rm{sgn}} (y) \tfrac{\pi}{4} } \; + \; \mc O( \gamma_n |y|^{\frac32}) + \mc O(|y|^2)\\
&=\cfrac{\pi^{\frac32} |y|^{\frac32}}{\sqrt 2} \, e^{i {\rm{sgn}} (y) \tfrac{\pi}{4} } \; + \;  \mc O( \gamma_n |y|^{\frac32}) + \mc O(|y|^2).
\end{align*}

\section{Proof of Propositions \ref{prop:vol_decomposition} and \ref{prop:equadiff}}
\label{app:equadiff}
In this section we prove  
Propositions \ref{prop:vol_decomposition} and  \ref{prop:equadiff}. To simplify the notations we write $\mc L_n$, $\mc A_{n}$, $e_n$ for ${\mc L}_{\gamma_n}$, ${\mc A}_{\gamma_n}$ and $e_{\gamma_n}$. We start by showing Proposition \ref{prop:vol_decomposition}. 
Let $f:\Z\to\R$   be a function of finite support and let us define
\begin{align*}
V(f)&= \sum_{x\in\Z} f (x)  \omega_x, \quad V^3 (f) = \sum_{x\in\Z} f(x) (\omega_x^3-\kappa_n\omega_x)=\sum_{x\in \Z} f(x)  H_{3\delta_x}(\omega).
\end{align*}
Observe that 
$$\S V(f)=\sum_{x\in\Z}\Delta f(x) \; \omega_x = V(\Delta f),$$
where
$\Delta f(x)=f(x+1)+f(x-1)-2f(x)$  and
\begin{align*}
\A_nV(f)& =\sum_{x\in\Z} \Big[ \Delta f (x)   -2 \nabla f(x) \Big] \;  ( \omega_x+\gamma_n\omega_{x}^3)\\
&= V\Big( (1+\gamma_n \kappa_n) (\Delta f -2 \nabla f) \Big) +\gamma_n V^3 \Big( (\Delta f -2 \nabla f \Big),
\end{align*}
where  $\nabla f(x)=f(x+1)-f(x)$.
Therefore,
\begin{align*}
\cL_n  V(f)\; = \; V\Big( (2+\gamma_n \kappa_n) \Delta f -2 (1+\gamma_n \kappa_n) \nabla f \Big) +\gamma_n V^3 \Big( \Delta f -2 \nabla f \Big).
\end{align*}
From last computations it is simple to obtain Proposition \ref{prop:vol_decomposition}.

Now we prove Proposition \ref{prop:equadiff} and we start with  \eqref{eq:S}. 
 Let $f:\Z\to\R$ and $h:\Z^2\to\R$ be functions of finite support and define: 
\begin{align*}
\mathbf{E}_n(f)&=\sum_{x\in\Z} f(x)e_n(\omega_x)=\sum_{x\in\Z}f(x)\Big[\frac{\omega_x^2}{2}+\gamma_n\frac{\omega_x^4}{4}\Big], \\
{\mathbf{E}^4}(f)& = \sum_{x\in\Z}f(x)\omega_x^4,\\
 \mathbf{Q}^2(h)&=\sum_{x\neq y} h(x,y)\omega_x\omega_y=\sum_{x\neq y} h(x,y) H_{\delta_x+\delta_y}(\omega),\\
  \mathbf{Q}^4(h)&=\sum_{x\neq y} h(x,y)(\omega^3_x-\kappa_n\omega_x)\omega_y = \sum_{x\neq y} h(x,y) H_{3\delta_x+\delta_y}(\omega),\\
   \mathbf{Q}^6(h)&=\sum_{x\neq y} h(x,y)(\omega^3_x-\kappa_n\omega_x)(\omega^3_y-\kappa_n\omega_y)=\sum_{x\neq y} h(x,y) H_{3\delta_x+3\delta_y}(\omega).
\end{align*}
We define the symmetric (resp. antisymmetric) part $h^s$ (resp. $h^a$) of $h$ by $h^{s} (x,y)= \tfrac{1}{2} [ h(x,y) +h(y,x)]$ (resp. $h^a=h-h^s$). Observe that $\mathbf{Q}^2(h)=\mathbf{Q}^2(h^s)$ and $ \mathbf{Q}^6=\mathbf{Q}^6(h^s)$ depend only on the symmetric part $h^s$ of $h$ but that it is not the case for $ \mathbf{Q}^4(h)$.  Then, observe that 
\begin{equation*}
\S\mathbf{E}_n(f)=\sum_{x\in\Z}\Delta f(x) e_n(\omega_x),
\end{equation*}
and \begin{equation*}
\A_n\mathbf{E}_n(f)=-\sum_{x\in\Z}\nabla f(x)\big[\omega_x\omega_{x+1}+\gamma_n(\omega_{x}\omega_{x+1}^3+\omega_{x+1}\omega_x^3)+\gamma_n^2\omega_{x+1}^3\omega_x^3\big].
\end{equation*}
Therefore,
\begin{align*}
\cL_n\mathbf{E}_n(f)=&\mathbf{E}_n(\Delta f) -(1+\gamma_n\kappa_n)^2 \mathbf{Q}^2(\nabla f\otimes\delta)\\&-2\gamma_n(1+\gamma_n\kappa_n)\mathbf{Q}^4(\nabla f \otimes\delta)-\gamma_n^2 \mathbf{Q}^6(\nabla f\otimes\delta),
\end{align*}
where 
\begin{align*}
 \nabla f\otimes \delta(x,y) &=   \begin{cases} \displaystyle\big(f(x+1)-f(x)\big)/2 & \text{ if } y=x+1,\\
  \displaystyle\big(f(x)-f(x-1)\big)/2 & \text{ if } y= x-1,\\
 0 & \text{ otherwise}.\end{cases}
\end{align*}
From this it is easy to obtain \eqref{eq:S}. Now we prove \eqref{eq:Q}.
For any symmetric function $h$, we have that
\begin{align*}
\cL_n\mathbf{Q}^2(h)= \mathbf{Q}^2(\Delta h+\mathbf{A}h)&+2\sum_{x\in\Z}(h(x-1,x)-h(x+1,x))(\omega_x^2 + \gamma_n \omega_x^4) \\
& + 4\sum_{x\in\Z}(h(x,x+1)-h(x,x))\omega_x\omega_{x+1}\\
& + 2\gamma_n \sum_{x\neq y}\big[h(x-1,y)-h(x+1,y)\big]\omega_x^3\omega_y \\
&  + 2 \gamma_n \sum_{x\in\Z} \big[h(x+1,x+1)\omega_x^3\omega_{x+1}-h(x,x)\omega_{x+1}^3\omega_x\big].
\end{align*}
%
%\marginnotec{Do you agree with $\kappa_n \gamma_n$ ? There was $-3$ before. }
We now replace $\omega_x^3\omega_y$ by $(\omega_x^3-\kappa_n \omega_x)\omega_y$ in order to see the product of two orthogonal polynomials, and write the decomposition in the basis $\{H_\sigma\; ; \sigma \in \Sigma\}$ as
\begin{align*}
\cL_n\mathbf{Q}^2(h) = & \; \mathbf{Q}^2\big(\Delta h+(1+\gamma_n \kappa_n)\mathbf{A}h\big)  - 4 \mathbf{E}_n(\mathbf{D}h) - \gamma_n {\mathbf{E}}^4(\mathbf{D}h)+2\mathbf{Q}^2(\widetilde{\mathbf{D}}h)\notag\\
& + 2\gamma_n \sum_{x\neq y}\big[h(x-1,y)-h(x+1,y)\big](\omega_x^3-\kappa_n\omega_x)\omega_y\\
& + 2 \gamma_n \sum_{x\in\Z}h(x,x)\big[(\omega_{x}^3-\kappa_n \omega_x)\omega_{x+1}-(\omega_{x+1}^3-\kappa_n \omega_{x+1})\omega_{x}\big]\\
& + 2 \gamma_n \sum_{x\in\Z}\big[h(x+1,x+1)-h(x,x)\big] (\omega_{x}^3-\kappa_n\omega_x)\omega_{x+1}\\
& + 2 \gamma_n \kappa_n \sum_{x\in\Z}\big[h(x+1,x+1)-h(x,x)\big] \omega_x\omega_{x+1}.
\end{align*}
 We rewrite the previous  identity as
\begin{align*}
\cL_n\mathbf{Q}^2(h) = &\mathbf{Q}^2\big(\Delta h+(1+\gamma_n \kappa_n)\mathbf{A}h\big)  - 4 \mathbf{E}_n(\mathbf{D}h) - \gamma_n {\mathbf{E}}^4(\mathbf{D}h)+2\mathbf{Q}^2(\widetilde{\mathbf{D}}h)\notag\\
& + 2\gamma_n \sum_{x\neq y}\big[h(x-1,y)-h(x+1,y)\big]H_{3\delta_x+\delta_y}(\omega)\\
& + 2 \gamma_n \sum_{x\in\Z}h(x,x)\big[H_{3\delta_x+\delta_{x+1}}-H_{3\delta_{x+1}+\delta_x}\big](\omega)\\
& + 2 \gamma_n \sum_{x\in\Z}\big[h(x+1,x+1)-h(x,x)\big] \; \big[H_{3\delta_x+\delta_{x+1}}+\kappa_nH_{\delta_x+\delta_{x+1}}\big](\omega)\\
=&\mathbf{Q}^2\big(\Delta h+(1+\gamma_n \kappa_n)\mathbf{A}h\big)  - 4 \mathbf{E}_n(\mathbf{D}h) +2\mathbf{Q}^2(\widetilde{\mathbf{D}}h)\notag\\
& + 2\gamma_n \mathbf{Q}^4(\mathbf{B}h) - \gamma_n {\mathbf{E}}^4(\mathbf{D}h)+2 \gamma_n \kappa_n \mathbf{Q}^2(\nabla h),
\end{align*}
where 
\begin{align*}
\mathbf{D}h(x)&=h(x,x+1)-h(x-1,x).\\
 \nabla h(x,y)&=\begin{cases} (h(x+1,x+1)-h(x,x))/2 & \text{ if } y=x+1,\\
 (h(x,x)-h(x-1,x-1))/2 & \text{ if } y= x-1,\\
 0 & \text{ otherwise}.\end{cases}\\
\Delta h(x,y) &  =  h(x+1,y)+h(x-1,y) + h(x,y+1)+h(x,y-1)-4h(x,y),\\
 \mathbf{A}h(x,y)& =h(x-1,y) + h(x,y-1) -h(x+1,y) - h(x,y+1),\\
 \widetilde{\mathbf D}h(x,y)& = \begin{cases} h(x,x+1)-h(x,x) & \text{ if } y=x+1,\\
 h(x-1,x)-h(x-1,x-1) & \text{ if } y= x-1,\\
 0 & \text{ otherwise}.\end{cases}\\
 \mathbf{B}h(x,y) & =  \big( h(x-1,y) -h(x+1,y)\big)+ \big({\bf 1}_{y=x+1} -{\bf 1}_{y=x-1}\big) h(y,y).
\end{align*}
Remark that for any $f:\ZZ \to \RR$, $h:\ZZ^2 \to \RR$ the functions $\nabla h$, ${\widetilde{\mathbf D}} h$, $\nabla f \otimes \delta$ are always symmetric and that the operators $\Delta$ and ${\mathbf A}$ preserve the parity of functions.
From last computations it is easy to recover  \eqref{eq:Q}.

%
%Or: 
%\begin{align*}
%\cL_n\mathbf{Q}^2(h) = \mathbf{Q}^2\big(\Delta h+(1-3\gamma_n)\mathbf{A}h\big) & - 4 \mathbf{S}_n(\mathbf{D}h) - \gamma_n \widetilde{\mathbf{S}}(\mathbf{D}h)+2\mathbf{Q}^2(\widetilde{\mathbf{D}}h)\\
%& + 2\gamma_n \sum_{\substack{x\in\Z,\\ y\notin\{x,x+1\}}}\big[h(x,y-1)-h(x+1,y)\big](\omega_x^3-3\omega_x)\omega_y\\
%& + 2\gamma_n \sum_{\substack{x\in\Z,\\ y\notin\{x,x+1\}}}h(x,y-1)\big[(\omega_y^3-3\omega_y)\omega_x-(\omega_x^3-3\omega_x)\omega_y\big]\\
%& + 6\gamma_n \sum_{x\in\Z} \big[h(x+1,x+1)-h(x,x)\big]\omega_{x+1}\omega_x\\
%\end{align*}
%
%\ced{It can be useful to decompose the last term 
%$$\sum_{x\in\Z} \big[h(x+1,x+1)\omega_x^3\omega_{x+1}-h(x,x)\omega_{x+1}^3\omega_x\big]$$
%as
%\begin{equation*}
%\begin{split}
%&\sum_{x\in\Z} \big[h(x+1,x+1)(\omega_x^3- \kappa_n\omega_x) \omega_{x+1}-h(x,x)(\omega_{x+1}^3 -\kappa_n\omega_{x+1}) \omega_x\big]\\
%&+\kappa_n \sum_{x \in \Z} [ h(x+1,x+1) -h(x,x)] \omega_x \omega_{x+1}.
%\end{split}
%\end{equation*}
%}
%

\end{document}